\newtheorem{claim}{Claim}
\newtheorem{fact}{Fact}
\newtheoremstyle{exampstyle}
  {} 
  {} 
  {} 
  {} 
  {\bfseries\color{blue}} 
  {.} 
  {.5em} 
  {} 
\theoremstyle{exampstyle}\newtheorem{thm}{Theorem}
\theoremstyle{exampstyle}\newtheorem*{thm*}{Theorem}
\theoremstyle{exampstyle}\newtheorem{defn}{Definition}     
\theoremstyle{exampstyle}     
\theoremstyle{exampstyle}\newtheorem{lem}{Lemma}  
\theoremstyle{exampstyle}        
\theoremstyle{exampstyle}\newtheorem{prop}{Proposition}
\theoremstyle{exampstyle}
\theoremstyle{exampstyle}  
\theoremstyle{exampstyle}\newtheorem{Assumption}{Assumption}
\theoremstyle{exampstyle}
\theoremstyle{definition}
\theoremstyle{exampstyle}
\newtheorem{remarktmp}{Remark}
\newenvironment{remark}
	{ \begin{remarktmp} 	}
	{ 
		\medskip\hfill
		\end{remarktmp} 
	}
\newtheorem{exampletmp}{Example}
\newcommand{\xmark}{\ding{55}}%
\newtheoremstyle{bfnoteonly}%
{}{}%
{\itshape}{}%
{\bfseries \color{red}}{.}%
{ }%
{\thmnote{#3}}
\theoremstyle{bfnoteonly}
\newtheorem*{axxm}{}
\renewcommand{\epsilon}{\varepsilon}
\newcommand{\ChoiProb}{\rho}
\DeclareSymbolFontAlphabet{\amsmathbb}{AMSb}
\DeclareMathOperator*{\argmax}{argmax} 
\title{Frame-Dependent Random Utility\thanks{We thank Jose Apesteguia, Miguel Ballester, Peter Caradonna, Chris Chambers, Kyle Hyndman, Ryo Kambayashi, Kohei Kawaguchi, Jay Lu, Paola Manzini, Marco Mariotti, Francesca Molinari, Hiroki Nishimura, Kota Saito, Ke Shaowei, Chris Turansick, Ece Yegane for their comments.}}
\author{Paul H.Y. Cheung\thanks{Naveen Jindal School of Management, University of Texas at Dallas. Email: paul.cheung@utdallas.edu}  \ \   \  \ Yusufcan Masatlioglu\thanks{Department of Economics, University of Maryland. Email: yusufcan@umd.edu}}
\date{Jan 31, 2025}
\begin{document}
\maketitle

\begin{abstract}

We explore the influence of framing on decision-making, where some products are framed (e.g., displayed, recommended, endorsed, or labeled). We introduce a novel choice function that captures observed variations in framed alternatives. Building on this, we conduct a comprehensive revealed preference analysis, employing the concept of frame-dependent utility using both deterministic and probabilistic data. We demonstrate that simple and intuitive behavioral principles characterize our frame-dependent random utility model (FRUM), which offers testable conditions even with limited data. Finally, we introduce a parametric model to increase the tractability of FRUM. We also discuss how to recover the choice types in our framework.

\noindent \textbf{Keywords}: Frames, Revealed Preference, Random Utility, Multinominal Logit\\ 
\textbf{JEL Codes:} D11, D90

\end{abstract}

\section{Introduction}

Framing is a powerful marketing tool that shapes consumer choices by altering how existing or additional information about a product is conveyed. Some framing techniques reframe the same information in different terms—for instance, describing a product as ``25\% fat'' versus ``75\% lean'' directs consumer attention to distinct aspects, subtly influencing evaluation without changing the product itself. Other frames, such as endorsements, recommendation displays, or ``best-seller'' banners, offer subjective cues that consumers may interpret as relevant to their preferences, thereby enhancing perceived value. In this paper, we focus on this latter form of framing. However, our findings are broadly applicable to other types of framing as well. 

Marketers and practitioners employ framing techniques to shape consumer perceptions and behaviors, ultimately aiming to enhance sales, foster brand loyalty, and capture greater market share. These include but are not limited to traditional window displays (\citealp{sen2002}), virtual presentations (\citealp{González2021625}), recommendation signage (bestsellers or editor's picks) (\citealp{Goodman2013HelpDifficulty}), platform endorsements (Amazon's Choice) (\citealp{bairathi2024value}), and quantified claims (\citealp{Yu2023}). The evidence for these simple modifications altering choice behavior is conclusive and indisputable across a wide spectrum of economic activities (e.g., for the labor market, \citealp{Horton2017TheExperiment}; for hospitality and tourism, \citealp{Litvin2008ElectronicManagement}; for music streaming services, \citealp{Li2007AFeatures,Adomavicius2018EffectsPay}; for e-commerce of commodities or goods, \citealp{Senecal2004TheChoices,Haubl2000ConsumerAids,Rowley2000ProductBots,Vijayasarathy2000PrintIntentions,Goodman2013HelpDifficulty}).

Framing is designed to influence consumer decisions, aiming to nudge individuals towards particular choices or behaviors. However, these effects are not uniform across all consumers; different individuals respond to marketing stimuli in unique ways due to their preferences, attention, or other factors. Moreover, these strategies can lead rational consumers to adjust their behavior, making it difficult to discern whether observed subsequent behaviors are entirely rational in the presence of framing.

In real life, consumer data is typically available in aggregate form rather than as individual data. This poses additional challenges for analyzing the impact of marketing strategies on decision-making at the individual level. While aggregate data can reveal overall trends—such as increased purchases after a promotion—it lacks the granularity to show how each consumer was specifically influenced. For instance, we might see a spike in sales but cannot determine whether certain groups were more influenced by these strategies. Without individual-level data, identifying who was most affected by marketing tactics or how they adjusted their decision-making process in response becomes highly challenging. Therefore, while aggregate data can indicate the overall impact of a marketing campaign, it is not clear whether it can provide insight into the extent or direction of influence on individual consumers.

An important question arises: Is it possible to infer individual behavior from aggregate data when marketing strategies are in play? In theory, aggregate data represents a collection of choices made by individuals, each with their preferences and susceptibilities to influence. Distinguishing between influenced and uninfluenced consumers remains difficult. In this paper, we aim to reveal whether (i) framing affects consumers' evaluations, and (ii) this influence can align with the utility maximization paradigm.

To tackle the question, we first extend traditional rational choice theory in deterministic domain by introducing the concept of frame-dependent utility, which posits that the framing of an option can enhance its perceived utility. To elucidate this concept, consider two scenarios: (i) $x$ is chosen when $y$ is presented as the framed option, and (ii) $z$ is chosen when nothing is framed. These choices challenge the frame-dependent utility hypothesis since, in the first scenario, the non-framed $x$ appears to outperform all unframed alternatives, contradicting the selection made in the second scenario. Therefore, in this paper, we aim to uncover the conditions under which choices influenced by frames can be seen as the result of utility maximization. Our investigation spans both deterministic and stochastic domains. We introduce behavioral postulates that rely solely on variations in the framed set, providing a pragmatic approach for testing the rationality hypothesis.

In the next section, we introduce our framework, presenting a model within a deterministic setting that we name the Frame-Dependent Utility Model (FUM). We demonstrate that FUM can be succinctly characterized by a behavioral postulate called \textit{Independence of Irrelevant Framed Alternatives} (IIFA). IIFA states that removing the frame from an alternative does not affect the choice if the alternative was not initially chosen. In the above example, $y$ was framed in the first scenario but not in the second. Hence, IIFA implies that the choice must remain the same. This principle parallels (yet different from) the axiom of Independence of Irrelevant Alternatives (IIA), a fundamental principle in classical choice theory that posits that choices should remain unaffected by non-selected options.\footnote{We should note that the standard IIA does not capture this since the chosen option here is the non-framed version of itself; for a detailed explanation, see Section \ref{sect: Deterministic}.} IIFA is a necessary and sufficient condition for utility maximization in the presence of framing.

Once we establish the result in the deterministic domain, we are ready to tackle whether we can infer individual behavior from aggregate data. In Section~\ref{sec:probabilistic}, we study the implications of frame-dependent utility on aggregate behavior as observed through probabilistic choice data. Our primary question concerns identifying the impact of individual preference maximization on overall market behavior. Specifically, how do we determine if aggregate behavior aligns with the hypothesis that it arises from a group of preference-maximizing consumers? To illustrate, consider a scenario where product $a$ is selected 70\% of the time when nothing is framed, 45\% when either product $b$ or $c$ is framed, and only 10\% when both $b$ and $c$ are framed. Note that the market share of $a$ decreases as the number of framed options increases. It seems like a case where utility maximizers are more drawn to the framed options than the unframed option $a$. Hence, it is tempting to conclude that this example can be rationalized by utility maximization with a heterogeneous population. However, we will demonstrate that such a conjecture is incorrect, i.e., these data cannot be represented as the aggregate behavior of preference-maximizing consumers. This example illustrates that identifying the implications of aggregate behavior among utility-maximizing individuals is not a straightforward task.

We first define the Frame-dependent Random Utility Model (FRUM). Similar to the classic Random Utility Model (RUM), FRUM assumes that each agent makes deterministic choices based on an underlying (frame-dependent) utility. The foundational RUM’s empirical validity has been explored by \citet{Falmagne1978}, \citet{Barbera1986FalmagneOrderings}, and \citet{McFadden1990StochasticPreference}, who demonstrate that the non-negativity of the Block-Marshak polynomials (BM) serves as a criterion for the choice data’s consistency with RUM. In our environment, the classical BM can be defined using framed-set variation for framed alternatives. However, it is silent on how non-framed alternatives should behave (see the above example). Therefore, we introduce new conditions for non-framed alternatives and show that they characterize FRUM. Our conditions delineate the behavioral content of heterogeneous frame-dependent utility maximizers.

Lastly, we investigate frame-dependent utility in a probabilistic environment using a parametric model, which we call the Frame-dependent Utility Luce Model (F-Luce). The model has a finite set of alternative-specific parameters. We show that two simple behavioral postulates can fully characterize this model. The first postulate, called Strong Luce-IIA, mimics the famous Luce-IIA (\citealp{luce1959individual}) in our framework. The axiom requires that as long as the status of two alternatives does not change across two frames, the choice ratio must remain constant. Indeed, Luce-IIA, when adapted to this framework, is implied by Strong Luce-IIA, since the choice ratio between framed items remains constant. The second postulate is F-Regularity, which states that a non-framed alternative must be chosen more often as the framed set becomes smaller. Our characterization results do not require knowledge of the full data. We show that if the choice data include all framed sets of size two or fewer, the characterization holds. In addition, we can uniquely pin down the utility weights up to a scaling factor.

The analysis of the framing effect from a choice-theoretic perspective can be traced back to the work of \citet{Salant2008}, who proposed a comprehensive framework for studying the framing effect. Building on this foundation, more recently, \citet{Galperti-CerigioniJEEA2023} examined how the sequence in which attribute information about alternatives is presented can influence decision-making, and \citet{BHATTACHARYA2021102553} investigated the impact of frame on attention. Furthermore, \citet{Goldin-ReckJPE2020} narrowed their focus to two alternatives, employing statistical methods and assumptions to facilitate identification. On the other hand, scholars have also been drawn to explore different classes of random utility models (e.g. \citealp{Gul_Pesendorfer2006,jose_Miguel_Lu_SCRUM2017,APESTEGUIA2023105674,MANZINI2018162,Turansick_2024,Frick_Iijima_Strzalecki2019}). To the best of our knowledge, our paper is the first choice-theoretic work to address the issue of frame-dependent random utility. Meanwhile, several different strands of research have augmented or departed from choice-set variation in the standard model. For example, some studies have employed list variation to examine choices (\citealp{Ishii2021ALists, Guney2014ALists}) and approval rates (\citealp{Manzini2021SequentialBehaviour}). \citet{Dardanoni_JPE2023} proposed using mixture choice data and \citet{Azrieli_Rehbeck2022} studied marginal frequencies of choice and availability but not standard choice probabilities in each menu. \citet{Cheung-Masatlioglu2024} and \citet{Ke2021LearningBox} investigated how recommendations can affect people's choices and beliefs, and \citet{Natenzon2019RandomLearning} and \citet{Guney2018Aspiration-basedChoice} explored the impact of unselectable ``phantom'' options on decision-making. These lines of research, in a spirit similar to our model, enrich the standard choice environment and deepen our understanding of human behavior.

\section{Framework and Micro Foundation}\label{sect: Deterministic}

Let $X$ be a finite set of alternatives (e.g., all documentaries available on Netflix or all 65-inch Smart TVs sold on Amazon). We let $F \subseteq X$ denote the set of framed alternatives. When confusion is unlikely, we also call $F$ the frame. In this framework, a decision problem describes not only which alternatives are available, denoted by $S$, but also which alternatives are framed, denoted by $F$. Therefore, $(F,S)$ constitutes a decision problem where $F,S \subseteq X$ and $S \neq \emptyset$. One can define a general choice function $c$ such that $c(F,S) \in S$, where $S$ represents the set of feasible alternatives and $F$ represents the framed alternatives. There are two types of variation in the general model: Frame variation and feasible set variation. In other words, this framework is more ``data-hungry'' than the standard choice model since the analyst must observe choices from all decision problems.

In this paper, we address the challenge of extensive data requirements by assuming that the set of available alternatives remains fixed, $X$. Consequently, our model operates without variation in the availability of alternatives, reflecting environments like Amazon, where product availability exhibits minimal or no fluctuation. Under this assumption, any subset of 
$X$, denoted as $F$, can represent a decision problem in which all alternatives within $F$ are framed and the rest are not framed. Importantly, the empty set also constitutes a valid decision problem, representing a scenario where no alternatives are framed.

While frames can influence choices, they do not constrain them. To capture this, we define a choice rule $c$ as a function of the frame, $F$, but allow for $c(F)$ to be outside of $F$. Hence, the only restriction we impose is $c(F) \in X$. Notice that our choice function differs from the one used in the choice theory literature, where $c(F)$ is always in $F$. This distinction will be important when we introduce behavioral properties. We denote the classical choice function by $d$, where $d(S) \in S$ for all $S \subset X$, and $S$ represents the set of available options. Additionally, we allow $c$ to be observable only for some frames but not others. This assumption aims to capture real-world environments in which the collection of the frame is just a fraction of the entire product space. Let $\mathcal{D} \subseteq 2^X$ denote all possible frames for which we have data. For instance, $\mathcal{D}$ can include all frames of size $1$. The following definition captures the choice rule under this framework. Depending on the results we provide in this paper, we highlight the necessary requirements in the domain.

\begin{defn}
A deterministic choice rule $c$ on $\mathcal{D}$ is a mapping from $\mathcal{D}$ to $X$ such that $c(F) \in X$ for all $F \in \mathcal{D}$.
\end{defn}

Consider that if a decision-maker (DM) consistently acts in alignment with a genuine underlying preference that is frame-independent, then framing should not influence decisions. This implies that the choice function $c(F)$ remains constant, irrespective of framing ($c(F)=c(F')$ for all $F,F'\in \mathcal{D}$). However, contrary to this assumption, existing evidence robustly demonstrates that framing significantly impacts decision-making. In this context, we introduce a model of frame-dependent utility maximization. This model adheres to the foundational principles of rational choice, where each option is assigned a subjective value, and the DM opts for the option with the highest value.

Framing influences decisions solely by enhancing the perceived value of the framed option. Consequently, framing positively alters the option's relative position within the DM's preferences. This concept aligns with traditional theories that negate the presence of ``context effects,'' asserting that an option's valuation is solely dependent on its framed status, without being influenced by other framed options.

To formalize this, suppose the DM considers two functions, $u$ and $v$, mapping from a set $X$ to the real numbers $\mathbb{R}$. For a framed product $x$ (where $x\in F$), its utility is perceived as the sum $u(x) + v(x)$; if not framed, the utility remains $u(x)$. The DM's goal is to select the option that maximizes this utility. The model assumes $v(x)$ is non-negative to represent positive framing effects. The choice function under this framework, denoted as $c_{(u,v)}$, is written as: \begin{align*}
    c_{(u,v)}(F)=\argmax_{x\in X} \ U_F(x) \ \ \   \text { where } \  \  U_F(y) =\begin{cases} u(y)+v(y) & \text{ if  } y \in F \\
    u(y) & \text{ if  } y \notin F
    \end{cases}  \tag{FUM}
\end{align*} where  $v(x) \geq 0$.\footnote{Since our domain is finite and choices are unique, we assume $U_F$ is an injective function.} We say a choice rule $c$ has a Frame-dependent Utility Model (FUM) representation on $\mathcal{D}$ if there exists $(u,v)$ with $v \geq 0$ such that  $c=c_{(u,v)}$. Note that framing does not affect utilities in an arbitrary way. First of all, as long as the status of the option is the same under $F$ and $F'$, its utility remains the same, i.e.,  if $x \notin F \cup F'$ or $x \in F \cap F'$ then $U_F(x)=U_{F'}(x)$. Moreover, the relative ranking of two alternatives does not change as long as they are not framed, i.e., if $x,y  \notin F \cup F'$ then $U_F(x) > U_{F}(y)$ if and only if $U_{F'}(x)>U_{F'}(y)$. The same applies if both of them are framed.  $\mathcal{C}_{\text{FUM}}$ denotes the collection of the choice functions that have a FUM representation.

Finally, we illustrate the distinctions between classical choice functions (\(d\)) and frame-dependent choice functions (\(c\)) within a straightforward environment \(X = \{a, b\}\) (see \Cref{Table:Deterministic}). This comparison is crucial for understanding the subsequent analysis and establishing a connection with their classical counterparts. A classical choice function is defined only on the non-empty subsets of \( X \). Since the chosen element must belong to the decision problem, the only non-trivial choice problem is \( \{a, b\} \). In this case, there are exactly two such functions, as shown in the first two columns of \Cref{Table:Deterministic}. On the other hand, our choice function is defined on any subsets of \( X \). Moreover, all choice problems are non-trivial; hence, there are $16$ different choice functions in our framework. Even in this simple environment, frame-dependent choice functions and classical choice functions differ significantly. The next section introduces conditions to restrict the possible choice functions in order to identify  $\mathcal{C}_{\text{FUM}}$.

\subsection{Behavioral Characterization}

In this section, we explore the behavioral implications of the model in a deterministic environment. The first behavioral postulate is the key axiom in this environment, which states that removing some of the unchosen alternatives from the frame does not influence the final choice. Note that it shares a similar flavor to the famous IIA axiom (\textit{i.e.}, Independence of Irrelevant Alternatives) in the classical choice theory literature.\footnote{This property is also known as Sen's $\alpha$ axiom (\citealp{Sen71}), Postulate 4 of \citet{Chernoff54}, C3 of \citet{Arrow1959}, the Heritage property of \citet{aizerman1995theory}, or the Heredity property of \citet{aleskerov2007utility}.} We call this the Independence of Irrelevant Framed Alternatives, abbreviated as IIFA.

\begin{axxm}[IIFA]\label{Axiom:IIFA}
If $c(F) \notin F \setminus F'$ and $F'\subseteq F$, then $c(F)=c(F')$.
\end{axxm}

Since this axiom looks similar to the standard IIA, one might think they are equivalent. While the standard IIA applies to classical choice functions, in which the winner always belongs to the choice set, our postulate applies to new choice objects, in which the winner can be a framed or non-framed item. This seemingly small distinction has important implications. In fact, if we include ``$c(F)$ belonging to $F'$'' in the premise of IIFA, then IIFA becomes the standard IIA, which is stated below as IIFA(1). However, IIFA(1) is strictly weaker than IIFA (see \Cref{Table:Deterministic}).

\begin{axxm}[IIFA(1)]\label{Axiom:IIFA(1)} 
If $c(F) \in F'$ and $F'\subseteq F$, then $c(F)=c(F')$.
\end{axxm}

In other words, seemingly similar axioms have distinct implications in our framework. Hence, the existing results in the literature on the standard framework might not be valid in our framework. Since IIFA(1) is weaker than IIFA, we would like to state the counterpart of IIFA(1). To this end, we include ``$c(F)$ is not in $F$'' in the premise and call it IIFA(2). It states that the winner outside of the frame will persist when fewer alternatives are framed.

\begin{axxm}[IIFA(2)]\label{Axiom:IIFA(2)}
If $c(F) \notin F$ and $F'\subseteq F$, then $c(F)=c(F')$.
\end{axxm}

It is easy to see that the combination of IIFA(1) and IIFA(2) is equivalent to IIFA. Notice that the model satisfies IIFA. To see this, first consider IIFA(1), so $c(F) \in F'$. The model states that $c(F)$ is evaluated according to the framed version, and it must be better than all other framed and non-framed alternatives. Therefore, the evaluated ranking of $c(F)$ remains the best alternative even if some alternatives are ``downgraded.'' Next, consider the second postulate, IIFA(2), so $c(F) \notin F$. The model states that the non-framed item $c(F)$ is already better than the framed items in $F$ and other non-framed items. Hence, the non-evaluated version of $c(F)$ is still superior to other alternatives when some other alternatives are not evaluated anymore.

\Cref{Table:Deterministic} helps to understand the implications of IIFA(1) and IIFA(2) for both classical choice functions ($d$) and frame-dependent choice functions ($c$). Since all classical choice functions (\( d_1 \) and \( d_2 \)) satisfy IIFA(1), Hence, the classical utility maximization has no empirical content for two alternatives.\footnote{IIFA(2) is vacuously true for the classical choice functions.} In contrast, IIFA(1) and IIFA(2) have empirical content even in this simple environment. Notably, \( c_1 \) satisfies both IIFA(1) and IIFA(2). These axioms are independent, as demonstrated by \( c_4 \), which satisfies only IIFA(1), and \( c_5 \), which satisfies only IIFA(2). Finally, \( c_6 \) does not satisfy either of these axioms (IIFA(1) is violated since \(c_6(\{a,b\})\neq c_6(\{a\})\), and IIFA(2) is violated since \(c_6(\{a\})\neq c_6(\emptyset)\)).\footnote{\Cref{Table:Deterministic} only includes frame-dependent choice functions with $c(\{a,b\})=a$. Hence there are eight more choice functions with $c(\{a,b\})=b$ (16 in total). 6 out of 16 frame-dependent choice functions satisfy both IIFA(1) and IIFA(2).}

\begin{table}[h!]
\begin{center}
\begin{tabular}{c|cc|cccccccc}
\toprule
           & $d_1$ & $d_2$ & $c_1$ & $c_2$ & $c_3$ & $c_4$ & $c_5$ & $c_6$ & $c_7$ & $c_8$ \\
\midrule
$\{a,b\}$  & $a$   & $b$   & $a$   & $a$   & $a$   & $a$   & $a$   & $a$   & $a$   & $a$   \\
$\{a\}$    & $a$   & $a$   & $a$   & $a$   & $a$   & $a$   & $b$   & $b$   & $b$   & $b$   \\
$\{b\}$    & $b$   & $b$   & $a$   & $b$   & $b$   & $a$   & $b$   & $a$   & $a$   & $b$   \\
$\emptyset$& -     & -     & $a$   & $a$   & $b$   & $b$   & $b$   & $a$   & $b$   & $a$   \\
\midrule
IIFA(1) & \checkmark     & \checkmark    & \checkmark   & \checkmark    & \checkmark   & \checkmark    &  \xmark  & \xmark   & \xmark  & \xmark   \\
IIFA(2) & -     & -     & \checkmark   & \checkmark    & \checkmark   & \xmark    &  \checkmark  & \xmark   & \xmark  & \xmark   \\
\bottomrule
\end{tabular}
\end{center}
\caption{Classical Choice Functions ($d$) versus Frame-Dependent Choice Functions ($c$). $d_i$'s are only defined on the non-empty subsets and $d_i(S)$ must belong to $S$, e.g., $d_1(\{a\})=d_2(\{a\})=a$, hence the only non-trivial choice is made when the decision problem is $\{a,b\}$. On the other hand, $c_i(F)$ could be outside of $F$, such as $c_2(\{b\})=a$. $c_1$-$c_3$ satisfy both IIFA(1) and IIFA(2). While $c_4$ only satisfies IIFA(1),  $c_5$ only satisfies IIFA(2). Finally, $c_6$-$c_8$ do not satisfy any of these axioms.}
\label{Table:Deterministic}
\end{table}

Our next result demonstrates that the full behavioral implications of FRUM are encapsulated by IIFA(1) and IIFA(2). Hence, \( c_1-c_3 \) are the only choice functions in \Cref{Table:Deterministic} that have FRUM representations.

\begin{thm}[Characterization] \label{Thm: Deterministic} Let $\mathcal{D}$ includes all frames with $|F|\leq 3$. Then, $c$ has an FUM representation  if and only if $c$ satisfies \nameref{Axiom:IIFA}.\end{thm}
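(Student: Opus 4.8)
The plan is to prove both directions of the equivalence, relying on the already-established fact that IIFA is equivalent to the conjunction of IIFA(1) and IIFA(2). The necessity direction (FUM $\Rightarrow$ IIFA) has essentially been argued in the text preceding the theorem: I would simply formalize that a choice function with a FUM representation $c_{(u,v)}$ satisfies both IIFA(1) and IIFA(2). For IIFA(1), if $c(F)\in F'\subseteq F$, then the chosen alternative is framed in both $F$ and $F'$, so its utility $u(c(F))+v(c(F))$ is unchanged; removing frames from other alternatives only weakly lowers their utilities, so $c(F)$ remains the maximizer. For IIFA(2), if $c(F)\notin F$, then $c(F)$ is unframed and already beats every alternative under $F$; shrinking the frame to $F'\subseteq F$ only decreases the utilities of the alternatives whose frames are removed, leaving $c(F)$'s utility fixed, so it stays the maximizer.

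The substantive work is the sufficiency direction (IIFA $\Rightarrow$ FUM). Here I would construct the pair $(u,v)$ explicitly from the observed choices. The natural idea is to read off a base preference $u$ from the unframed behavior and the framing boosts $v$ from the framed behavior. Concretely, I would first use the choices on singletons and the empty frame (which lie in $\mathcal{D}$ since $|F|\le 3$) together with IIFA(2) to pin down the ranking $u$ among unframed alternatives: intuitively, IIFA(2) guarantees that an alternative chosen when unframed persists as the winner under any smaller frame, which lets me define a consistent order. Then, for each alternative $x$, I would calibrate $v(x)$ to be just large enough that framing $x$ elevates it above exactly those alternatives it beats when framed, as revealed by comparing $c(\{x\})$-type choices and two-element framed sets. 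The key constraints to verify are $v(x)\ge 0$ and the coherence of these assignments across all frames of size at most three.

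The main obstacle I anticipate is showing that this locally-defined $(u,v)$, built from small frames, actually reproduces the choices on \emph{all} frames, including those with $|F|=2,3$, and proving that the construction is internally consistent (no cyclic or contradictory utility requirements). The restriction $|F|\le 3$ in the hypothesis is the crucial lever: I expect that IIFA applied to frames of size up to three suffices to rule out the problematic cycles, much as a three-element comparison condition (analogous to ensuring transitivity via triples) prevents inconsistencies in classical revealed-preference arguments. I would therefore structure the hard part as a lemma showing that the binary relations extracted from framed and unframed comparisons are acyclic and mutually compatible, leveraging IIFA(1) to fix the framed ranking and IIFA(2) to fix the unframed ranking, and then a separate step showing these two partial structures can be merged into a single $(u,v)$ with $v\ge 0$.

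Finally, I would close the argument by verifying that the constructed representation $c_{(u,v)}$ agrees with $c$ on every $F\in\mathcal{D}$, handling the three cases $c(F)\in F$, $c(F)\notin F$, and the interaction between framed and unframed maximizers, and confirming that injectivity of $U_F$ (guaranteed by the footnote assumption) makes the argmax well-defined throughout.
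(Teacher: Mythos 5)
Your proposal is correct and takes essentially the same approach as the paper's proof: extract revealed rankings from small frames (IIFA(1) for framed comparisons, IIFA(2) for unframed ones), establish asymmetry and transitivity of these relations—with frames of size three used precisely for transitivity, as you anticipated—and then extend to a full order yielding $(u,v)$ with $v \ge 0$; the paper merely packages your ``two relations to be merged'' as a single relation $P$ on the doubled set $X \cup X^*$ of framed copies $x^*$ alongside unframed alternatives $x$. One caveat for execution: the data cannot actually ``pin down'' the $u$-ranking below $c(\emptyset)$ (only the top element is revealed), but this is harmless because IIFA(2) implies no unframed alternative other than $c(\emptyset)$ is ever chosen, so any completion works—exactly what the paper does by taking an arbitrary completion of the transitive closure of $P$.
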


We now discuss the identification and uniqueness properties of our model. Note that our domain is finite; hence, we can only attain ordinal uniqueness. Assume that $(u_1, v_1)$ and $(u_2, v_2)$ represent the same choice rule. Notice that when there is no framing, alternatives are evaluated according to $u_i$. Therefore, the best alternative in $X$ according to both $u_1$ and $u_2$ must be the same. Additionally, if $x$ is different from $c(\emptyset)$ and is chosen when it is framed, its evaluated utility $u_i + v_i$ is revealed to be higher than the best alternative in $X$ according to $u_i$ (revealed preference). Moreover, if two different framed alternatives are chosen on two occasions, one can uniquely determine their ranking in $u_i + v_i$. Lastly, the lower contour set of the best alternative in $X$ must be the same.

\begin{prop}[Uniqueness]\label{Thm: PFdeterministic uniqueness}
Let $(u_1,v_1)$ and $(u_2,v_2)$ be two FUM representations of the same choice data $c$. Then, for $x,y \in X$ \\
i) $\argmax_{x \in X} u_1(x)=\argmax_{x \in X} u_2(x):=a$, \\
ii) $u_1(a)>(u_1 + v_1)(x) $ if and only if $ u_2(a)>(u_2 + v_2)(x)$, and\\
iii) $(u_1 + v_1)(x) > (u_1 + v_1)(y) >u_1(a)$ if and only if $(u_2 + v_2)(x) > (u_2 + v_2)(y) >u_2(a)$.
\end{prop}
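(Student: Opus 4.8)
\medskip\noindent\textbf{Proof proposal.} The plan is to translate each of the three clauses into a statement about the observed choices $c(\emptyset)$, $c(\{x\})$, and $c(\{x,y\})$ on small frames, all of which lie in $\mathcal{D}$ since it contains every frame with $|F|\le 3$. Throughout I write $w_i:=u_i+v_i$ and set $a:=c(\emptyset)$. The engine of the whole argument is a single observation: in the empty frame every alternative is evaluated by $u_i$, so $c(\emptyset)=\argmax_{z\in X}u_i(z)$ for $i=1,2$. This gives clause (i) immediately and, as a by-product, the inequality $u_i(a)>u_i(z)$ for every $z\neq a$ in both representations. This last fact is what lets me ignore all unframed alternatives other than $a$ in the computations below.

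For clause (ii) I would compute $c(\{x\})$ for a single framed alternative. In the frame $\{x\}$ only $x$ is re-valued, to $w_i(x)$, while every $z\neq x$ keeps $u_i(z)$; since the largest of the latter is $u_i(a)$, the maximizer is either $x$ or $a$, and $c(\{x\})=a$ exactly when $u_i(a)>w_i(x)$. As $c(\{x\})$ is observed data, the event ``$u_i(a)>w_i(x)$'' cannot depend on $i$, which is precisely (ii). The case $x=a$ is separate but trivial: $u_i(a)>w_i(a)=u_i(a)+v_i(a)$ is impossible because $v_i\ge 0$, so both sides of (ii) are false.

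For clause (iii) I would bring in the two-element frames. For $x,y\neq a$ with $x\neq y$, the in-frame values $w_i(x)$ and $w_i(y)$ compete only against $u_i(a)$, so one checks that $w_i(x)>w_i(y)>u_i(a)$ holds if and only if $c(\{y\})=y$ (which forces $w_i(y)>u_i(a)$) together with $c(\{x,y\})=x$ (which forces both $w_i(x)>w_i(y)$ and $w_i(x)>u_i(a)$). Both conditions are read off directly from the data, so the displayed ranking is representation-independent, giving (iii).

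The step I expect to be the main obstacle is the careful treatment of the top alternative $a$ itself inside clause (iii). Because $v_i(a)\ge 0$, the framed value $w_i(a)=u_i(a)+v_i(a)$ sits only weakly above $u_i(a)$, so the strict comparison against $u_i(a)$ at the boundary is delicate; one must verify separately the subcases in which $x$ or $y$ equals $a$ and confirm that the relevant pairwise choices $c(\{a,y\})$ genuinely pin down the position of $w_i(a)$ among the other framed utilities. Organizing these boundary cases, rather than the generic computation, is where the real content of the uniqueness statement lies.
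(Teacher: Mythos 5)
Your clauses (i) and (ii), and your treatment of (iii) for $x,y\neq a$, follow exactly the paper's proof: the paper also reads (i) off $c(\emptyset)$, gets (ii) from the singleton frame (phrased as a contradiction: if $u_1(a)>(u_1+v_1)(x)$ but $(u_2+v_2)(x)>u_2(a)$, then $c_{(u_1,v_1)}(\{x\})=a\neq x=c_{(u_2,v_2)}(\{x\})$), and gets (iii) from the doubleton frame $\{x,y\}$. Your only real difference is stating these as data-equivalences (``$c(\{x\})=a$ exactly when $u_i(a)>w_i(x)$'') rather than contradictions; that is cosmetic, though your version is in one respect tighter: the paper's ``suppose not'' for (iii) considers only the swapped ranking $(u_2+v_2)(y)>(u_2+v_2)(x)>u_2(a)$, and ruling out the other failure modes requires invoking (ii) and the injectivity of $U_F$, which your characterization via $c(\{y\})=y$ and $c(\{x,y\})=x$ handles explicitly.

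The genuine problem is the boundary case you defer to the end. You are right that it is the delicate point, but it cannot be ``verified'' as you hope: for $y=a$, clause (iii) is false as literally stated. The subcase $x=a$ is fine, since $c(\{y\})=y$ together with $c(\{a,y\})=a$ pins down $w_i(a)>w_i(y)>u_i(a)$ (writing $w_i=u_i+v_i$). But for $y=a$ the inner inequality $w_i(a)>u_i(a)$ says $v_i(a)>0$, and the data need not reveal this. Concretely, take $X=\{a,x\}$ with $u_1=(10,5)$, $v_1=(0,10)$ and $u_2=(10,5)$, $v_2=(2,10)$ (coordinates ordered $(a,x)$). Both pairs induce the same choice rule, namely $c(\emptyset)=c(\{a\})=a$ and $c(\{x\})=c(\{a,x\})=x$, yet $(u_2+v_2)(x)>(u_2+v_2)(a)>u_2(a)$ holds ($15>12>10$) while $(u_1+v_1)(x)>(u_1+v_1)(a)>u_1(a)$ fails ($w_1(a)=u_1(a)=10$). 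So the correct move is not to verify the $y=a$ subcase but to restrict (iii) to $x,y\neq a$; this is evidently the intended reading (the paper's discussion after the proposition says rankings are identified ``as long as it matters for the choice''), and the paper's own proof silently skips this case as well, since its swap argument never touches it. The framed premium of $a$ over its unframed self is revealed only when a third alternative separates them in the data --- e.g., $c(\{y\})=y$ and $c(\{a,y\})=a$ --- and absent such a witness it is unidentified.
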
 

This proposition also states that without framing, the choice is unique, and the ordinal rankings are uniquely identified as long as it matters for the choice, i.e., $c_{(u_1,v_1)}(F)=c_{(u_2,v_2)}(F)$ for all $F$.

\section{Random Utility}\label{sec:probabilistic}

In the preceding section, we presented the groundwork for our model by illustrating how framing influences deterministic choices. This section advances the discussion to encompass probabilistic choices. Probabilistic choice data can be interpreted in two primary ways. The first, an interpersonal perspective, suggests that variability in choice data stems from the heterogeneity of individual types within a group, with only their collective behavior observable to an external observer. The second, an intrapersonal perspective, reflects the variability in the choices of a single individual across different contexts. Our model accommodates both interpretations, offering insights into intrapersonal and interpersonal probabilistic choices.

Probabilistic choices have been extensively analyzed within the traditional choice-set variation framework. However, as we move on to studying frame-set variation, the usual probabilistic choice function will fall short of capturing the essence of choosing outside of the frame $F$. To address this limitation and integrate the concept of variation in framed sets, we expand the definition of a probabilistic choice rule, paralleling the approach taken in our analysis of deterministic choices.

\begin{defn}
A choice rule $\rho$ is a mapping from $X \times \mathcal{D}$ to $[0,1]$ such that $\sum\limits_{x \in X} \rho(x,F)=1$.
\end{defn}

As in deterministic choice, $\rho$ differs from the classical probabilistic choice $\pi$. Formally,  $\pi$ is also a mapping from $X \times \mathcal{D}$ to $[0,1]$ such that $\sum_{x \in X} \rho(x,F)=1$. Yet, it is also assumed $\pi(x, F) = 0$ for $x \notin F$, whereas $\rho(x, F)$ could be strictly positive.

In this section, we investigate the behavioral implications of frame-dependent random utility. Similar to the standard Random Utility Model (RUM), it is assumed that the observed data are generated by a group of diverse individuals making deterministic choices. Before detailing our model, we revisit the classical random utility model. RUM accounts for a ``rational'' heterogeneous population, with each entity or type within this population maximizing its utility. Let $\mu$ represent a probability distribution over the set of rational choice functions, denoted by $\mathcal{C}_{\text{R}}$.\footnote{Formally, $\mathcal{C}_{\text{R}}=\{d | d(S) = \argmax_{x \in S} u(x) \text{ for some (injective) utility function } u \text{ for all } S \in \mathcal{D}.\}$} This probability distribution, $\mu$, constitutes a probabilistic choice function $\pi_{\mu}$, which is defined as follows: \begin{equation}\label{(RUM)}
      \pi_{\mu}(x,S)=\mu( \{ d \in \mathcal{C}_{\text{R}} | \ x=d(S)  \})  \tag{RUM}
  \end{equation}

The probability of $x$ being chosen from $S$ is the frequency of those types who choose $x$ from $S$. We say that a probabilistic choice function $\pi$ has a Random Utility Model (RUM) representation if there exists $\mu$ such that $\pi = \pi_{\mu}$.

Therefore, to study frame-dependent random utility, we assume instead that each decision-maker in the population belongs to our deterministic model (FUM). Let $\mu$ be a probability distribution over $\mathcal{C}_{\text{FUM}}$. The probability distribution $\mu$ constitutes a probabilistic choice function $\rho_{\mu}$ such that \begin{equation}\rho_\mu(x,F)=\mu( \{ c \in \mathcal{C}_{\text{FUM}} | \ x=c(F)  \})\tag{FRUM} \end{equation} for every $F\in \mathcal{D}$.  We say that a probabilistic choice function, $\rho$, has a Frame-Dependent Random Utility  Model (FRUM) representation if there exists $\mu$ on $\mathcal{C}_{\text{FUM}}$ such that $\rho=\rho_{\mu}$. FRUM tests the hypothesis of a group of utility-maximizing individuals affected by frames. In FRUM, the only departure from RUM we introduce here is that frame can change the ranking of framed items. Note that FRUM has a much richer type space compared to RUM.\footnote{ For $|X|=n$, RUM has $n! (=|\mathcal{C}_R|)$ choice types, while FRUM has $\sum_{i=1}^n i! {n \choose i}i (=|\mathcal{C}_{\text{FUM}}|)$ choice types. For \(n =[2,3,4,5,6]\), choice types in RUM and FRUM are \([2,6,24,120,720]\) and \([6,33,196,1305,9786]\), respectively.}

\subsection{RUM versus F-RUM}

In this section, we compare our model with the classical random utility model. Figure \ref{fig:FRUM} illustrates the relationship between F-RUM and RUM on the Marschak-Machina triangle in a domain of three alternatives, $X = \{x, y, z\}$. This is the simplest environment we can utilize to study the differences and similarities between these models. 

The Marschak-Machina triangle is a tool used in decision theory to represent stochastic choices. Each corner represents a different alternative. In RUM, there are only four non-trivial choice problems describing the available alternatives. For example, $\pi(\cdot, \{x, y\})$ assigns zero choice probability to $z$ since $z$ is not available. In F-RUM, there are eight non-trivial observations ($=2^{|X|}$) describing framed alternatives. For example, $\rho(\cdot, \{x, y\})$ assigns a positive choice probability to all alternatives, including the non-framed alternative $z$.

\begin{figure}[h!]
   \begin{subfigure}[b]{0.5\textwidth}
        \centering

\begin{tikzpicture}[line cap=round,line join=round,      3d view={135}{45},                    scale=4.5]


\def\ux{3}   
\def\uy{2}   
\def\uz{1}   

\def\vx{1}   
\def\vy{1}   
\def\vz{1}   

\def\xx{1}   
\def\yy{1}   
\def\zz{1}   

\pgfmathsetmacro\pxxy{(\ux+\vx)/(\ux+\vx+\uy+\vy)} 
\pgfmathsetmacro\pyxy{(\uy+\vy)/(\ux+\vx+\uy+\vy)} %

\pgfmathsetmacro\pxxz{(\ux+\vx)/(\ux+\vx+\vz+\uz)} 
\pgfmathsetmacro\pyxz{(0)/(\ux+\vx+\vz+\uz)} %

\pgfmathsetmacro\pxyz{(0)/(\vy+\uy+\vz+\uz)} 
\pgfmathsetmacro\pyyz{(\uy+\vy)/(\vy+\uy+\vz+\uz)} %

\pgfmathsetmacro\pxX{(\ux+\vx)/(\ux+\vx+\vy+\uy+\vz+\uz)} 
\pgfmathsetmacro\pyX{(\uy+\vy)/(\ux+\vx+\vy+\uy+\vz+\uz)} %

\pgfmathsetmacro\pzxy{\zz*(1-\pxxy/\xx-\pyxy/\yy)}
\pgfmathsetmacro\pzxz{\zz*(1-\pxxz/\xx-\pyxz/\yy)}
\pgfmathsetmacro\pzyz{\zz*(1-\pxyz/\xx-\pyyz/\yy)}
\pgfmathsetmacro\pzX{\zz*(1-\pxX/\xx-\pyX/\yy)}

\coordinate (O)  at (0,0,0);
\coordinate (CE)  at (1/3,1/3,1/3);
\coordinate (A)  at (\xx,0,0);
\coordinate (B)  at (0,\yy,0);
\coordinate (C)  at (0,0,\zz);
\coordinate (Pxy)  at (\pxxy,\pyxy,\pzxy);
\coordinate (Pxz)  at (\pxxz,\pyxz,\pzxz);
\coordinate (Pyz)  at (\pxyz,\pyyz,\pzyz);
\coordinate (PX)  at (\pxX,\pyX,\pzX);
\coordinate (Qxy)  at (\pxxy,\pzxy,\pyxy);
\coordinate (Wxy)  at (\pzxy,\pyxy,\pxxy);
\coordinate (Qxz)  at (\pyxz,\pxxz,\pzxz);
\coordinate (Wxz)  at (\pxxz,\pzxz,\pyxz);
\coordinate (Qyz)  at (\pzyz,\pyyz,\pxyz);
\coordinate (Wyz)  at (\pyyz,\pxyz,\pzyz);

\path [name path=Pxz--Qxz] (Pxz) -- (Qxz);
\path [name path=Pxz--Wxz] (Pxz) -- (Wxz);
\path [name path=Pxy--Wxy] (Pxy) -- (Wxy);
\path [name path=Pxy--Qxy] (Pxy) -- (Qxy);
\path [name path=Pyz--Qyz] (Pyz) -- (Qyz);
\path [name path=Pyz--Wyz] (Pyz) -- (Wyz);
\path [name intersections={of=Pxz--Qxz and Pxy--Qxy,by=E1}];
\path [name intersections={of=Pxz--Qxz and Pxy--Wxy,by=E2}];


\draw[red!50!black,thin,fill=red!50,fill opacity=0.5] (E1) -- (E2) -- (Pxy) -- cycle;


\draw[gray, dotted, very thin] (Qyz) -- (Pyz);
\draw[gray, dotted, very thin] (Wyz) -- (Pyz);
\draw[gray, dotted, very thin] (Qxz) -- (Pxz);
\draw[gray, dotted, very thin] (Wxz) -- (Pxz);

\draw[gray, dotted, very thin] (Qxy) -- (Pxy);
\draw[gray, dotted, very thin] (Wxy) -- (Pxy);
\draw[gray!50!black,thick,fill=gray!50,fill opacity=0.1] (A) -- (B) -- (C) -- cycle;
\fill (A)  circle (.2pt) node[below left] {$x$};
\fill (B)  circle (.2pt) node[below right] {$y$};
\fill (C)  circle (.2pt) node[above] {$z$};

\fill[red] (2/4-.05,1/4+.05,1/4)  node {$\pi(xyz)$};

\fill[black] (Pxy)  circle (.3pt) node[below] {$\pi(xy)$};
\fill[black] (Pyz)  circle (.3pt) node[right] {$\pi(yz)$};
\fill[black] (Pxz)  circle (.3pt) node[left] {$\pi(xz)$};

\end{tikzpicture} 
      \caption{RUM}
    \end{subfigure}
   \begin{subfigure}[b]{0.5\textwidth}
        \centering
\begin{tikzpicture}[line cap=round,line join=round,             3d view={135}{45},        scale=4.5]

\coordinate (O)  at (0,0,0);
\coordinate (CE)  at (1/3,1/3,1/3);
\coordinate (A)  at (1,0,0);
\coordinate (B)  at (0,1,0);
\coordinate (C)  at (0,0,1);
\coordinate (Pe)  at (1/3,1/3,1/3);
\coordinate (Px)  at (.7,.2,.1);
\coordinate (Py)  at (.1,.7,.2);
\coordinate (Pz)  at (.2,.2,.6);
\coordinate (Pxy)  at (.45,.5,.05);
\coordinate (Pxz)  at (.5,.075,.425);
\coordinate (Pyz)  at (.1,.45,.45);

\coordinate (E1)  at (.2,.325,.475);
\coordinate (E2)  at (.425,.325,.25);
\coordinate (E3)  at (.2,.55,.25);

\draw[blue!50!black,thin,fill=blue!80,fill opacity=0.1] (E1) -- (E2) -- (E3) -- cycle;

\coordinate (F1)  at (.25,.45,.3);
\coordinate (F2)  at (.275,.45,.275);
\coordinate (F6)  at (.25,.325,.425);
\coordinate (F5)  at (.325,.25,.425);
\coordinate (F4)  at (.45,.25,.3);
\coordinate (F3)  at (.45,.275,.275);

\draw[red!50!black,thin,fill=red!50,fill opacity=0.5] (F1) -- (F2) -- (F3) -- (F4) -- (F5) -- (F6) -- cycle;

\draw[gray!50!black,thick,fill=gray!50,fill opacity=0.1] (A) -- (B) -- (C) -- cycle;
\fill (A)  circle (.2pt) node[below left] {$x$};
\fill (B)  circle (.2pt) node[below right] {$y$};
\fill (C)  circle (.2pt) node[above] {$z$};

\fill[white] (.5, .5,0)  circle (.0005pt) node[below] {$\rho(x)$};

\fill[black] (Px)  circle (.3pt) node[left] {$\rho(x)$};
\fill[black] (Py)  circle (.3pt) node[below] {$\rho(y)$};
\fill[black] (Pz)  circle (.3pt) node[above] {$\rho(z)$};
\fill[black] (Pxy)  circle (.3pt) node[right] {$\rho(xy)$};
\fill[black] (Pyz)  circle (.3pt) node[above] {$\rho(yz)$};
\fill[black] (Pxz)  circle (.3pt) node[below] {$\rho(xz)$};
\fill[red] (F5)  node[above]  {$\rho(xyz)$};
\fill[blue] (E3)  node[below] {$\rho(\emptyset)$};
\end{tikzpicture}
      \caption{FRUM}
    \end{subfigure}
  \caption{RUM versus F-RUM: This figure illustrates the differences and similarities between classical RUM and F-RUM for three alternatives, $X = \{x, y, z\}$. The left figure illustrates RUM. The red triangle identifies all possibilities for $\pi(\cdot, X)$ consistent with RUM, given three binary choices depicted on three edges of the simplex. The right figure illustrates F-RUM. Here, there are eight non-trivial observations describing framed alternatives. The right figure illustrates all possible choices for $F = \emptyset$ and $F = X$ given choices for binary and singleton frames. The red irregular hexagon identifies all possibilities for $\rho(\cdot, X)$ for F-RUM. Similarly, the blue triangle depicts all possibilities for $\rho(\cdot, \emptyset)$ for F-RUM.}
     \label{fig:FRUM}
\end{figure}
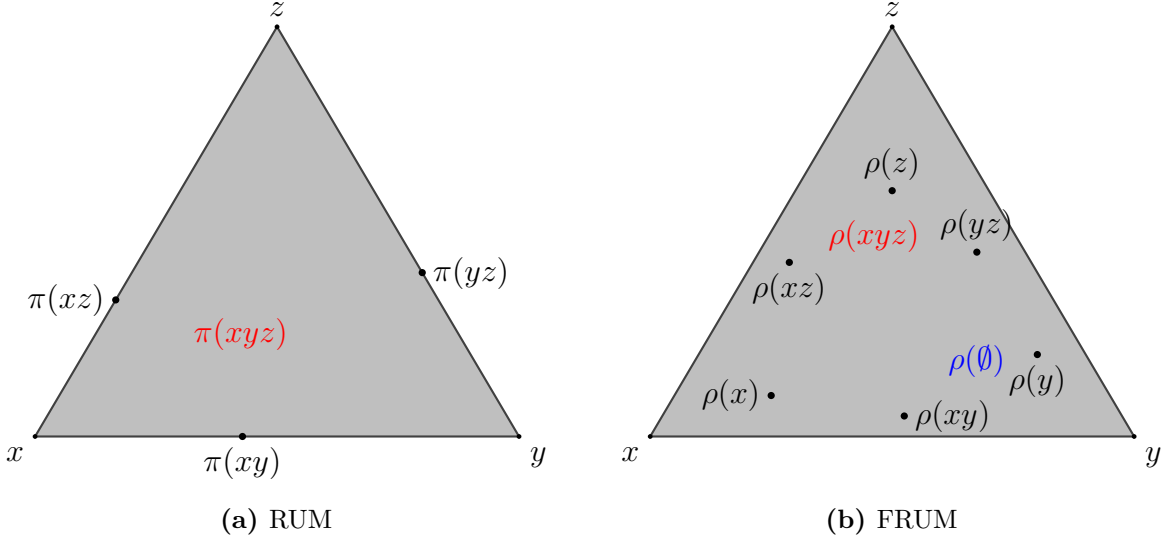

Figure \ref{fig:FRUM} depicts RUM and F-RUM. In each figure, choosing $x$ with probability one is represented by the left corner of the triangle. In RUM, stochastic choices for binary sets are fixed and denoted by black dots on the edges of the triangle. The distance between the corner $x$ and the black point labeled $\pi(xy)$ represents $\pi(y, \{x, y\})$, the choice probability of $y$ when $x$ and $y$ are the only alternatives. Similarly, the distance between the corner $y$ and $\pi(xy)$ represents $\pi(x, \{x, y\})$ in the same scenario. The points inside the Marschak-Machina triangle capture choice probabilities from the choice set $X$. The upside-down red triangle identifies all possibilities for $\pi(\cdot, X)$ for RUM given these binary choices. This figure illustrates both the predictive and explanatory power of RUM: If $\pi(\cdot, X)$ lies outside this triangle, the data cannot be explained by the random taste and utility maximization hypothesis.

The figure on the right panel shows the predictions for F-RUM given choices from binary and singleton frames. F-RUM restricts choices under both $F = \emptyset$ and $F = X$. The red irregular hexagon identifies all possibilities for $\rho(\cdot, X)$ consistent with F-RUM. Thus, this area captures the explanatory power of F-RUM given these choices under singleton and binary frames. F-RUM also predicts choices under $F = \emptyset$. The blue triangle identifies all possibilities for $\rho(\cdot, \emptyset)$ consistent with F-RUM. If either $\rho(\cdot, X)$ lies outside the red hexagon or $\rho(\cdot, \emptyset)$ lies outside the blue triangle, then the data cannot be explained by the utility maximization hypothesis with preferences influenced by frames. Note that the predictions of F-RUM are distinct from those of RUM.

\subsection{Hasse Diagram}

Within the RUM framework, \citet{Falmagne1978} addressed the question of whether individual frame-dependent utility maximization has any implications for aggregate data. For the characterization of RUM, \citet{Falmagne1978} utilized a well-known concept in the literature: The Block-Marschak polynomials, named after \citet{block1959random}'s seminal work on RUM. It has been shown that probabilistic choice data have a RUM representation if and only if their Block-Marschak polynomials are non-negative. \citet{block1959random} originally established the necessity, and \citet{Falmagne1978} demonstrated that the non-negativity of the polynomials is also sufficient.


In our framework, we also utilize Block-Marschak (BM) polynomials.  Let $q_\rho(a,F)$ denote the BM polynomials \textit{i.e.} for any $a \in F$, $$q_{\rho}(a,F):=\sum_{  B \supseteq F }(-1)^{|B \setminus F|} \rho(a,B)$$  The Block-Marschak polynomials are defined with respect to the choice data $\rho$. Throughout the paper, we mostly skip denoting $\rho$ and write $q(a,F)$ unless specified otherwise. $q_\rho$ is defined for each framed alternative.  We define 
auxiliary BM conditions for non-framed alternatives. As we shall see, they play an important role in FRUM. Formally, for $a \notin F$, $$ y_\rho(a,F) :=      \sum_{ a \notin B\supseteq F} (-1)^{|B \setminus  F|} \rho(a,B) $$ Again, we skip denoting $\rho$ and write $y(a,F)$ unless specified otherwise.

\begin{figure}[h!]
    \begin{center}
    \begin{tikzpicture}[ >=latex,glow/.style={%
    preaction={draw,line cap=round,line join=round,
    opacity=0.3,line width=4pt,#1}},glow/.default=yellow,
    transparency group]
            \tikzstyle{every node} = [rectangle]
         \node (s-1) at (1.5,-1.25) { };
        \node (s-2) at (0,-1.25) {};
        \node (s-3) at (-1.5,-1.25) {};
       \node (s) at (0,0) {$\emptyset$};
        \node (s1) at (3,2.5) {$\{a\}$};
        \node (s1-2) at (3,1.25) { };
        \node (s1-3) at (4.5,1.25) { };
        \node (s2) at (0,2.5) {$\{b\}$};
         \node (s2-1) at (-1.5,1.25) { };
        \node (s2-3) at (1.5,1.25) { };
        \node (s3) at (-3,2.5) {$\{c\}$};
        \node (s3-1) at (-4.5,1.25) { };
        \node (s3-2) at (-3,1.25) { };
        \node (s23) at (-3,5) {$\{b,c\}$};
        \node (s23-1) at (-4.5,3.75) {};
        \node (s13) at (0,5) {$\{a,c\}$};
        \node (s13-2) at (0,3.75) {};
        \node (s12) at (3,5) {$\{a,b\}$};
        \node (s12-3) at (4.5,3.75) {};
        \node (s123) at (0,7.5) {$\{a,b,c\}$};
        
     \foreach \i/\j/\txt/\p in {
      s/s-1/\text{$y(c,\emptyset)$}/above,
      s/s-2/\text{$y(b,\emptyset)$}/below,
      s/s-3/\text{$y(a,\emptyset)$}/above}
            \draw [red,dashed] (\i) -- node[pos=0.7,sloped,font=\tiny,\p] {\txt} (\j);

                \foreach \i/\j/\txt/\p in {
      s13/s13-2/\text{$y(b,\{a,c\})$}/above,
      s12/s12-3/\text{$y(c,\{a,b\})$}/above,
      s23/s23-1/\text{$y(a,\{b,c\})$}/above}
                  \draw [red,dashed] (\i) -- node[pos=0.7,sloped,font=\tiny,\p] {\txt} (\j);
            
                \foreach \i/\j/\txt/\p in {
      s3/s3-1/\text{$y(a,\{c\})$}/above,
      s3/s3-2/\text{$y(b,\{c\})$}/below,
      s2/s2-1/\text{$y(a,\{b\})$}/above,
      s2/s2-3/\text{$y(c,\{b\})$}/above,
      s1/s1-2/\text{$y(b,\{a\})$}/above,
      s1/s1-3/\text{$y(c,\{a\})$}/above}
                  \draw [red,dashed] (\i) -- node[pos=0.7,sloped,font=\tiny,\p] {\txt} (\j);
    
    \foreach \i/\j/\txt/\p in {
      s/s1/\text{$q(a,\{a\})$}/below,
      s/s2/\text{$q(b,\{b\})$}/below,
      s/s3/\text{$q(c,\{c\})$}/below}
            \draw [-] (\i) -- node[pos=0.7,sloped,font=\tiny,\p] {\txt} (\j);
         
         \foreach \i/\j/\txt/\p in {
      s1/s12/\text{$q(b,\{a,b\})$}/below,
      s3/s23/\text{$q(b,\{b,c\})$}/above}
            \draw [-] (\i) -- node[pos=0.5,sloped,font=\tiny,\p] {\txt} (\j);
            
             \foreach \i/\j/\txt/\p in {
      s1/s13/\text{$q(c,\{a,c\})$}/above,
      s2/s12/\text{$q(a,\{a,b\})$}/below,
      s2/s23/\text{$q(c,\{b,c\})$}/below,
      s3/s13/\text{$q(a,\{a,c\})$}/above}
            \draw [-] (\i) -- node[pos=0.8,sloped,font=\tiny,\p] {\txt} (\j);
         
          \foreach \i/\j/\txt/\p in {
      s12/s123/\text{$q(c,\{a,b,c\})$}/above,
      s13/s123/\text{$q(b,\{a,b,c\})$}/above,
      s23/s123/\text{$q(a,\{a,b,c\})$}/above}
            \draw [-] (\i) -- node[pos=0.5,sloped,font=\tiny,\p] {\txt} (\j);

\draw [glow=green] (s123) -- node[pos=0.5,sloped,font=\tiny] {} (s12);
\draw [glow=green] (s12) -- node[pos=0.5,sloped,font=\tiny] {} (s12-3);
\draw [glow=gray] (s23) -- node[pos=0.5,sloped,font=\tiny] {} (s3);
\draw [glow=gray] (s3) -- node[pos=0.5,sloped,font=\tiny] {} (s);
\draw [glow=gray,red,dashed] (s) -- node[pos=0.5,sloped,font=\tiny] {} (s-1);

\draw [glow=blue] (s123) -- node[pos=0.5,sloped,font=\tiny] {} (s23) ;
\draw [glow=blue] (s23) -- node[pos=0.5,sloped,font=\tiny] {} (s3);
\draw [glow=blue] (s3) -- node[pos=0.5,sloped,font=\tiny] {} (s);
\draw [glow=blue,red,dashed] (s) -- node[pos=0.5,sloped,font=\tiny] {} (s-1);
           \end{tikzpicture}    \end{center}
\caption{The new Hasse Diagram for FRUM consists of black solid lines representing the Block-Marschak polynomials (the old Hasse Diagram for RUM) and red dashed lines (leakages) representing the auxiliary  Block-Marschak polynomials.  Each colored path represents a specific set of types in FRUM. }\label{fig:hasse_RP}
\end{figure}

Figure \ref{fig:hasse_RP} generalizes the classical network representation (a directed graph) of partial order sets for our purposes. Each node represents a subset of the grand set of alternatives. Each black solid line indicates a subset relationship between subsets. The Block-Marschak polynomials can be considered as the amount of flow on each black line. In the original network of this Hasse diagram, the outdegree of each node (i.e., a subset)  are equal to the number of alternatives it contains,  and the inflow and outflow of black lines are always equal for each node for classical probabilistic choice $\pi$. Thus, we have  $$\underbrace{ \sum_{b \notin F } q_{\pi}(b, F \cup {b})}_{\text{Inflow}} = \underbrace{\sum_{a \in F} q_{\pi} (a, F)}_{\text{Outflow}} $$ In RUM, each preference ranking corresponds to a path starting from the grand set and ending at the empty set. For example, 
$\{a,b,c\} \rightarrow \{b,c\} \rightarrow \{c\} \rightarrow \emptyset$ (with path $q_\pi(a,\{a,b,c\})  \rightarrow q_\pi(b,\{b,c\})\rightarrow q_\pi(c,\{c\}) $) represents the preference type $ u(a) > u(b) > u(c) $.\footnote{One can refer to \citet{Fiorini2004} for a network analysis of RUM.} However, this attribution of types does not hold in FRUM. In fact, even the above equality no longer holds in FRUM. We discuss below how to recover a similar equality and provide a similar visual representation for types in FRUM.

Compared to RUM, FRUM has two sets of BM polynomials: One for framed alternatives, $q$, and one for non-framed alternatives, $y$. To represent the auxiliary BM polynomials $y$, we introduce new flows, which are represented by dashed red lines. These are always ``leakages'' from the network: $y_{\rho}(a,F)$ denotes the leakage from $F$ for the alternative $a \notin F$. For example, for the node $\{b,c\}$, there is an inflow from $\{a,b,c\}$ but $a \notin \{b,c\}$. Hence, the leakage corresponds exactly to the alternative $a \notin \{b,c\}$. In this way, the outdegree of each node are the number of alternatives in the grand set. Interestingly, if we also take into account all leakages, we recover the equality of inflow and total outflow (outflow plus leakages).\footnote{This result is stated as Lemma~\ref{Thm: Falmagne 3 general} in the Appendix, which is a generalization of \citet{Falmagne1978}'s Theorem 3. We believe this lemma could be of independent interest since it is model-free. We provide a proof of it in the Appendix.} That is,

$$ \underbrace{\sum_{b \notin F } q_\rho(b, F \cup {b})}_{\text{Inflow}} = \underbrace{\sum_{a \in F} q_\rho(a, F)}_{\text{Outflow}} + \underbrace{\sum_{a \notin F} y_\rho(a, F)}_{\text{Leakages}} $$

Given this equality, each type in FRUM can be represented by a path in the new Hasse diagram. Each type corresponds to a path starting from the grand set and ending with a leakage. For example, $q(a,\{a,b,c\}) \rightarrow q(b,\{b,c\}) \rightarrow q(c,\{c\}) \rightarrow y(c,\emptyset)$ (the blue path in Figure \ref{fig:hasse_RP}) represents the choice type who chooses $a$ whenever $a$ is framed, chooses $b$ whenever $b$ is framed but $a$ is not, and chooses $c$ otherwise. In addition, $q(c,\{a,b,c\}) \rightarrow y(c,\{a,b\})$ (the green path in Figure \ref{fig:hasse_RP}) represents the choice type who chooses $c$ independent of frames.

Unlike RUM, each path represents multiple preferences, but they generate the same choice function. For example, $(u_1+v_1)(a)>(u_1+v_1)(b)>(u_1+v_1)(c)> u_1(c)> u_1(b)>u_1(a)$ is represented by the blue line described above ($q(a,\{a,b,c\}) \rightarrow q(b,\{b,c\}) \rightarrow  q(c,\{c\})\rightarrow y(c,\emptyset)$). Similarly, this path also represents the type $(u_2,v_2)$, where  $(u_2+v_2)(a)>(u_2+v_2)(b)>(u_2+v_2)(c)> u_2(c)> u_2(a)>u_2(b)$.  There are several similarities between these types: i) $u_1+v_1$ and $u_2+v_2$ induce the same ordering, ii) $c$ is the worst for $u_1+v_1$ and $u_2+v_2$, and iii) the lower counter sets of $c$ with respect to $u_1$ and $u_2$ are the same. Notice that $c_{(u_1,v_1)}$ and $c_{(u_2,v_2)}$ are inherently indistinguishable even in the deterministic model. Hence, they do not matter for predicting the actual choice. Our identification result is based on choice types rather than utility types. We will return to this in \Cref{Section:FRUM identification}. In general, all utility types represented by the same path induce the same choice behavior in this model. Consider a path $q(a_1,A_0) \rightarrow  \cdots \rightarrow   q(a_{k}, A_{k-1}) \rightarrow y(b,A_k)$ where $b\in X\setminus A_k$, $A_0=X$, and $A_i=A_{i-1} \setminus \{a_i\}$ for $i=1,..., k$. This path captures the choice type $c \in \mathcal{C}_{\text{FUM}}$ where $c(A_k)=b$ and $c(A_{i-1})=a_{i}$ for all $i=1,...,k$.\footnote{While the description of the choice function is incomplete here, it points to a unique $c$ in $\mathcal{C}_{\text{FUM}}$. One can see this by applying the axiom \nameref{Axiom:IIFA}. Moreover, the utility types of each choice function have been discussed in \Cref{{Thm: PFdeterministic uniqueness}}.}

\subsection{Behavioral Characterization}

We now introduce behavioral postulates for FRUM. The first axiom is the non-negativity of BM polynomials in our framework. This non-negativity is closely related to the non-negativity of BM polynomials in the standard domain. In the standard domain, the non-negativity of the BM polynomials means that items must be chosen marginally more if there are fewer available items. RUM requires the non-negativity of the BM polynomials, which is stated below.

\begin{axxm}[Non-negativity of BM]\label{Axiom: non-negativity of BM}
 For  $a \in F$,  $q(a,F) \geq 0$.
\end{axxm}

In RUM, this assumption is both necessary and sufficient. However, it is no longer sufficient in this context. Here, not only must framed alternatives be chosen marginally more if there are fewer framed products, but non-framed alternatives must also be chosen marginally more. Therefore, we also require the non-negativity of $y's$.

\begin{axxm}[Non-negativity of Auxiliary BM]\label{Axiom: non-negativity of ABM}
 For  $a \notin F$,   $y(a,F) \geq 0$.
\end{axxm}

We show that the non-negativity of BM  and Auxiliary BM polynomials fully characterize FRUM. As in RUM, we assume that $\mathcal{D}=2^X$ for the next theorem.

\begin{thm}[Characterization]\label{Thm: F-RUM} Let $\mathcal{D}=2^X$. Then, $\rho$ has a FRUM representation if and only if $\rho$ satisfies \nameref{Axiom: non-negativity of BM} and Auxiliary BM.
\end{thm}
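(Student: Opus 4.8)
The plan is to read the family $\{q_\rho(a,F)\}_{a\in F}$ and $\{y_\rho(a,F)\}_{a\notin F}$ as a single nonnegative flow on the directed acyclic Hasse graph of \Cref{fig:hasse_RP}, with the grand set $X$ as the unique source, the leakages as sinks, and conservation of flow at every proper node. Necessity will follow from interpreting each polynomial as the $\mu$-mass of an explicit set of FUM types, and sufficiency from decomposing the flow into source-to-leakage paths, each of which names a FUM type.

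For necessity, fix a type $c=c_{(u,v)}\in\mathcal C_{\text{FUM}}$ and evaluate the defining alternating sums pointwise. For $a\in F$, write $B=F\cup T$ with $T\subseteq X\setminus F$. By \nameref{Axiom:IIFA(1)}, $c(F\cup T)=a$ forces $c(F)=a$, and adding framed alternatives can only dethrone $a$ once some $x$ with $(u+v)(x)>(u+v)(a)$ is framed; hence $\mathbbm{1}[c(F\cup T)=a]=\mathbbm{1}[c(F)=a]\,\mathbbm{1}[T\cap W=\emptyset]$, where $W=\{x\notin F:(u+v)(x)>(u+v)(a)\}$. Since $\sum_{T\subseteq X\setminus F}(-1)^{|T|}\mathbbm{1}[T\cap W=\emptyset]$ equals $1$ when $W=X\setminus F$ and $0$ otherwise, integrating against $\mu$ gives $q_\rho(a,F)=\mu(\{c:c(F)=a\text{ and }(u+v)(x)>(u+v)(a)\ \forall x\notin F\})\ge0$. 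The same computation with $W'=\{x\in X\setminus F\setminus\{a\}:(u+v)(x)>u(a)\}$ shows that $y_\rho(a,F)$ is the $\mu$-mass of the types for which $a=c(F)$ is the global $u$-maximizer and every remaining unframed alternative would overtake $a$ once framed, so $y_\rho(a,F)\ge0$.

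For sufficiency, first invoke the model-free flow identity (Lemma~\ref{Thm: Falmagne 3 general}): under \nameref{Axiom: non-negativity of BM} and \nameref{Axiom: non-negativity of ABM}, the numbers $q_\rho,y_\rho$ form a nonnegative flow in which inflow equals outflow-plus-leakage at every $F\subsetneq X$, while $X$ injects total mass $\sum_{a}q_\rho(a,X)=\sum_a\rho(a,X)=1$. Because the graph is acyclic with a single source, a standard flow decomposition writes this flow as $\sum_P w_P$ with $w_P\ge0$ and $\sum_P w_P=1$, where each $P=(X=A_0\to A_1\to\cdots\to A_k\to y(b,A_k))$ runs from the source to a leakage, each black edge $A_{i-1}\to A_i$ deletes one alternative $a_i$ and carries aggregate weight $q_\rho(a_i,A_{i-1})$, and the terminal leakage $y(b,A_k)$ collects aggregate weight $y_\rho(b,A_k)$. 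By the discussion preceding \Cref{fig:hasse_RP}, every such path determines a unique choice type $c_P\in\mathcal C_{\text{FUM}}$, so $\mu(c):=\sum_{P:c_P=c}w_P$ is a distribution on $\mathcal C_{\text{FUM}}$.

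It remains to check $\rho_\mu=\rho$, which is the crux. Reorganizing the sum, $\rho_\mu(a,F)=\sum_{P:c_P(F)=a}w_P$, so I must identify which paths select $a$ from $F$. The key combinatorial lemma, proved by a short case analysis of the FUM maximizer along $P$, is that for $a\in F$ one has $c_P(F)=a$ iff $P$ traverses the edge deleting $a$ from some node $B\supseteq F$, and for $a\notin F$ one has $c_P(F)=a$ iff $P$ terminates in the leakage $y(a,B)$ for some $B\supseteq F$ with $a\notin B$. Summing the associated edge- and leakage-flows and applying Möbius inversion on the Boolean lattice, namely $\rho(a,F)=\sum_{B\supseteq F}q_\rho(a,B)$ for $a\in F$ and $\rho(a,F)=\sum_{a\notin B\supseteq F}y_\rho(a,B)$ for $a\notin F$, then yields $\rho_\mu(a,F)=\rho(a,F)$ in both regimes. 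The main obstacle is exactly this correspondence: since one path already encodes an entire frame-contingent choice function, I must verify that the framed winner from $F$ is the alternative deleted at the last node of $P$ still containing $F$, and that the unframed winner is always the global $u$-maximizer, which $P$ leaks at its terminal node.
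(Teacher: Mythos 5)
Your proposal is correct, and its overall architecture coincides with the paper's: necessity by reading $q_\rho$ and $y_\rho$ as the $\mu$-masses of explicit sets of types, sufficiency by combining the conservation identity of Lemma~\ref{Thm: Falmagne 3 general} with a decomposition of the resulting nonnegative flow into source-to-leakage paths, identifying each path with a choice type in $\mathcal{C}_{\text{FUM}}$, and closing with M\"obius inversion. The differences are in execution. On necessity, you evaluate the alternating sums pointwise by an indicator/inclusion--exclusion computation, while the paper (Claim~\ref{Claim: two properties of W neccessity}) derives the same identities $q_\rho=M_q$ and $y_\rho=M_y$ by a ``stepping-down'' strong induction; the two are interchangeable, and yours is arguably more transparent. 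On sufficiency, you appeal to the generic flow-decomposition theorem for finite acyclic networks, which certifies that \emph{some} nonnegative path decomposition exists; the paper instead builds a \emph{specific} one recursively (the map $G$ splits the flow entering each node proportionally across its outgoing $q$- and $y$-edges), establishes its consistency in Claim~\ref{Claim:Two properties of G} via Lemma~\ref{Thm: Falmagne 3 general}, and then spreads each path's weight uniformly over the preference orders on $X\cup X^*$ sharing the same truncation. Your route is shorter, avoids that inductive bookkeeping, and assigns mass directly to choice types, which matches the statement of the theorem; its cost is non-constructiveness, since the paper's proportional construction is exactly the branch-independent representation that is later reused for identification in Section~\ref{Section:FRUM identification} and Proposition~\ref{Thm: FRUM uniqueness}, so the paper gets a canonical $\mu$ for free while your argument only certifies existence. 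Finally, the step you rightly flag as the crux --- that the type $c_P$ of a path $P$ chooses from $F$ the alternative deleted at the last node of $P$ containing $F$, and the leaked alternative when no deleted alternative lies in $F$ --- is correct, and the case analysis you sketch goes through precisely because $v\ge 0$ lets the framed copy of any alternative dominate its unframed copy; the paper sidesteps this lemma by tracking lower contour sets of preference types rather than the choices induced by paths.
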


This theorem provides a simple test for frame-dependent random utility using only variations in framed sets. To illustrate usefulness of this theorem, we revisit the example in the introduction. Let $X=\{a,b,c\}$.  Remember that product $a$ was selected 70\% of the time when nothing was framed, 45\% when either product $b$ or $c$ was framed, and only 10\% when both $b$ and $c$ were framed. Note that the market share of non-framed $a$ decreases as the number of framed options increases. It seems like a case where utility maximizers are more drawn to the framed options than the unframed option $a$. Hence, it is tempting to conclude that this example can be rationalized by utility maximization with a heterogeneous population. However, this intuition is misleading.  To see this, we first show that in this example, the non-negativity of the Auxiliary BM polynomials is violated.  We calculate  $$y(a,\emptyset)=\rho(a,\emptyset)-\rho(a,\{b\})-\rho(a,\{c\})+\rho(a,\{b,c\})=0.7-0.45-0.45+0.1<0.$$ 
Therefore, by Theorem \ref{Thm: F-RUM}, this data cannot be seen as the aggregate behavior of frame-dependent utility maximizers. Even with limited data, the premise of frame-dependent utility maximization can be falsified.

Furthermore, in the above example, we assume that $X$ consists of three alternatives and apply our theorem directly using $y$'s. One might expect that the theorem is silent for the above data if $X$ contains more alternatives. In that case, $y(a,\emptyset)$ cannot be calculated with the partial data (\emph{i.e.}, in the absence of other $\rho(a,S\setminus a)$ for some $S\subseteq X$). Nevertheless, the refutation still holds even in this case. We introduce new ``interim'' Auxiliary BM polynomials, which can be calculated with the partial data. Formally, for any $a\notin F' \supseteq F $, we let
$$Y(a,F,F'):=\sum_{B:F\subseteq B \subseteq F' } (-1)^{|B\setminus F|}\rho(a,B)$$
Hence, by definition, $y(a,F)=Y(a,F,X \setminus \{a\})$.  It is routine to show that the non-negativity of $y$'s is equivalent to the non-negativity of $Y$'s. To illustrate this, assume $X=\{a,b,c,d\}$. Then  $y(a,\emptyset)=Y(a,\emptyset,\{b,c\})-Y(a,\{d\},\{b,c,d\})=Y(a,\emptyset,\{b,c\})-y(a,\{d\}) \geq 0$. Hence, $Y(a,\emptyset,\{b,c\}) \geq y(a,\{d\}) \geq 0$. Then a negative $Y$ for some $a\notin F' \supseteq F $ falsifies the model.  Even if $X$ contains more alternatives, we can still deduce that
$$Y(a,\emptyset,\{b,c\})=\rho(a,\emptyset)-\rho(a,\{b\})-\rho(a,\{c\})+\rho(a,\{b,c\})<0$$
which again falsifies the hypothesis that the data is generated by frame-dependent utility maximization.

These are not the only ways to falsify the model. Similarly, it can be violated by the non-negativity of BM polynomials. To see this, we use the example above and let $X=\{a,b,c\}$. Assume product $a$ was selected 80\% of the time when only $a$ was framed, 65\% when either product $b$ or $c$ was framed in addition to $a$, and only 40\% when all of the products were framed. Note that the market share of framed $a$ increases as the number of framed options decreases.
Nevertheless, these data are not consistent with the frame-dependent utility maximization. To see that $$q(a,\{a\})=\rho(a,\{a\})-\rho(a,\{a,b\})-\rho(a,\{a,c\})+\rho(a,\{a,b,c\})=0.8-0.65-0.65+0.4<0.$$ Hence, the frame-dependent utility maximization paradigm requires the non-negativity of BM and Auxiliary BM.\footnote{We would like to highlight that the non-negativity of BM in our setup differs from the non-negativity of BM in RUM. To see this, assume a decision maker with $\pi(a,\{a,b\})=\pi(a,\{a,c\})=0.65$ and $\pi(a,\{a,b,c\})=0.40$. Note that these probabilities are the same as in the above example. Nevertheless, the non-negativity of BM in RUM is not violated.  
$q_{\pi}(a,\{a\})=1-\pi(a,\{a,b\})-\pi(a,\{a,c\})+\pi(a,\{a,b,c\})=1-0.65-0.65+0.4>0.$   Hence, these data could be part of the utility maximization paradigm.} 
One can analogously define a similar ``interim'' term for the (regular) BM polynomials.

Finally, note that the sufficiency proof of \Cref{Thm: F-RUM} is constructive. We provide an algorithm to compute a full distribution of types in the FRUM representation. The algorithm is helpful in applications if one would like to have an estimate of the type distribution. However, just as in the RUM in the standard choice domain, in general, there is no unique distribution over types that can explain the choice data. We explore this further in the next subsection.

\subsection{Uniqueness and Identification for FRUM} \label{Section:FRUM identification}

In this section, we elaborate on to what extent the type distribution can be identified. We next illustrate how one can identify the preference distribution with a simple parametric example involving two products $a$ and $b$. Assume the following data.

\begin{table}[h!]
\centering
\begin{tabular}{r|cccc} 
\toprule
$\ChoiProb_{(\lambda,\gamma)}(\cdot,F)$ & $\{a,b\}$ & $\{a\}$ & $\{b\}$ & $\emptyset$\\
\hline
$a$ & $\lambda$ & $0.7$ & $0.1$ & $\gamma$ \\
$b$ &  $1-\lambda$ & $0.3$ & $0.9$ & $1-\gamma$ \\
\bottomrule \end{tabular}\caption{A Parametric Example} \label{Table:Paremetric Example} \end{table}
To answer whether this choice data has a FRUM representation, we first calculate the corresponding $q_\rho$ and $y_\rho$, which are depicted in the following Hasse diagram in Figure \ref{fig:hasse}. If $\lambda$ and $\gamma$ are between $0.1$ and $0.7$, all $q_\rho$ and $y_\rho$ are non-negative. By Theorem \ref{Thm: F-RUM}, this data has a FRUM representation as long as $\lambda,\gamma \in [0.1,0.7]$. While a characterization theorem can inform us of the existence or non-existence of a particular representation, such a theorem is not necessarily useful for finding such a representation. We now illustrate how one recover the distribution of choice types.

\begin{figure}[h!]
    \begin{center}
    \begin{tikzpicture}[ >=latex,glow/.style={%
    preaction={draw,line cap=round,line join=round,
    opacity=0.3,line width=4pt,#1}},glow/.default=yellow,
    transparency group]
            \tikzstyle{every node} = [rectangle]
        \node (s2) at (0,1.5) {$\emptyset$};
         \node (s2-1) at (-2,0) { };
        \node (s2-3) at (2,0) { };
        \node (s23) at (-2.5,3.5) {$\{b\}$};
        \node (s23-1) at (-4.75,1.75) {};
        \node (s12) at (2.5,3.5) {$\{a\}$};
        \node (s12-3) at (4.75,1.75) {};
        \node (s123) at (0,5.5) {$\{a,b\}$};
        

                \foreach \i/\j/\txt/\p in {
      s12/s12-3/\text{$y(b,\{a\})=.3$}/above,
      s23/s23-1/\text{$y(a,\{b\})=.1$}/above}
                  \draw [red,dashed] (\i) -- node[pos=0.6,sloped,font=\tiny,\p] {\txt} (\j);
            
                \foreach \i/\j/\txt/\p in {
      s2/s2-1/\text{$y(a,\emptyset)=\gamma - 0.1$}/above,
      s2/s2-3/\text{$y(b,\emptyset)=0.7-\gamma$}/above}                \draw [red,dashed] (\i) --              node[pos=0.7,sloped,font=\tiny,\p] {\txt} (\j);
    
         
            
             \foreach \i/\j/\txt/\p in {
      s2/s12/\text{$q(a,\{a\})=0.7-\lambda$}/below,
      s2/s23/\text{$q(b,\{b\})=\lambda-.1$}/below}
            \draw [-] (\i) -- node[pos=0.6,sloped,font=\tiny,\p] {\txt} (\j);
         
          \foreach \i/\j/\txt/\p in {
      s12/s123/\text{$q(b,\{a,b\})=1-\lambda$}/above,
      s23/s123/\text{$q(a,\{a,b\})=\lambda$}/above}
            \draw [-] (\i) -- node[pos=0.5,sloped,font=\tiny,\p] {\txt} (\j);





           \end{tikzpicture}    \end{center}
\caption{Hasse Diagram of $\rho_{(\lambda,\gamma)}$}\label{fig:hasse}
\end{figure}

Since our data has a FRUM representation for all $\lambda,\gamma \in [0.1,0.7]$, there exists $\mu$ representing $\ChoiProb_{(\lambda,\gamma)}$. For illustration purposes, take $\lambda=0.4$ and $\gamma=0.6$. We first identify the possible paths and their corresponding weights. Note that each path corresponds to a unique choice type.


The Hasse diagram shows that we must assign at least a flow of $0.1$ to the path $q(a,\{a,b\}) \rightarrow y(a,\{b\})$ and a flow of $0.3$ to the path $q(b,\{a,b\}) \rightarrow y(b,\{a\})$. This means that we identify two types: $c_1$ and $c_2$ always choose $a$ and $b$, respectively, independently of the frame. Hence, $\mu(\{c_1\})=0.1$ and $\mu(\{c_2\})=0.3$. Note that the path $q(a,\{a,b\}) \rightarrow q(b,\{b\})$ carries a flow of $0.3$, and the path $y(b,\emptyset)$ can carry at most a flow of $0.1$. Therefore, we must assign at least a flow of $0.2$ to the path $q(a,\{a,b\}) \rightarrow q(b,\{b\}) \rightarrow y(a,\emptyset)$.  

This path corresponds to the type $c_3$, where $c_{3}(\{a,b\})=c_{3}(\{a\})=c_{3}(\emptyset)=a$ but $c_{3}(\{b\})=b$. We let $\mu(\{c_3\})=0.2$ (there is a degree of freedom here). Since the path $y(a,\emptyset)$ must carry a total of $0.5$, the path $q(b,\{a,b\}) \rightarrow q(a,\{a\}) \rightarrow y(a,\emptyset)$ must carry the residual flow of $0.3$. This path corresponds to $c_4$, where $c_{4}(\{a,b\})=c_{4}(\{b\})=b$ but $c_{4}(\{a\})=c_{4}(\emptyset)=a$. Hence, we must set $\mu(\{c_4\})=0.3$.  

Moreover, since we exhausted the path $q(a,\{a\})$, the rest must flow from the path $q(a,\{a,b\}) \rightarrow q(b,\{b\}) \rightarrow y(b,\emptyset)$. This corresponds to $c_5$, where $c_{5}(\{a,b\})=c_{5}(\{a\})=a$ but $c_{5}(\{b\})=c_{5}(\emptyset)=b$. Hence, we must set $\mu(\{c_5\})=0.1$.  Finally, since we exhausted all the flow, the last choice type, $c_6$, must be assigned a weight of zero. This choice type corresponds to the path $q(b,\{a,b\}) \rightarrow q(a,\{a\}) \rightarrow y(b,\emptyset)$.

We should note that $\mu$ is not unique. Indeed, if we set $\mu'(\{c_3\})=0.25$ or $\mu''(\{c_3\})=0.3$ (the highest possible flow), then we generate two additional representations: $\mu'$ and $\mu''$, respectively. The next table summarizes all three representations. Notice that there is essentially one degree of freedom in the type distribution, which can be captured by $\mu(\{c_3\}) \in [0.2,0.3]$.

\begin{table}[h!]
\begin{center}
\begin{tabular}{l|c|cccc|ccc}
\toprule
\multirow{2}{*}{\text{Paths}} & 
\multirow{2}{*}{\text{Types}}& 
\multicolumn{4}{c|}{$F$} & 
\multirow{2}{*}{$\mu$} & 
\multirow{2}{*}{$\mu'$} & 
\multirow{2}{*}{$\mu''$} \\
\cline{3-6}
&  & $\{a,b\}$ & \ $\{a\}$ \ & \ $\{b\}$ \ & \ $\emptyset$ \ \ \  & & & \\
\midrule
$q(a,\{a,b\}) \rightarrow y(a,\{b\})$ & $c_1$ & $a$ & $a$ & $a$ & $a$ & $0.1$ & $0.1$ & $0.1$ \\
$q(b,\{a,b\}) \rightarrow y(b,\{a\})$ & $c_2$ & $b$ & $b$ & $b$ & $b$ & $0.3$ & $0.3$ & $0.3$ \\
$q(a,\{a,b\}) \rightarrow q(b,\{b\}) \rightarrow y(a,\emptyset)$ & $c_3$ & $a$ & $a$ & $b$ & $a$ & $0.2$ & $0.25$ & $0.3$  \\
$q(b,\{a,b\}) \rightarrow q(a,\{a\}) \rightarrow y(a,\emptyset)$ & $c_4$ & $b$ & $a$ & $b$ & $a$ & $0.3$ & $0.25$ & $0.2$  \\
$q(a,\{a,b\}) \rightarrow q(b,\{b\}) \rightarrow y(b,\emptyset)$ & $c_5$ & $a$ & $a$ & $b$ & $b$ & $0.1$ & $0.05$ & $0$  \\
$q(b,\{a,b\}) \rightarrow q(a,\{a\}) \rightarrow y(b,\emptyset)$ & $c_6$ & $b$ & $a$ & $b$ & $b$ & $0$ & $0.05$ & $0.1$  \\
\bottomrule
\end{tabular}
\end{center}
\caption{Three different FRUM representations for $\ChoiProb_{(0.4,0.6)}$}
\label{Table:Branch-independent}
\end{table}

We now generalize the above discussion to state our uniqueness result in the following proposition.


\begin{prop}[Uniqueness and identification of FRUM]\label{Thm: FRUM uniqueness}
Let $\mu^1$ and $\mu^2$ be two FRUM representations of the same choice data. Then for every $F\subseteq X$, $b\notin  F$ and $i=1,2$, \\
1) $y_\rho(b,F)=\mu^i\big(\{c \in \mathcal{C}_{\text{FUM}} | \ 
       b=c(F) \  \text{ and }  \   x = c(\{x\}) \ \text{ for all } x \notin F  \} \big)$,\\
2) $  q_\rho(b,F \cup \{b\}) = \mu^i \big( \{c \in \mathcal{C}_{\text{FUM}} | \ 
       b=c(F\cup \{b\}) \  \text{ and }  \   x = c(\{x,b\}) \ \text{ for all } x \notin F \}  \big)$.
\end{prop}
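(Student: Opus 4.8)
The plan is to prove both identities for an arbitrary FRUM representation $\mu$ of $\rho$; since the right-hand sides are then pinned down by $\rho$ alone, the asserted equalities for both $i=1,2$ (hence the fact that $\mu^1$ and $\mu^2$ agree on these type-sets) follow at once. I would start from the defining relation $\rho(b,B)=\mu(\{c\in\mathcal{C}_{\text{FUM}}\mid c(B)=b\})$, valid whether or not $b\in B$, substitute it into the (auxiliary) Block--Marschak sums, and exchange the finite sum over $B$ with $\mu$. This reduces each identity to a per-type computation: it suffices to show that for every fixed type $c=(u,v)$ the signed indicator sum $\sum (-1)^{|B\setminus F|}\Indicator[c(B)=b]$, taken over the relevant family of $B$, equals $1$ when $c$ lies in the stated target set and $0$ otherwise.

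For part 1 I would fix $c=(u,v)$ and analyze $\{B:\,F\subseteq B,\ b\notin B,\ c(B)=b\}$. Since $b$ is unframed, $U_B(b)=u(b)$, and $c(B)=b$ holds iff $u(b)>u(z)+v(z)$ for every framed $z\in B$ and $u(b)>u(z)$ for every unframed $z\notin B$ with $z\ne b$. The elements of $F$ are always framed and $b$ always unframed, so these impose fixed requirements, while each remaining ``free'' element $z\in X\setminus(F\cup\{b\})$ is governed by two conditions, $\alpha_z\colon u(b)>u(z)+v(z)$ (may be framed) and $\beta_z\colon u(b)>u(z)$ (may be unframed). Because $v\ge 0$ forces $\alpha_z\Rightarrow\beta_z$, every free element is either \emph{flexible} (both hold, may be in or out), \emph{forced out} (only $\beta_z$), or \emph{fatal} (neither, in which case no admissible $B$ exists and the sum is $0$). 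When no element is fatal, the admissible $B$ are exactly $F\cup T$ with $T$ an arbitrary subset of the flexible set $P$, so the signed sum is $\sum_{T\subseteq P}(-1)^{|T|}$, which equals $\Indicator[P=\emptyset]$. Thus the per-type contribution is $1$ precisely when no free element is flexible or fatal, i.e.\ when $u(b)>u(z)+v(z)$ for $z\in F$ and $u(z)<u(b)<u(z)+v(z)$ for every free $z$.

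It then remains to identify this contribution-$1$ set with the target set $\{c:\,c(F)=b,\ c(\{x\})=x\ \forall x\notin F\}$, and this matching is the step I expect to be most delicate. The conditions $u(b)>u(z)+v(z)$ on $F$ together with $u(b)>u(z)$ on the free elements are exactly $c(F)=b$, and $u(x)+v(x)>u(b)$ for each free $x$ is the ``$x$ beats framed $b$'' part of $c(\{x\})=x$. The subtlety is that $c(\{x\})=x$ also demands $u(x)+v(x)>u(z)$ for \emph{all} other $z$, not merely for $b$; I would discharge these extra inequalities by chaining, e.g.\ $u(x)+v(x)>u(b)>u(z)+v(z)\ge u(z)$ for $z\in F$ and $u(x)+v(x)>u(b)>u(z)$ for free $z$, where $v\ge 0$ and transitivity do the work (the $x=b$ case $c(\{b\})=b$ is implied analogously).

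Part 2 is entirely parallel, with base set $G=F\cup\{b\}$ and $b$ framed throughout. Here $c(B)=b$ reads $u(b)+v(b)>u(z)+v(z)$ for framed $z$ and $u(b)+v(b)>u(z)$ for unframed $z$; the same flexible/forced-out/fatal trichotomy (again using $\alpha_z\Rightarrow\beta_z$ from $v\ge 0$) and the identity $\sum_{T\subseteq P}(-1)^{|T|}=\Indicator[P=\emptyset]$ give contribution $1$ exactly on $\{c:\,c(F\cup\{b\})=b,\ c(\{x,b\})=x\ \forall x\notin F\}$, after the analogous chaining argument shows the redundant ``$x$ beats every non-framed alternative'' requirements inside $c(\{x,b\})=x$. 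Throughout I would invoke the injectivity of $U_F$ assumed in the model to keep all comparisons strict and the trichotomy exhaustive.
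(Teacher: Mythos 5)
Your proof is correct, but it reaches the result by a genuinely different route than the paper. The paper does not re-derive the identities inside this proposition's proof at all: it cites a claim established in the necessity part of Theorem~\ref{Thm: F-RUM}, where, working with preference types $\succ$ on $X \cup X^*$, the masses $M_y(b,A) = \mu\big(\{\succ \mid (X\setminus b)\cup A^* = L_\succ(b)\}\big)$ and $M_q(b, A \cup b) = \mu\big(\{\succ \mid X \cup A^* = L_\succ(b^*)\}\big)$ are shown to equal $y(b,A)$ and $q(b,A\cup b)$ by strong downward induction on $|A|$; the engine of that induction is the partition identity $\rho(b,A) = \sum_{b \notin B \supseteq A} M_y(b,B)$, which classifies each type choosing unframed $b$ at frame $A$ by the exact set of alternatives whose framed versions $b$ beats. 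The proposition is then proved by translating these exact-lower-contour-set classes of preferences into the stated choice-type sets. You instead carry out the inversion explicitly and per type: you swap the finite alternating sum with $\mu$, evaluate $\sum_{B}(-1)^{|B\setminus F|}\Indicator[c(B)=b]$ for each fixed $c$ via your flexible/forced-out/fatal trichotomy (where $v\ge 0$ gives $\alpha_z \Rightarrow \beta_z$) and the identity $\sum_{T\subseteq P}(-1)^{|T|}=\Indicator[P=\emptyset]$, and then match the contribution-one types to the target set by chaining inequalities --- that matching step is precisely the preference-to-choice-type translation the paper performs, and your trichotomy encodes the same structure as the paper's partition identity. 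Both arguments are M\"obius inversion on the Boolean lattice, but yours is self-contained, needs no induction, and never introduces the auxiliary space $X\cup X^*$ of framed/unframed copies, whereas the paper's route makes the proposition nearly free of charge because the inductive claim was already required for the characterization theorem and dovetails with the flow interpretation on the Hasse diagram.
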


Notice that the uniqueness result of the FRUM is closely related to the uniqueness result of the standard RUM model. We first focus on $y$. The first condition says that the choice types captured by $y_\rho(b,F)$ must choose $b \notin F$ when $F$ is framed. Note that since these choice types have a FUM representation, we know that it captures all the utility types where $u(b)$ is greater than $u(x)$ for all $x \in X \setminus \{b\}$ and $(u+v)(z)$ for all $z \in F$. Moreover, the choice types captured by $y_\rho(b,F)$ also choose $x$ whenever $x$ is framed. Therefore, these types also assign a higher $u+v$ value to all alternatives in $X \setminus F$ than $u(b)$.

The interpretation for $q_\rho(b,F\cup \{b\})$ is analogous. First, choice types captured by $q_\rho(b,F\cup \{b\})$ must  choose $b$ when $F \cup \{b\}$ is framed, representing all the utility types where $(u+v)(b)$ is greater than $u(x)$ for all $x \in X$ and $(u+v)(z)$ for all $z \in F$. Also, they choose $x \notin F$ whenever $\{x,b\}$ is framed, so the underlying utility types also assign a higher $u+v$ value to all alternatives in $X \setminus (F \cup \{x\})$ than $(u+v)(b)$.

As we mentioned before, in RUM, it has been known that the distribution over choice types cannot, in general, be uniquely recovered from stochastic choice data (\citealp{Falmagne1978,Fishburn1998}). Recently emerging literature provides remedies for the non-uniqueness of RUM (\citealp{luce1959individual,Gul_Pesendorfer2006,turansick2022,Honda2021,cattaneo2024attentionoverload,YANG2023105650,SULEYMANOV2024,Filiz-Ozbay_Masatlioglu_PRC2023,caradonna2024}). 

In RUM framework, \citet{turansick2022} first provides necessary and sufficient conditions on the Block–Marschak polynomials, which guarantee that a random choice rule has a unique RUM representation. Since our model is distinct from RUM, those conditions are not useful in our setup. Having said that \citet{caradonna2024} recently generalizes the result of \citet{turansick2022} which covers our setup. Hence, one can apply their result in our framework to provide the necessary and sufficient conditions for uniqueness in FRUM.

Since \citet{SULEYMANOV2024} achieves uniqueness without sacrificing the explanatory power of RUM, we adopt his approach to attain unique recovery for FRUM. \citet{SULEYMANOV2024} considers a particular representation of RUM, namely branch-independent RUM, that enjoys a uniqueness property in identification while maintaining the same explanatory power as RUM. The core idea is to apply the construction of \citet{Falmagne1978} in recovering the preference types. Note that our proof indeed provides a specific construction of preference types that parallels that of \citet{Falmagne1978}. Therefore, one can apply a similar idea in our setting. However, there is an important distinction. Recall that in RUM, each path in the Hesse Diagram corresponds to a unique preference type, where \citet{Falmagne1978}'s construction uniquely assigns each preference a weight according to the path. Nevertheless, in FRUM, each path can correspond to multiple utility types. Therefore, we only consider the choice type each path represents. Consider a path $q(a_1,A_0) \rightarrow \cdots \rightarrow q(a_{k}, A_{k-1}) \rightarrow y(b,A_k)$ where $b \in X \setminus A_k$, $A_0 = X$, and $A_i = A_{i-1} \setminus \{a_i\}$ for $i = 1, \dots, k$. It captures the choice type $c \in \mathcal{C}_{\text{FUM}}$ where $c(A_k) = b$ and $c(A_{i-1}) = a_{i}$ for all $i = 1, \dots, k$. The corresponding weight is
\begin{align*}
\prod_{i=1}^k \frac{q(a_i, A_{i-1})}{q(A_{i-1},A_{i-1})+y(X\setminus A_{i-1},A_{i-1})} \cdot \frac{y(b, A_k)}{q(A_k,A_k)+y(X\setminus A_k,A_k)}, 
\end{align*} where $q(S,R)=\sum_{x\in S}q(x,R)$ if $S\neq \emptyset$ and $q(\emptyset,R)=0$.\footnote{In the proof, we provide its recursive construction, which is equivalent to this formula.} An analogous definition is applied to $y$. As an illustration, we revisit the example in \Cref{Table:Paremetric Example}. If we apply the formula to the path $q(a,\{a,b\}) \rightarrow q(b,\{b\}) \rightarrow y(a,\emptyset)$, we have $\frac{q(a,\{a,b\})}{q(\{a,b\},\{a,b\})}\frac{q(b,\{b\})}{q(b,\{b\}) + y(a,\{b\})} \frac{y(a,\emptyset)}{y(a,\emptyset)+y(b,\emptyset)} = 0.25$. This corresponds to the choice type $c(\{a,b\}) = c(\{a\}) = c(\emptyset) = a$ and $c(\{b\}) = b$, where the choice $c(\{a\})$ is implied by $c(\{a,b\})$ since $c \in \mathcal{C}_{\text{FUM}}$. Note that this choice type is $c_3$ as shown in \Cref{Table:Branch-independent}, and the weight equals exactly to $\mu'(\{c_3\})$. Therefore, $\mu'$ is the unique branch-independent representation of the choice data.

\section{Parametric Probabilistic Choice}\label{sec:parametric}

We now study the idea of frame-dependent utility with probabilistic data based on the parametric Luce model. All the information about the models' predictions can be summarized by a finite set of parameters depending only on the alternatives. As is typically the case with parametric models, our parametric models offer three advantages over the non-parametric versions we study in the next section. First, our parametric model is ready to use in applications because of their analytic and computational tractability. Second, the model possesses strong uniqueness properties. Third, it has sharp identification results for designing policies. 

We make the following positivity assumption throughout this section to foster a comparison between the two channels in the parametric models. Notice that this positivity property, as argued by \citet{McFadden1973ConditionalBehavior}, cannot be refuted based on any finite data set.

\begin{Assumption}
For every $x$ and $F$, $\rho(x,F)>0$.
\end{Assumption}

Under frame-dependent utility, the valuation of the framed item is calculated by $u+v$. In terms of the Luce model, we assume that for each available alternative $x$, there are two alternative-specific parameters $u(x)>0$ and $v(x)\geq0$. Like the Luce model, $u$ represents the crude measure of utility value without any framing. On the other hand, $v(x)$ captures the boost when alternative $x$ is framed. We sometimes use $(u,v)$ as the primitive of the model for convenience. For notational simplicity, we write $f(A)$ as shorthand for $\sum_{x\in A} f(x)$ for any function $f$. The DM makes choices based on the utility value. Therefore, the choice probability of a framed alternative equals its weight $(u+v)(x)$ divided by the total weight in the choice set. The choice probability of a non-framed alternative is equal to its weight $u(x)$ divided by the total weight in the choice set. We summarize this into the following definition.

\begin{defn}
A choice rule $\rho$ has a F-Luce  representation  if there exist  functions $u:X \rightarrow \mathbbm{R}_{++}$ and  $v:X \rightarrow \mathbbm{R}_{+}$ such that for $x \in X$, \begin{equation*}
\rho(x,F)= 
\begin{dcases}
    \frac{u(x)+v(x)}{u(X)+v(F)}  &\text{ if } x \in F\\
    \frac{u(x)}{u(X)+v(F)}   &\text{ otherwise}\\    
\end{dcases} \tag{F-Luce}  \end{equation*} for all $F \in \mathcal{D}$.
\end{defn}

\Cref{fig:F-Luce} illustrates both the similarity and the added richness introduced when moving from the classical Luce model to the F-Luce model. In the left panel, which corresponds to the standard Luce model, there are three alternatives and only four non-trivial choice problems: the grand set and three binary choice problems. Here, we slightly abuse notation. For example, \(\pi(xy)\) captures the probability of choosing between \(x\) and \(y\) only, assigning zero probability to \(z\) since \(z\) is not available. In this setting, the structure is relatively simple, and the model essentially reports how the ratios of choice probabilities among the available alternatives remain independent of the presence or absence of any other non-available alternative. This is reflected by $\pi(xy)$ being a projection of $\pi(X)$ onto the line $xy$.

\begin{figure}[h!]
 \centering
      \begin{subfigure}[t]{0.45\textwidth}
        \centering
   \begin{tikzpicture}[line cap=round,line join=round,                    3d view={135}{45},                    scale=4.5]


\def\ux{3}   
\def\uy{2}   
\def\uz{1}   

\def\vx{1}   
\def\vy{1}   
\def\vz{1}   

\def\xx{1}   
\def\yy{1}   
\def\zz{1}   

\pgfmathsetmacro\pxxy{(\ux+\vx)/(\ux+\vx+\uy+\vy)} 
\pgfmathsetmacro\pyxy{(\uy+\vy)/(\ux+\vx+\uy+\vy)} %

\pgfmathsetmacro\pxxz{(\ux+\vx)/(\ux+\vx+\vz+\uz)} 
\pgfmathsetmacro\pyxz{(0)/(\ux+\vx+\vz+\uz)} %

\pgfmathsetmacro\pxyz{(0)/(\vy+\uy+\vz+\uz)} 
\pgfmathsetmacro\pyyz{(\uy+\vy)/(\vy+\uy+\vz+\uz)} %

\pgfmathsetmacro\pxX{(\ux+\vx)/(\ux+\vx+\vy+\uy+\vz+\uz)} 
\pgfmathsetmacro\pyX{(\uy+\vy)/(\ux+\vx+\vy+\uy+\vz+\uz)} %

\pgfmathsetmacro\pzxy{\zz*(1-\pxxy/\xx-\pyxy/\yy)}
\pgfmathsetmacro\pzxz{\zz*(1-\pxxz/\xx-\pyxz/\yy)}
\pgfmathsetmacro\pzyz{\zz*(1-\pxyz/\xx-\pyyz/\yy)}
\pgfmathsetmacro\pzX{\zz*(1-\pxX/\xx-\pyX/\yy)}

\coordinate (O)  at (0,0,0);
\coordinate (CE)  at (1/3,1/3,1/3);
\coordinate (A)  at (\xx,0,0);
\coordinate (B)  at (0,\yy,0);
\coordinate (C)  at (0,0,\zz);
\coordinate (Pxy)  at (\pxxy,\pyxy,\pzxy);
\coordinate (Pxz)  at (\pxxz,\pyxz,\pzxz);
\coordinate (Pyz)  at (\pxyz,\pyyz,\pzyz);
\coordinate (PX)  at (\pxX,\pyX,\pzX);


\draw[gray, dotted, very thin] (1,0,0) -- (Pyz);

\draw[gray, dotted, very thin] (0,1,0) -- (Pxz);

\draw[gray, dotted, very thin] (0,0,1) -- (Pxy);
\draw[gray!50!black,thick,fill=gray!50,fill opacity=0.1] (A) -- (B) -- (C) -- cycle;
\fill (A)  circle (.2pt) node[below left] {$x$};
\fill (B)  circle (.2pt) node[below right] {$y$};
\fill (C)  circle (.2pt) node[above] {$z$};

\fill[red] (Pxy)  circle (.3pt) node[below] {$\pi(xy)$};
\fill[red] (Pyz)  circle (.3pt) node[right] {$\pi(yz)$};
\fill[red] (Pxz)  circle (.3pt) node[left] {$\pi(xz)$};
\fill[red] (PX)  circle (.3pt) node[left] {$\pi(X)$};
\fill (CE)  circle (.1pt) node[below left] { };
\end{tikzpicture}   

 \caption{Luce}
    \end{subfigure}
   \begin{subfigure}[t]{0.4\textwidth}
        \centering
\begin{tikzpicture}[line cap=round,line join=round,                    3d view={135}{45},                    scale=4.5]


\def\ux{1.5}   
\def\uy{2}   
\def\uz{2.5}   

\def\vx{6}   
\def\vy{4}   
\def\vz{3}   
\def\xx{1}   
\def\yy{1}   
\def\zz{1}   

\pgfmathsetmacro\pxe{\ux/(\ux+\uy+\uz)} 
\pgfmathsetmacro\pye{\uy/(\ux+\uy+\uz)}

\pgfmathsetmacro\pxx{(\ux+\vx)/(\ux+\vx+\uy+\uz)} 
\pgfmathsetmacro\pyx{(\uy)/(\ux+\vx+\uy+\uz)}%

\pgfmathsetmacro\pxy{(\ux)/(\ux+\uy+\vy+\uz)} 
\pgfmathsetmacro\pyy{(\uy+\vy)/(\ux+\uy+\vy+\uz)}%

\pgfmathsetmacro\pxz{(\ux)/(\ux+\uy+\vz+\uz)} 
\pgfmathsetmacro\pyz{(\uy)/(\ux+\uy+\vz+\uz)} %

\pgfmathsetmacro\pxxy{(\ux+\vx)/(\ux+\vx+\uy+\vy+\uz)} 
\pgfmathsetmacro\pyxy{(\uy+\vy)/(\ux+\vx+\uy+\vy+\uz)} %

\pgfmathsetmacro\pxxz{(\ux+\vx)/(\ux+\vx+\uy+\vz+\uz)} 
\pgfmathsetmacro\pyxz{(\uy)/(\ux+\vx+\uy+\vz+\uz)} %

\pgfmathsetmacro\pxyz{(\ux)/(\ux+\vy+\uy+\vz+\uz)} 
\pgfmathsetmacro\pyyz{(\uy+\vy)/(\ux+\vy+\uy+\vz+\uz)} %

\pgfmathsetmacro\pxX{(\ux+\vx)/(\ux+\vx+\vy+\uy+\vz+\uz)} 
\pgfmathsetmacro\pyX{(\uy+\vy)/(\ux+\vx+\vy+\uy+\vz+\uz)} %

\pgfmathsetmacro\pze{\zz*(1-\pxe/\xx-\pye/\yy)}
\pgfmathsetmacro\pzx{\zz*(1-\pxx/\xx-\pyx/\yy)}
\pgfmathsetmacro\pzy{\zz*(1-\pxy/\xx-\pyy/\yy)}
\pgfmathsetmacro\pzz{\zz*(1-\pxz/\xx-\pyz/\yy)}
\pgfmathsetmacro\pzxy{\zz*(1-\pxxy/\xx-\pyxy/\yy)}
\pgfmathsetmacro\pzxz{\zz*(1-\pxxz/\xx-\pyxz/\yy)}
\pgfmathsetmacro\pzyz{\zz*(1-\pxyz/\xx-\pyyz/\yy)}
\pgfmathsetmacro\pzX{\zz*(1-\pxX/\xx-\pyX/\yy)}

\coordinate (O)  at (0,0,0);
\coordinate (CE)  at (1/3,1/3,1/3);
\coordinate (A)  at (\xx,0,0);
\coordinate (B)  at (0,\yy,0);
\coordinate (C)  at (0,0,\zz);
\coordinate (Pe)  at (\pxe,\pye,\pze);
\coordinate (Px)  at (\pxx,\pyx,\pzx);
\coordinate (Py)  at (\pxy,\pyy,\pzy);
\coordinate (Pz)  at (\pxz,\pyz,\pzz);
\coordinate (Pxy)  at (\pxxy,\pyxy,\pzxy);
\coordinate (Pxz)  at (\pxxz,\pyxz,\pzxz);
\coordinate (Pyz)  at (\pxyz,\pyyz,\pzyz);
\coordinate (PX)  at (\pxX,\pyX,\pzX);

\fill[white] (.5, .5,0)  circle (.0005pt) node[below] {$\rho(x)$};

\draw[gray, dotted, very thin] (1,0,0) -- (Pe);
\draw[gray, dotted, very thin] (1,0,0) -- (Py);
\draw[gray, dotted, very thin] (1,0,0) -- (Pz);
\draw[gray, dotted, very thin] (1,0,0) -- (Pyz);
\draw[gray, dotted, very thin] (0,1,0) -- (Pe);
\draw[gray, dotted, very thin] (0,1,0) -- (Px);
\draw[gray, dotted, very thin] (0,1,0) -- (Pz);
\draw[gray, dotted, very thin] (0,1,0) -- (Pxz);
\draw[gray, dotted, very thin] (0,0,1) -- (Pe);
\draw[gray, dotted, very thin] (0,0,1) -- (Py);
\draw[gray, dotted, very thin] (0,0,1) -- (Px);
\draw[gray, dotted, very thin] (0,0,1) -- (Pxy);
\draw[gray!50!black,thick,fill=gray!50,fill opacity=0.1] (A) -- (B) -- (C) -- cycle;
\fill (A)  circle (.2pt) node[below left] {$x$};
\fill (B)  circle (.2pt) node[below right] {$y$};
\fill (C)  circle (.2pt) node[above] {$z$};
\fill[red] (Pe)  circle (.3pt) node[above] {$\rho(\emptyset)$};
\fill[red] (Px)  circle (.3pt) node[left] {$\rho(x)$};
\fill[red] (Py)  circle (.3pt) node[right] {$\rho(y)$};
\fill[red] (Pz)  circle (.3pt) node[above] {$\rho(z)$};
\fill[red] (Pxy)  circle (.3pt) node[below] {$\rho(xy)$};
\fill[red] (Pyz)  circle (.3pt) node[above] {$\rho(yz)$};
\fill[red] (Pxz)  circle (.3pt) node[above] {$\rho(xz)$};
\fill[red] (PX)  circle (.3pt) node[left] {$\rho(X)$};
\fill (CE)  circle (.1pt) node[below left] { };
\end{tikzpicture}
 \caption{F-Luce}
    \end{subfigure}%

    \caption{Luce Model versus F-Luce Model
    }
    \label{fig:F-Luce}
\end{figure}
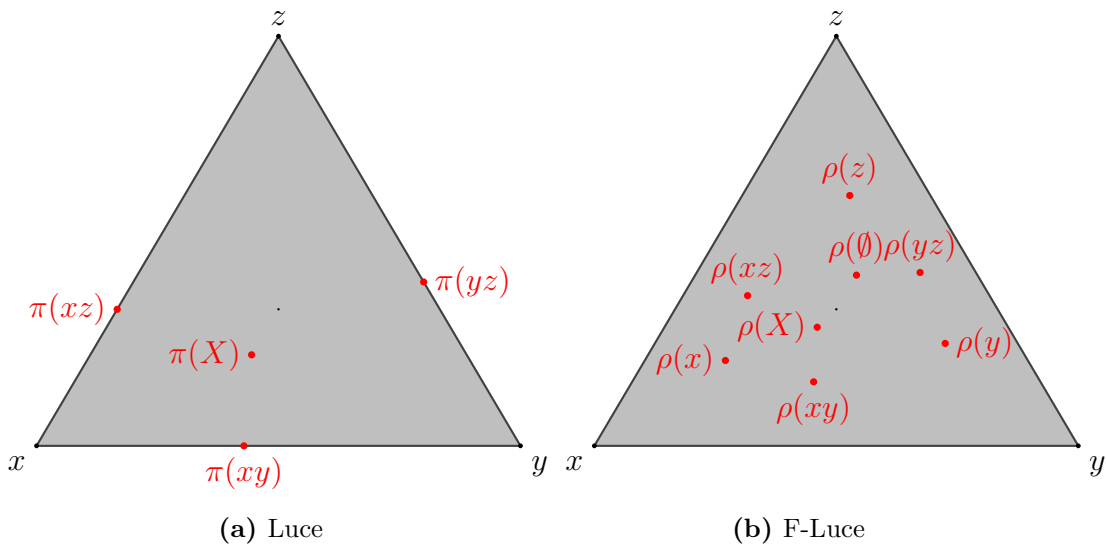

In the right panel of \Cref{fig:F-Luce}, we illustrate an F-Luce model with three alternatives. There are eight non-trivial observations corresponding to each frame. Here, framing an alternative can boost its choice probability relative to others. Unlike the classical Luce model, \(\rho(xy)\) assigns a positive choice probability to \(z\) even though it is not framed, which is placed strictly inside the probability simplex. Despite these differences, certain regularities persist across the two models. For instance, the data for \(\{x,y\}\) and \(X\) lie on the same line, indicating that the relative choice probability between \(x\) and \(y\) remains constant as long as both are framed. Thus, while the F-Luce model introduces an additional layer of complexity through framing effects, it still preserves the fundamental property of the original Luce model. Moreover, the relative choice probability between \(x\) and \(y\) stays the same whether $z$ is framed or not. For example, when neither \(x\) nor \(y\) is framed, the ratio stays constant, indicated by the fact that $\rho(\emptyset)$ and $\rho(z)$ are on the same line. Furthermore, when \(x\) is framed but \(y\) is not, the ratio also stays constant, since $\rho(xz)$ and $\rho(x)$ are on the same line. We will discuss more of the general principles behind F-Luce in the next section.

There are several interesting special cases of the model. All these special cases are one-parameter extensions of the original Luce model capturing the framing effects. These examples illustrate the richness of the F-Luce model.

First, we examine the situation that the impact of framing remains constant across alternatives. Here, while the underlying utility $u$ varies, the framing effect adds a fixed gain $\bar{v}$ to the utility across all alternatives. In the absence of framing, the choice probabilities, $\rho(\cdot,\emptyset)$, directly reflect the underlying preferences among alternatives. Introducing framing, however, alters these probabilities. Framing a specific alternative increases its choice probability, but framing all alternatives uniformly diminishes the distinctions between them. Consequently, the choice probabilities, $\rho(\cdot,X)$, converge towards a convex combination of the uniform distribution and $\rho(\cdot,\emptyset)$. In an extreme scenario where $\bar{v}=0$, framing has no effect ($\rho(x,F)$ is equal to $\rho(x,\emptyset)$ for all $x$ and for all $F$). As $\bar{v}$ increases,  $\rho(\cdot,X)$ approaches the uniform distribution, further reducing the differentiation between alternatives. This scenario captures environments in which framing induces a consistent effect across all products.

Second, we imagine a case where the underlying utility $u$ is constant across alternatives, and the only difference comes from the boost in utility $v$. The model is represented by $(\bar{u},v)$, where $\bar{u}$ is a constant. In this case, there is no distinction between alternatives initially. Hence the distribution of choices is uniform if there is no frame. The framing effect breaks this uniformity in favor of framed alternatives. $\rho(\cdot,X)$ reflects the differential effect of framing across alternatives. This special case captures environments where customers either are indifferent between alternatives or have no information about products. Framing eliminates indifference or provides additional information.  

Finally, we explore scenarios where the framing effect is proportional to the underlying utility. In this setting, alternatives with higher utility receive larger boosts from framing. For example, promoting best-sellers amplifies their sales even further. The model for this case is represented as $(u,\lambda u)$, where $\lambda\geq 0$ quantifies the framing effect. For example, it could be a measure of the effectiveness of recommendations. If $\lambda$ is small, $\rho(x,F)$ is in close proximity to $\rho(x,\emptyset)$ for all $x$ and for all $F$, indicating a minimal impact of framing. Conversely, when $\lambda$ is large, framing significantly increases the choice probabilities of each alternative relative to the no-framing scenario. Interestingly, framing all alternatives simultaneously has no impact on their relative choice probabilities, resulting in $\rho(x,\emptyset) =\rho(x,X)$ for all $x$. All three cases are illustrated in \Cref{fig:special}.

\begin{figure}[h!]
    \begin{tikzpicture}[line cap=round,line join=round,                    3d view={135}{45},                    scale=3]


\def\ux{2}   
\def\uy{3}   
\def\uz{1}   

\def\vx{3}   
\def\vy{3}   
\def\vz{3}   
\def\xx{1}   
\def\yy{1}   
\def\zz{1}   

\pgfmathsetmacro\pxe{\ux/(\ux+\uy+\uz)} 
\pgfmathsetmacro\pye{\uy/(\ux+\uy+\uz)}

\pgfmathsetmacro\pxx{(\ux+\vx)/(\ux+\vx+\uy+\uz)} 
\pgfmathsetmacro\pyx{(\uy)/(\ux+\vx+\uy+\uz)}%

\pgfmathsetmacro\pxy{(\ux)/(\ux+\uy+\vy+\uz)} 
\pgfmathsetmacro\pyy{(\uy+\vy)/(\ux+\uy+\vy+\uz)}%

\pgfmathsetmacro\pxz{(\ux)/(\ux+\uy+\vz+\uz)} 
\pgfmathsetmacro\pyz{(\uy)/(\ux+\uy+\vz+\uz)} %

\pgfmathsetmacro\pxxy{(\ux+\vx)/(\ux+\vx+\uy+\vy+\uz)} 
\pgfmathsetmacro\pyxy{(\uy+\vy)/(\ux+\vx+\uy+\vy+\uz)} %

\pgfmathsetmacro\pxxz{(\ux+\vx)/(\ux+\vx+\uy+\vz+\uz)} 
\pgfmathsetmacro\pyxz{(\uy)/(\ux+\vx+\uy+\vz+\uz)} %

\pgfmathsetmacro\pxyz{(\ux)/(\ux+\vy+\uy+\vz+\uz)} 
\pgfmathsetmacro\pyyz{(\uy+\vy)/(\ux+\vy+\uy+\vz+\uz)} %

\pgfmathsetmacro\pxX{(\ux+\vx)/(\ux+\vx+\vy+\uy+\vz+\uz)} 
\pgfmathsetmacro\pyX{(\uy+\vy)/(\ux+\vx+\vy+\uy+\vz+\uz)} %

\pgfmathsetmacro\pze{\zz*(1-\pxe/\xx-\pye/\yy)}
\pgfmathsetmacro\pzx{\zz*(1-\pxx/\xx-\pyx/\yy)}
\pgfmathsetmacro\pzy{\zz*(1-\pxy/\xx-\pyy/\yy)}
\pgfmathsetmacro\pzz{\zz*(1-\pxz/\xx-\pyz/\yy)}
\pgfmathsetmacro\pzxy{\zz*(1-\pxxy/\xx-\pyxy/\yy)}
\pgfmathsetmacro\pzxz{\zz*(1-\pxxz/\xx-\pyxz/\yy)}
\pgfmathsetmacro\pzyz{\zz*(1-\pxyz/\xx-\pyyz/\yy)}
\pgfmathsetmacro\pzX{\zz*(1-\pxX/\xx-\pyX/\yy)}

\coordinate (O)  at (0,0,0);
\coordinate (A)  at (\xx,0,0);
\coordinate (B)  at (0,\yy,0);
\coordinate (CE)  at (1/3,1/3,1/3);
\coordinate (C)  at (0,0,\zz);
\coordinate (Pe)  at (\pxe,\pye,\pze);
\coordinate (Px)  at (\pxx,\pyx,\pzx);
\coordinate (Py)  at (\pxy,\pyy,\pzy);
\coordinate (Pz)  at (\pxz,\pyz,\pzz);
\coordinate (Pxy)  at (\pxxy,\pyxy,\pzxy);
\coordinate (Pxz)  at (\pxxz,\pyxz,\pzxz);
\coordinate (Pyz)  at (\pxyz,\pyyz,\pzyz);
\coordinate (PX)  at (\pxX,\pyX,\pzX);

\draw[line width=0.01mm, gray, dotted] (1,0,0) -- (Pe);
\draw[gray, dotted, very thin] (1,0,0) -- (Py);
\draw[gray, dotted, very thin] (1,0,0) -- (Pz);
\draw[gray, dotted, very thin] (1,0,0) -- (Pyz);
\draw[gray, dotted, very thin] (0,1,0) -- (Pe);
\draw[gray, dotted, very thin] (0,1,0) -- (Px);
\draw[gray, dotted, very thin] (0,1,0) -- (Pz);
\draw[gray, dotted, very thin] (0,1,0) -- (Pxz);
\draw[gray, dotted, very thin] (0,0,1) -- (Pe);
\draw[gray, dotted, very thin] (0,0,1) -- (Py);
\draw[gray, dotted, very thin] (0,0,1) -- (Px);
\draw[gray, dotted, very thin] (0,0,1) -- (Pxy);
\draw[gray!50!black,thick,fill=gray!50,fill opacity=0.1] (A) -- (B) -- (C) -- cycle;
\fill (A)  circle (.2pt) node[below left] {$x$};
\fill (B)  circle (.2pt) node[below right] {$y$};
\fill (C)  circle (.2pt) node[above] {$z$};
\fill[red] (Pe)  circle (.3pt) node[right] {\tiny{$\rho(\emptyset)$}};
\fill[red] (Px)  circle (.3pt) node[left] {\tiny{$\rho(x)$}};
\fill[red] (Py)  circle (.3pt) node[right] {\tiny{$\rho(y)$}};
\fill[red] (Pz)  circle (.3pt) node[above] {\tiny{$\rho(z)$}};
\fill[red] (Pxy)  circle (.3pt) node[xshift=0cm, yshift=-0.15cm] {\tiny{$\rho(xy)$}};
\fill[red] (Pyz)  circle (.3pt) node[above] {\tiny{$\rho(yz)$}};
\fill[red] (Pxz)  circle (.3pt) node[above] {\tiny{$\rho(xz)$}};
\fill[red] (PX)  circle (.3pt) node[left] {\tiny{$\rho(X)$}};
\fill (CE)  circle (.1pt) node[below left] { };
\node at (CE) {\tiny\(\star\)};
\end{tikzpicture}  \begin{tikzpicture}[line cap=round,line join=round,                    3d view={135}{45},                    scale=3]


\def\ux{1}   
\def\uy{1}   
\def\uz{1}   

\def\vx{3}   
\def\vy{2}   
\def\vz{1}   
\def\xx{1}   
\def\yy{1}   
\def\zz{1}   

\pgfmathsetmacro\pxe{\ux/(\ux+\uy+\uz)} 
\pgfmathsetmacro\pye{\uy/(\ux+\uy+\uz)}

\pgfmathsetmacro\pxx{(\ux+\vx)/(\ux+\vx+\uy+\uz)} 
\pgfmathsetmacro\pyx{(\uy)/(\ux+\vx+\uy+\uz)}%

\pgfmathsetmacro\pxy{(\ux)/(\ux+\uy+\vy+\uz)} 
\pgfmathsetmacro\pyy{(\uy+\vy)/(\ux+\uy+\vy+\uz)}%

\pgfmathsetmacro\pxz{(\ux)/(\ux+\uy+\vz+\uz)} 
\pgfmathsetmacro\pyz{(\uy)/(\ux+\uy+\vz+\uz)} %

\pgfmathsetmacro\pxxy{(\ux+\vx)/(\ux+\vx+\uy+\vy+\uz)} 
\pgfmathsetmacro\pyxy{(\uy+\vy)/(\ux+\vx+\uy+\vy+\uz)} %

\pgfmathsetmacro\pxxz{(\ux+\vx)/(\ux+\vx+\uy+\vz+\uz)} 
\pgfmathsetmacro\pyxz{(\uy)/(\ux+\vx+\uy+\vz+\uz)} %

\pgfmathsetmacro\pxyz{(\ux)/(\ux+\vy+\uy+\vz+\uz)} 
\pgfmathsetmacro\pyyz{(\uy+\vy)/(\ux+\vy+\uy+\vz+\uz)} %

\pgfmathsetmacro\pxX{(\ux+\vx)/(\ux+\vx+\vy+\uy+\vz+\uz)} 
\pgfmathsetmacro\pyX{(\uy+\vy)/(\ux+\vx+\vy+\uy+\vz+\uz)} %

\pgfmathsetmacro\pze{\zz*(1-\pxe/\xx-\pye/\yy)}
\pgfmathsetmacro\pzx{\zz*(1-\pxx/\xx-\pyx/\yy)}
\pgfmathsetmacro\pzy{\zz*(1-\pxy/\xx-\pyy/\yy)}
\pgfmathsetmacro\pzz{\zz*(1-\pxz/\xx-\pyz/\yy)}
\pgfmathsetmacro\pzxy{\zz*(1-\pxxy/\xx-\pyxy/\yy)}
\pgfmathsetmacro\pzxz{\zz*(1-\pxxz/\xx-\pyxz/\yy)}
\pgfmathsetmacro\pzyz{\zz*(1-\pxyz/\xx-\pyyz/\yy)}
\pgfmathsetmacro\pzX{\zz*(1-\pxX/\xx-\pyX/\yy)}

\coordinate (O)  at (0,0,0);
\coordinate (CE)  at (1/3,1/3,1/3);
\coordinate (A)  at (\xx,0,0);
\coordinate (B)  at (0,\yy,0);
\coordinate (C)  at (0,0,\zz);
\coordinate (Pe)  at (\pxe,\pye,\pze);
\coordinate (Px)  at (\pxx,\pyx,\pzx);
\coordinate (Py)  at (\pxy,\pyy,\pzy);
\coordinate (Pz)  at (\pxz,\pyz,\pzz);
\coordinate (Pxy)  at (\pxxy,\pyxy,\pzxy);
\coordinate (Pxz)  at (\pxxz,\pyxz,\pzxz);
\coordinate (Pyz)  at (\pxyz,\pyyz,\pzyz);
\coordinate (PX)  at (\pxX,\pyX,\pzX);

\node at (CE) {\tiny\(\star\)};
\draw[gray, dotted, very thin] (1,0,0) -- (Pe);
\draw[gray, dotted, very thin] (1,0,0) -- (Py);
\draw[gray, dotted, very thin] (1,0,0) -- (Pz);
\draw[gray, dotted, very thin] (1,0,0) -- (Pyz);
\draw[gray, dotted, very thin] (0,1,0) -- (Pe);
\draw[gray, dotted, very thin] (0,1,0) -- (Px);
\draw[gray, dotted, very thin] (0,1,0) -- (Pz);
\draw[gray, dotted, very thin] (0,1,0) -- (Pxz);
\draw[gray, dotted, very thin] (0,0,1) -- (Pe);
\draw[gray, dotted, very thin] (0,0,1) -- (Py);
\draw[gray, dotted, very thin] (0,0,1) -- (Px);
\draw[gray, dotted, very thin] (0,0,1) -- (Pxy);
\draw[gray!50!black,thick,fill=gray!50,fill opacity=0.1] (A) -- (B) -- (C) -- cycle;
\fill (A)  circle (.2pt) node[below left] {$x$};
\fill (B)  circle (.2pt) node[below right] {$y$};
\fill (C)  circle (.2pt) node[above] {$z$};
\fill[red] (Pe)  circle (.3pt) node[above] {\tiny{$\rho(\emptyset)$}};
\fill[red] (Px)  circle (.3pt) node[left] {\tiny{$\rho(x)$}};
\fill[red] (Py)  circle (.3pt) node[right] {\tiny{$\rho(y)$}};
\fill[red] (Pz)  circle (.3pt) node[above] {\tiny{$\rho(z)$}};
\fill[red] (Pxy)  circle (.3pt) node[below] {\tiny{$\rho(xy)$}};
\fill[red] (Pyz)  circle (.3pt) node[above] {\tiny{$\rho(yz)$}};
\fill[red] (Pxz)  circle (.3pt) node[above] {\tiny{$\rho(xz)$}};
\fill[red] (PX)  circle (.3pt) node[left] {\tiny{$\rho(X)$}};
\end{tikzpicture}     \begin{tikzpicture}[line cap=round,line join=round,                    3d view={135}{45},                    scale=3]


\def\ux{2}   
\def\uy{1.7}   
\def\uz{3}   

\def\l{2}

\def\vx{\ux*\l}   
\def\vy{\uy*\l}   
\def\vz{\uz*\l}   
\def\xx{1}   
\def\yy{1}   
\def\zz{1}   

\pgfmathsetmacro\pxe{\ux/(\ux+\uy+\uz)} 
\pgfmathsetmacro\pye{\uy/(\ux+\uy+\uz)}

\pgfmathsetmacro\pxx{(\ux+\vx)/(\ux+\vx+\uy+\uz)} 
\pgfmathsetmacro\pyx{(\uy)/(\ux+\vx+\uy+\uz)}%

\pgfmathsetmacro\pxy{(\ux)/(\ux+\uy+\vy+\uz)} 
\pgfmathsetmacro\pyy{(\uy+\vy)/(\ux+\uy+\vy+\uz)}%

\pgfmathsetmacro\pxz{(\ux)/(\ux+\uy+\vz+\uz)} 
\pgfmathsetmacro\pyz{(\uy)/(\ux+\uy+\vz+\uz)} %

\pgfmathsetmacro\pxxy{(\ux+\vx)/(\ux+\vx+\uy+\vy+\uz)} 
\pgfmathsetmacro\pyxy{(\uy+\vy)/(\ux+\vx+\uy+\vy+\uz)} %

\pgfmathsetmacro\pxxz{(\ux+\vx)/(\ux+\vx+\uy+\vz+\uz)} 
\pgfmathsetmacro\pyxz{(\uy)/(\ux+\vx+\uy+\vz+\uz)} %

\pgfmathsetmacro\pxyz{(\ux)/(\ux+\vy+\uy+\vz+\uz)} 
\pgfmathsetmacro\pyyz{(\uy+\vy)/(\ux+\vy+\uy+\vz+\uz)} %

\pgfmathsetmacro\pxX{(\ux+\vx)/(\ux+\vx+\vy+\uy+\vz+\uz)} 
\pgfmathsetmacro\pyX{(\uy+\vy)/(\ux+\vx+\vy+\uy+\vz+\uz)} %

\pgfmathsetmacro\pze{\zz*(1-\pxe/\xx-\pye/\yy)}
\pgfmathsetmacro\pzx{\zz*(1-\pxx/\xx-\pyx/\yy)}
\pgfmathsetmacro\pzy{\zz*(1-\pxy/\xx-\pyy/\yy)}
\pgfmathsetmacro\pzz{\zz*(1-\pxz/\xx-\pyz/\yy)}
\pgfmathsetmacro\pzxy{\zz*(1-\pxxy/\xx-\pyxy/\yy)}
\pgfmathsetmacro\pzxz{\zz*(1-\pxxz/\xx-\pyxz/\yy)}
\pgfmathsetmacro\pzyz{\zz*(1-\pxyz/\xx-\pyyz/\yy)}
\pgfmathsetmacro\pzX{\zz*(1-\pxX/\xx-\pyX/\yy)}

\coordinate (O)  at (0,0,0);
\coordinate (A)  at (\xx,0,0);
\coordinate (B)  at (0,\yy,0);
\coordinate (C)  at (0,0,\zz);
\coordinate (Pe)  at (\pxe,\pye,\pze);
\coordinate (Px)  at (\pxx,\pyx,\pzx);
\coordinate (Py)  at (\pxy,\pyy,\pzy);
\coordinate (Pz)  at (\pxz,\pyz,\pzz);
\coordinate (Pxy)  at (\pxxy,\pyxy,\pzxy);
\coordinate (Pxz)  at (\pxxz,\pyxz,\pzxz);
\coordinate (Pyz)  at (\pxyz,\pyyz,\pzyz);
\coordinate (PX)  at (\pxX,\pyX,\pzX);

\draw[gray, dotted, very thin] (1,0,0) -- (Pe);
\draw[gray, dotted, very thin] (1,0,0) -- (Py);
\draw[gray, dotted, very thin] (1,0,0) -- (Pz);
\draw[gray, dotted, very thin] (1,0,0) -- (Pyz);
\draw[gray, dotted, very thin] (0,1,0) -- (Pe);
\draw[gray, dotted, very thin] (0,1,0) -- (Px);
\draw[gray, dotted, very thin] (0,1,0) -- (Pz);
\draw[gray, dotted, very thin] (0,1,0) -- (Pxz);
\draw[gray, dotted, very thin] (0,0,1) -- (Pe);
\draw[gray, dotted, very thin] (0,0,1) -- (Py);
\draw[gray, dotted, very thin] (0,0,1) -- (Px);
\draw[gray, dotted, very thin] (0,0,1) -- (Pxy);
\draw[gray!50!black,thick,fill=gray!50,fill opacity=0.1] (A) -- (B) -- (C) -- cycle;
\fill (A)  circle (.2pt) node[below left] {$x$};
\fill (B)  circle (.2pt) node[below right] {$y$};
\fill (C)  circle (.2pt) node[above] {$z$};
\fill[red] (Pe)  circle (.3pt) node[right] {\tiny{$\rho(\emptyset)$}};
\fill[red] (Px)  circle (.3pt) node[left] {\tiny{$\rho(x)$}};
\fill[red] (Py)  circle (.3pt) node[right] {\tiny{$\rho(y)$}};
\fill[red] (Pz)  circle (.3pt) node[above] {\tiny{$\rho(z)$}};
\fill[red] (Pxy)  circle (.3pt) node[below] {\tiny{$\rho(xy)$}};
\fill[red] (Pyz)  circle (.3pt) node[above] {\tiny{$\rho(yz)$}};
\fill[red] (Pxz)  circle (.3pt) node[above] {\tiny{$\rho(xz)$}};
\fill[red] (PX)  circle (.3pt) node[left] {\tiny{$\rho(X)$}};
\fill (CE)  circle (.2pt) node[below left] { };
\node at (CE) {\tiny\(\star\)};
\end{tikzpicture}
    \caption{Special Cases: The left figure illustrates F-Luce for $(u,\bar{v})$: variable $u$ and constant $\bar{v}$. The middle figure demonstrates F-Luce for $(\bar{u},v)$: Constant $u$ and variable $\bar{v}$. Finally, the figure on the right exhibits F-Luce for $(u,\lambda u)$. In all three figures, \(\star\) indicates the uniform choice probabilities.}
    \label{fig:special}
\end{figure}
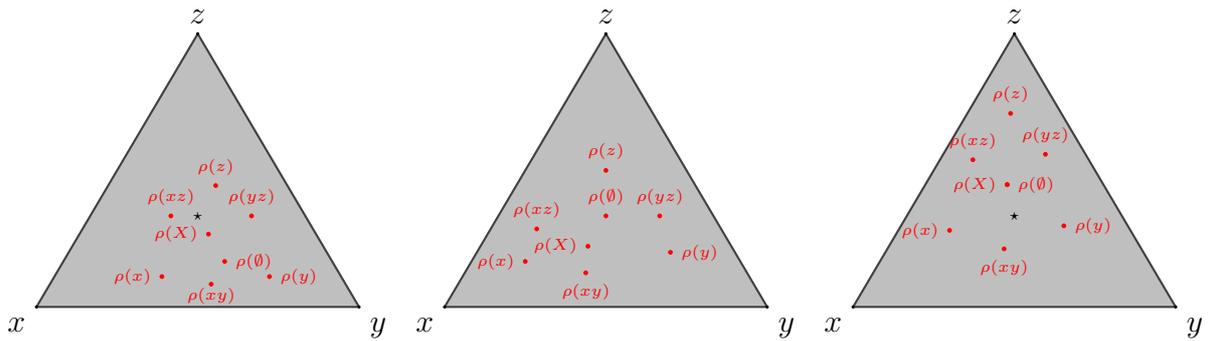


\subsection{Behavioral Characterizations for F-Luce}

In this section, we discuss the behavioral implications of the F-Luce model. These behavioral postulates will allow us to test the model and characterize the empirical content of F-Luce. Since we extend the classical Luce model, as anticipated, some version of Luce's IIA appears in our characterization. Recall that Luce's IIA says the odds of choosing one alternative over another do not depend on the feasible set.

\begin{axxm}[Luce-IIA]\label{Axiom:Luce-IIA}
For $x,y \in F \cap F' $,   $$\frac{ \rho(x,F)}{ \rho(x,F') }=\frac{ \rho(y,F)}{ \rho(y,F') } $$ 
\end{axxm}

Since the F-Luce model is similar to the standard model except that the weights are marked up by a utility $v(x)$ if it is framed, \nameref{Axiom:Luce-IIA} is a necessary condition. However, it is not sufficient. The condition is silent over the choice data for non-framed alternatives. Our model also satisfies the same condition for non-framed alternatives. However, it is still not sufficient, since it does not connect the choice data between framed and non-framed alternatives. The model also requires the same condition between framed and non-framed alternatives. Overall, the Luce-IIA condition must hold as long as $x$ and $y$ do not change their status across $F$ and $F'$. The following axiom imposes that requirement.

\begin{axxm}[Strong Luce-IIA]\label{Axiom:Strong Luce-IIA}
For $x,y \notin F \Delta F' $,   $$\frac{ \rho(x,F)}{ \rho(x,F') }=\frac{ \rho(y,F)}{ \rho(y,F') } $$ where $\Delta$ is the symmetric difference operator.
\end{axxm}

One might suspect that this axiom alone is sufficient for F-Luce. It turns out that we need an additional behavioral postulate. Note that our model assumes framing has a negative impact on non-framed alternatives. Any non-framed item is chosen weakly more when the set of framed items gets smaller. We call this property F-Regularity.

\begin{axxm}[F-Regularity]\label{Axiom:F-Regularity} For $x \notin F$, $\rho(x,F) \leq \rho(x, F \setminus y)$.\end{axxm}

Notice that this axiom is trivially satisfied in the standard Luce model since both sides of the inequality are equal to zero. With this axiom, we are able to state the characterization result.

\begin{thm}\label{Thm:F-Luce}
Let $|X| \geq 3$ and $\mathcal{D}$ include all frames with $|F|\leq 2$. Then, $\rho$ has an F-Luce representation if and only if $\rho$ satisfies \nameref{Axiom:Strong Luce-IIA} and \nameref{Axiom:F-Regularity}.
\end{thm}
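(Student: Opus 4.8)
The plan is to prove necessity by direct substitution into the F-Luce formula and sufficiency by constructing $(u,v)$ from the frames of size at most one and then bootstrapping the formula to every $F\in\mathcal{D}$ using \nameref{Axiom:Strong Luce-IIA}. For necessity, suppose $\rho$ has an F-Luce representation. If $x\notin F\Delta F'$ then $x$ keeps its framed status across $F$ and $F'$, so the numerator of $\rho(x,\cdot)$ is the same constant ($u(x)$ or $u(x)+v(x)$) in both, whence $\rho(x,F)/\rho(x,F')=(u(X)+v(F'))/(u(X)+v(F))$, which is independent of $x$; this is \nameref{Axiom:Strong Luce-IIA}. For \nameref{Axiom:F-Regularity}, if $x\notin F$ then $x\notin F\setminus\{y\}$ as well, so both probabilities carry numerator $u(x)$ and $\rho(x,F)\le\rho(x,F\setminus\{y\})$ reduces to $v(F)\ge v(F\setminus\{y\})$, which holds since $v\ge 0$.

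For sufficiency I would first fix the normalization and read off the parameters from the smallest frames: set $u(x):=\rho(x,\emptyset)$ (so $u(X)=1$) and $v(x):=\frac{\rho(x,\{x\})-\rho(x,\emptyset)}{1-\rho(x,\{x\})}$. The positivity assumption gives $u(x)>0$ and $\rho(x,\{x\})<1$. To see $v(x)\ge 0$, apply \nameref{Axiom:F-Regularity} with $F=\{x\}$ to obtain $\rho(y,\{x\})\le\rho(y,\emptyset)$ for every $y\ne x$; summing these over $y\ne x$ and using $\sum_z\rho(z,\{x\})=\sum_z\rho(z,\emptyset)=1$ forces $\rho(x,\{x\})\ge\rho(x,\emptyset)$, so the defining quotient for $v(x)$ is nonnegative.

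The core step is to show $\rho(x,F)=\alpha(F)\,\big[u(x)+v(x)\mathbbm{1}\{x\in F\}\big]$ for one positive constant $\alpha(F)$, for every $F\ne X$ in $\mathcal{D}$. For non-framed $x\notin F$, \nameref{Axiom:Strong Luce-IIA} applied to the pair $(F,\emptyset)$ (note $F\Delta\emptyset=F$) shows $\rho(x,F)/\rho(x,\emptyset)$ is the same for all $x\notin F$; call it $\alpha(F)$, so $\rho(x,F)=\alpha(F)u(x)$. For a framed $x\in F$, choose a bridge alternative $z\notin F$ (available precisely because $F\ne X$) and apply \nameref{Axiom:Strong Luce-IIA} to the pair $(F,\{x\})$ with the two alternatives $x,z$, both of which lie outside $F\Delta\{x\}=F\setminus\{x\}$; this gives $\rho(x,F)=\rho(x,\{x\})\,\alpha(F)/\alpha(\{x\})$. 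A short consistency computation in the frame $\{x\}$ — using $\rho(z,\{x\})=\alpha(\{x\})u(z)$ for $z\ne x$ and $\sum_z\rho(z,\{x\})=1$ — yields $\alpha(\{x\})=1/(1+v(x))$ and $\rho(x,\{x\})=(u(x)+v(x))/(1+v(x))$, so the previous display collapses to $\rho(x,F)=\alpha(F)(u(x)+v(x))$. Summing $\rho(\cdot,F)$ to one then pins $\alpha(F)=1/(u(X)+v(F))$, which is exactly the F-Luce formula.

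Finally I would dispatch the single case $F=X$, relevant only if $X\in\mathcal{D}$: since $|X|\ge 3$ there is a strict binary subframe $\{x,y\}\subsetneq X$, and comparing $X$ with $\{x,y\}$ through \nameref{Axiom:Strong Luce-IIA} gives $\rho(x,X)/\rho(y,X)=\rho(x,\{x,y\})/\rho(y,\{x,y\})=(u(x)+v(x))/(u(y)+v(y))$, the last equality by the representation already established on the smaller frame; normalizing delivers the formula at $X$ as well. I expect the main obstacle to be the framed subcase of the bootstrapping argument: one must route the comparison through a non-framed bridge $z$ and verify that the constant recovered from the singleton $\{x\}$ is precisely the factor $1/(1+v(x))$ that makes the framed and non-framed scalings collapse onto the same $\alpha(F)$. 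This reliance on a bridge alternative is also exactly what necessitates the separate treatment of $F=X$ and the standing hypothesis $|X|\ge 3$.
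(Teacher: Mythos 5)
Your proposal is correct and follows essentially the same route as the paper's proof: both construct $u$ from $\rho(\cdot,\emptyset)$, use \nameref{Axiom:Strong Luce-IIA} with a bridge/anchor element outside the frame to propagate the singleton-frame information to general $F\neq X$, handle $F=X$ separately through binary frames (which is where $|X|\geq 3$ enters), and invoke \nameref{Axiom:F-Regularity} to get $v\geq 0$. The only difference is bookkeeping — you track the frame-level normalizers $\alpha(F)$ while the paper tracks the anchor-independent framed weights $l(x)=\rho(x,F)/\alpha(F)=u(x)+v(x)$ — so the two arguments are algebraically interchangeable.
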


Theorem \ref{Thm:F-Luce} works under a very mild data requirement: the size of the frame is less than or equal to 2. One caveat is that Theorem \ref{Thm:F-Luce} requires that there are at least three available alternatives. The reason is that \nameref{Axiom:Strong Luce-IIA} puts no restriction on choice data if there are only two available alternatives. One can define another property, e.g., for $x,y \in F$, $\frac{\rho(x,F)}{\rho(y,F)}\frac{\rho(x,F\setminus\{x,y\})}{\rho(y,F\setminus\{x,y\})}=\frac{\rho(x,F\setminus y)}{\rho(y,F \setminus y)}\frac{\rho(x,F\setminus x)}{\rho(y,F \setminus x)}$. One can show that this property, along with \nameref{Axiom:Strong Luce-IIA} and \nameref{Axiom:F-Regularity}, are the necessary and sufficient conditions for F-Luce with $|X|\geq2$.

We finish this section by establishing the relationships between F-Luce and FRUM. It is a well-known fact that the Luce model is a special case of the random utility model. In the frame environment, one might wonder whether F-Luce belongs to FRUM. It turns out that it does, which we state in the following (see also \Cref{fig:FLUCE-FRUM}).

\begin{remark}\label{Prop: F-Luce is F-RUM}  Every choice rule $\rho$ with an F-Luce representation has an FRUM representation. 
\end{remark}

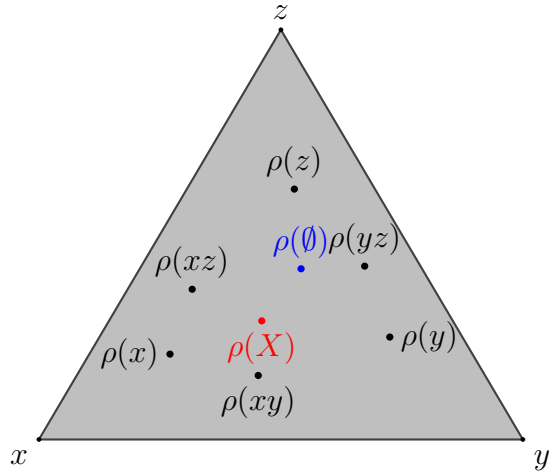
\begin{figure}[h!]
    \centering
\begin{tikzpicture}[line cap=round,line join=round,             3d view={135}{45},        scale=4.5]


\def\ux{1.5}   
\def\uy{2}   
\def\uz{2.5}   

\def\vx{6}   
\def\vy{4}   
\def\vz{3}   
\def\xx{1}   
\def\yy{1}   
\def\zz{1}   

\pgfmathsetmacro\pxe{\ux/(\ux+\uy+\uz)} 
\pgfmathsetmacro\pye{\uy/(\ux+\uy+\uz)}

\pgfmathsetmacro\pxx{(\ux+\vx)/(\ux+\vx+\uy+\uz)} 
\pgfmathsetmacro\pyx{(\uy)/(\ux+\vx+\uy+\uz)}%

\pgfmathsetmacro\pxy{(\ux)/(\ux+\uy+\vy+\uz)} 
\pgfmathsetmacro\pyy{(\uy+\vy)/(\ux+\uy+\vy+\uz)}%

\pgfmathsetmacro\pxz{(\ux)/(\ux+\uy+\vz+\uz)} 
\pgfmathsetmacro\pyz{(\uy)/(\ux+\uy+\vz+\uz)} %

\pgfmathsetmacro\pxxy{(\ux+\vx)/(\ux+\vx+\uy+\vy+\uz)} 
\pgfmathsetmacro\pyxy{(\uy+\vy)/(\ux+\vx+\uy+\vy+\uz)} %

\pgfmathsetmacro\pxxz{(\ux+\vx)/(\ux+\vx+\uy+\vz+\uz)} 
\pgfmathsetmacro\pyxz{(\uy)/(\ux+\vx+\uy+\vz+\uz)} %

\pgfmathsetmacro\pxyz{(\ux)/(\ux+\vy+\uy+\vz+\uz)} 
\pgfmathsetmacro\pyyz{(\uy+\vy)/(\ux+\vy+\uy+\vz+\uz)} %

\pgfmathsetmacro\pxX{(\ux+\vx)/(\ux+\vx+\vy+\uy+\vz+\uz)} 
\pgfmathsetmacro\pyX{(\uy+\vy)/(\ux+\vx+\vy+\uy+\vz+\uz)} %

\pgfmathsetmacro\pze{\zz*(1-\pxe/\xx-\pye/\yy)}
\pgfmathsetmacro\pzx{\zz*(1-\pxx/\xx-\pyx/\yy)}
\pgfmathsetmacro\pzy{\zz*(1-\pxy/\xx-\pyy/\yy)}
\pgfmathsetmacro\pzz{\zz*(1-\pxz/\xx-\pyz/\yy)}
\pgfmathsetmacro\pzxy{\zz*(1-\pxxy/\xx-\pyxy/\yy)}
\pgfmathsetmacro\pzxz{\zz*(1-\pxxz/\xx-\pyxz/\yy)}
\pgfmathsetmacro\pzyz{\zz*(1-\pxyz/\xx-\pyyz/\yy)}
\pgfmathsetmacro\pzX{\zz*(1-\pxX/\xx-\pyX/\yy)}

\coordinate (O)  at (0,0,0);
\coordinate (CE)  at (1/3,1/3,1/3);
\coordinate (A)  at (\xx,0,0);
\coordinate (B)  at (0,\yy,0);
\coordinate (C)  at (0,0,\zz);
\coordinate (Pe)  at (\pxe,\pye,\pze);
\coordinate (Px)  at (\pxx,\pyx,\pzx);
\coordinate (Py)  at (\pxy,\pyy,\pzy);
\coordinate (Pz)  at (\pxz,\pyz,\pzz);
\coordinate (Pxy)  at (\pxxy,\pyxy,\pzxy);
\coordinate (Pxz)  at (\pxxz,\pyxz,\pzxz);
\coordinate (Pyz)  at (\pxyz,\pyyz,\pzyz);
\coordinate (PX)  at (\pxX,\pyX,\pzX);


\coordinate (E1)  at (.44,.257,.303);
\coordinate (E2)  at (.204,.257,.539);
\coordinate (E3)  at (.204,.483,.303);

\draw[blue!50!black,thin,fill=blue!80,fill opacity=0.1] (E1) -- (E2) -- (E3) -- cycle;

\coordinate (F1)  at (.345,.376,.279);
\coordinate (F2)  at (.444,.376,.18);
\coordinate (F6)  at (.345,.287,.368);
\coordinate (F5)  at (.398,.236,.366);
\coordinate (F4)  at (.47,.236,.294);
\coordinate (F3)  at (.47,.35,.18);

\draw[red!50!black,thin,fill=red!50,fill opacity=0.5] (F1) -- (F2) -- (F3) -- (F4) -- (F5) -- (F6) -- cycle;

\draw[gray, dotted, very thin] (1,0,0) -- (Pe);
\draw[gray, dotted, very thin] (1,0,0) -- (Py);
\draw[gray, dotted, very thin] (1,0,0) -- (Pz);
\draw[gray, dotted, very thin] (1,0,0) -- (Pyz);
\draw[gray, dotted, very thin] (0,1,0) -- (Pe);
\draw[gray, dotted, very thin] (0,1,0) -- (Px);
\draw[gray, dotted, very thin] (0,1,0) -- (Pz);
\draw[gray, dotted, very thin] (0,1,0) -- (Pxz);
\draw[gray, dotted, very thin] (0,0,1) -- (Pe);
\draw[gray, dotted, very thin] (0,0,1) -- (Py);
\draw[gray, dotted, very thin] (0,0,1) -- (Px);
\draw[gray, dotted, very thin] (0,0,1) -- (Pxy);
\draw[gray!50!black,thick,fill=gray!50,fill opacity=0.1] (A) -- (B) -- (C) -- cycle;
\fill (A)  circle (.2pt) node[below left] {$x$};
\fill (B)  circle (.2pt) node[below right] {$y$};
\fill (C)  circle (.2pt) node[above] {$z$};

\fill[white] (.5, .5,0)  circle (.0005pt) node[below] {$\rho(x)$};

\fill[black] (Px)  circle (.3pt) node[left] {$\rho(x)$};
\fill[black] (Py)  circle (.3pt) node[right] {$\rho(y)$};
\fill[black] (Pz)  circle (.3pt) node[above] {$\rho(z)$};
\fill[black] (Pxy)  circle (.3pt) node[below] {$\rho(xy)$};
\fill[black] (Pyz)  circle (.3pt) node[above] {$\rho(yz)$};
\fill[black] (Pxz)  circle (.3pt) node[above] {$\rho(xz)$};
\fill[blue] (Pe)  circle (.3pt) node[above] {$\rho(\emptyset)$};
\fill[red] (PX)  circle (.3pt) node[below] {$\rho(X)$};
\end{tikzpicture}
    \caption{F-Luce belongs to FRUM:  The figure illustrates all possible choices for $F=\emptyset$ and $F=X$ for given choices for binary and singleton frames consistent with an F-Luce. Note that $\rho(\cdot,X)$ for F-Luce lies in the red irregular hexagon of F-RUM. Similarly, $\rho(\cdot,\emptyset)$ for F-Luce belongs the blue triangle of F-RUM. This illustrates that F-Luce is a part of FRUM. }
    \label{fig:FLUCE-FRUM}
\end{figure}

Also, if one applies the branch-independent construction to the choice data generated by F-Luce, one can also get a unique branch-independent FRUM representation of F-Luce. 


\subsection{Uniqueness and Identification for F-Luce} 

In this section, we discuss the identification of the model. To achieve this, we first establish that the parameters are unique up to a scaling factor.

\begin{prop}[Uniqueness of F-Luce]\label{Thm: F-Luce Uniqueness}
Let $(u_1,v_1)$ and $(u_2,v_2)$ be two F-Luce representations of the same choice data. Then  $u_1=\alpha u_2$ and $v_1=\alpha v_2$ for some $\alpha>0$.
\end{prop}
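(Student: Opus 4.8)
The plan is to recover both parameter functions, up to a common positive scale, directly from the choice data, exploiting that within any \emph{fixed} frame the common denominator $u(X)+v(F)$ cancels in a ratio of two choice probabilities. Throughout I invoke the standing positivity assumption (Assumption~1), which guarantees $\rho(x,F)>0$ for all $x,F$ so that every ratio below is well defined, and I use that the domain contains $\emptyset$ and all singletons (as already required in this section for the characterization).

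First I would pin down $u$ up to scale using the no-frame data. Evaluating the representation at $F=\emptyset$ gives $\rho(x,\emptyset)=u(x)/u(X)$ for every $x$, so for any two alternatives $x,y$ and each $i=1,2$,
$$\frac{u_i(x)}{u_i(y)}=\frac{\rho(x,\emptyset)}{\rho(y,\emptyset)}.$$
Because the right-hand side is independent of $i$, the vectors $u_1$ and $u_2$ are proportional; setting $\alpha:=u_1(X)/u_2(X)>0$ yields $u_1=\alpha u_2$.

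Next I would recover $v$ on the same scale from the singleton frames. Fix $x$ and any $y\neq x$ and evaluate at $F=\{x\}$: since $x\in\{x\}$ and $y\notin\{x\}$ share the denominator $u(X)+v(x)$, their ratio is
$$\frac{\rho(x,\{x\})}{\rho(y,\{x\})}=\frac{u_i(x)+v_i(x)}{u_i(y)}.$$
Subtracting the ratio $u_i(x)/u_i(y)=\rho(x,\emptyset)/\rho(y,\emptyset)$ from the first step isolates
$$\frac{v_i(x)}{u_i(y)}=\frac{\rho(x,\{x\})}{\rho(y,\{x\})}-\frac{\rho(x,\emptyset)}{\rho(y,\emptyset)},$$
whose right-hand side is again independent of $i$. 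Hence $v_1(x)/u_1(y)=v_2(x)/u_2(y)$, and substituting $u_1(y)=\alpha u_2(y)$ gives $v_1(x)=\alpha v_2(x)$. As $x$ was arbitrary, $v_1=\alpha v_2$ with the \emph{same} $\alpha$, which closes the argument.

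Since each step is an explicit algebraic identity forced by the functional form, there is no genuine analytic obstacle; the care is entirely in the bookkeeping. The one delicate point is verifying that the single scalar $\alpha$ extracted from the no-frame data in the first step is exactly the one that must rescale $v$ in the second step, i.e.\ that the $u$-scale and $v$-scale are locked together rather than being two independent degrees of freedom. This is precisely why I would form the mixed ratio $v_i(x)/u_i(y)$ (a $v$ in the numerator against a $u$ in the denominator) rather than attempt to normalize $v$ on its own: $v$ enters the representation only additively with $u$, so it cannot be calibrated except relative to the already-fixed $u$-scale.
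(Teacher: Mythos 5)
Your proof is correct and follows essentially the same route as the paper's: both first use the no-frame data $\rho(\cdot,\emptyset)$ to obtain $u_1=\alpha u_2$ with $\alpha=u_1(X)/u_2(X)$, then use singleton-frame data $\rho(\cdot,\{x\})$ to force $v_1(x)=\alpha v_2(x)$. The only cosmetic difference is that you cancel the common denominator $u(X)+v(x)$ by taking within-frame ratios, whereas the paper equates the representation formulas for a non-framed alternative $z$ directly across the two parameterizations; the underlying identification argument is identical.
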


We now turn to the recovery of the parameters. Notice that the framing effect for an alternative can be ``observed'' only if the alternative is framed on some occasions. Therefore, to retrieve this parameter, we must have data in which each alternative is framed in some sets. 

\begin{prop}[Identification of F-Luce from Singletons]\label{Prop: F-Luce Identification singleton}
Suppose $\rho$ is F-Luce. Let $\mathcal{D}$ include all frames with $|F|\leq 1$, then we can fully identify the parameters of the model.
\end{prop}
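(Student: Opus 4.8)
The plan is to use the scaling freedom granted by \Cref{Thm: F-Luce Uniqueness} to fix a normalization, and then recover $u$ and $v$ by an explicit constructive computation from the empty frame and the singleton frames. Since any F-Luce representation may be rescaled by a common positive factor $\alpha$, I would normalize by setting $u(X)=1$; by the uniqueness proposition this is exactly the one degree of freedom in the representation, so it suffices to pin down $u$ and $v$ under this normalization. With $u(X)=1$, the empty-frame data immediately give, for every $x\in X$,
\[
\rho(x,\emptyset)=\frac{u(x)}{u(X)}=u(x),
\]
so that $u$ is recovered at once as $u(x)=\rho(x,\emptyset)$.

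Next I would recover $v$ from the singleton frames. Fix $z\in X$ and any $x\neq z$. Since $x$ is unframed in the frame $\{z\}$, the representation gives $\rho(x,\{z\})=u(x)/\bigl(u(X)+v(z)\bigr)=u(x)/\bigl(1+v(z)\bigr)$, whereas $\rho(x,\emptyset)=u(x)$. The positivity assumption in force throughout this section guarantees $\rho(x,\{z\})>0$, so I may take the ratio to obtain
\[
\frac{\rho(x,\emptyset)}{\rho(x,\{z\})}=1+v(z),\qquad\text{hence}\qquad v(z)=\frac{\rho(x,\emptyset)}{\rho(x,\{z\})}-1 .
\]
This determines $v(z)$ for every $z\in X$, completing the identification of the parameters up to the common scale.

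The only point requiring any care — and hence the ``hard part,'' though it is mild — is to check that the recovered $v(z)$ is well defined, i.e.\ independent of the unframed alternative $x\neq z$ chosen. This is where the hypothesis that $\rho$ is genuinely F-Luce enters: for any two unframed alternatives $x,x'\neq z$, the identities $\rho(x,\{z\})=u(x)/(1+v(z))$ and $\rho(x',\{z\})=u(x')/(1+v(z))$ force the two ratios to coincide, which is precisely \nameref{Axiom:Strong Luce-IIA} applied to the pair $x,x'$ across $\emptyset$ and $\{z\}$. One may optionally cross-check against the own-choice datum $\rho(z,\{z\})=(u(z)+v(z))/(1+v(z))$, which is then automatically consistent with the already-recovered values (and, since positivity yields $\rho(z,\{z\})<1$, could alternatively be inverted for $v(z)$). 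The essential structural observation underlying the whole argument is simply that a singleton frame $\{z\}$ exposes, through its unframed alternatives, the normalizing constant $u(X)+v(z)$, and it is this that cleanly separates $v(z)$ from $u$; everything else is an explicit calculation with no substantive obstacle.
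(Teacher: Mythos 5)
Your proposal is correct and follows the same overall constructive strategy as the paper's proof: recover $u$ from the empty frame via $u(x)=\rho(x,\emptyset)$ (implicitly normalizing $u(X)=1$), recover $v$ from the singleton-frame data, and let the scaling freedom of \Cref{Thm: F-Luce Uniqueness} absorb the normalization. The one substantive difference is which equation you invert to obtain $v(z)$: the paper solves the \emph{own-choice} equation $\rho(z,\{z\})=\frac{u(z)+v(z)}{u(X)+v(z)}$, which yields $v(z)=\rho(z,\{z\})\frac{1-\rho(z,\emptyset)}{1-\rho(z,\{z\})}-\rho(z,\emptyset)$, whereas you read the common denominator $u(X)+v(z)$ off the attenuation of any \emph{unframed} alternative, $v(z)=\frac{\rho(x,\emptyset)}{\rho(x,\{z\})}-1$ for $x\neq z$. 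Your route buys two small simplifications: the well-definedness of $v(z)$ (independence of the reference alternative $x$) is exactly \nameref{Axiom:Strong Luce-IIA} applied across $\emptyset$ and $\{z\}$, and since you derive the formulas directly from an assumed representation $(u_0,v_0)$—so the recovered parameters are literally $(u_0,v_0)/u_0(X)$—no separate verification that the constructed parameters reproduce the data is required, a step the paper carries out explicitly using \nameref{Axiom:Strong Luce-IIA}. What the paper's formula buys is that it involves only alternative $z$'s own probabilities (no reference alternative), and it is precisely the expression that reappears verbatim for general frames in \Cref{Prop: F-Luce Identification}; your ratio formula also generalizes, but requires some alternative outside the frame, i.e., $F\neq X$, just as it requires $|X|\geq 2$ here (innocuous given the section's standing assumptions).
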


To see the applicability of this proposition, let $X=\{x,y,z\}$. Then, Proposition~\ref{Prop: F-Luce Identification singleton} requires that we only need four framed sets, i.e., $\mathcal{D}=\{\emptyset,\{x\},\{y\},\{z\}\}$. Moreover, to illustrate how identification here works, in the proof, we first recover $u$ from the no-frame scenario, i.e., we let $u(x):=\rho(x,\emptyset)$ for every $x \in X$. Moreover, we define
$$v(x):=\rho(x,\{x\})\frac{1-\rho(x,\emptyset)}{1-\rho(x,\{x\})}-\rho(x,\emptyset)$$
Here, recovering $v(x)$ requires the knowledge of a choice scenario where $x$ is framed, which is given by the choice probability under the framed set $\{x\}$. Moreover, since the parameters are unique up to a scaling factor, we rescale $\rho(x,\{x\})$ according to the ratio of the choice probability of other alternatives across the no-frame and framed $x$. Then, we can fully identify the parameters.

Moreover, note that it is possible to obtain a similar result by replacing the singleton frame data with some doubleton data where each alternative needs to be framed at least once in one of the doubletons. This can potentially further reduce the data requirements for identification. For example, suppose that $X=\{x,y,z,w\}$ and $\mathcal{D}=\{\emptyset,\{x,y\}, \{z,w\} \}$. Then, similarly to before, we let $u(x):=\rho(x,\emptyset)$. For $v$'s, we utilize the doubleton data so that $v(x):=\rho(x,\{x,y\})\frac{1-\rho(x,\emptyset)-\rho(y,\emptyset)}{1-\rho(x,\{x,y\})-\rho(y,\{x,y\})}-\rho(x,\emptyset)$, which shares a similar idea as the construction of $v$ above. In general, one can relax the condition: as long as the dataset $\mathcal{D}$ contains sufficient frames that cover the framed status of each alternative, it will help us to fully identify the parameter.

\begin{prop}[Identification of F-Luce]\label{Prop: F-Luce Identification}
Suppose $\rho$ is F-Luce. If $\emptyset \in \mathcal{D}$ and there exists $F_1,...,F_n \in \mathcal{D}$ where $F_i \neq X$ for all $i$ and $\cup_{i=1}^n F_i = X$, we can fully identify the parameters of the model. In particular, we can let $u(x)=\rho(x,\emptyset)$ and
\begin{equation*}
    v(x)=\rho(x,F)\frac{1-\rho(F,\emptyset)}{1-\rho(F,F)} - u(x)
\end{equation*}
for some $F \in \mathcal{D}$ such that $x \in F$.
\end{prop}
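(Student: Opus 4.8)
The plan is to exploit \Cref{Thm: F-Luce Uniqueness}, which guarantees that the F-Luce parameters are pinned down up to a common positive scalar; it therefore suffices to recover one specific representation explicitly from the data. I normalize the scale by setting $u(X)=1$. The empty frame then delivers $u$ directly: since no alternative is framed under $\emptyset$, every $x$ falls in the ``otherwise'' branch of the representation, so $\rho(x,\emptyset)=u(x)/u(X)=u(x)$. This is exactly why the statement sets $u(x)=\rho(x,\emptyset)$, and because $\emptyset\in\mathcal{D}$ this quantity is always observed.

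It remains to recover $v$. Fix an alternative $x$. Because $\bigcup_i F_i=X$, there is some admissible $F=F_i$ with $x\in F$, and by hypothesis $F\neq X$ while, by the maintained positivity assumption, $\rho(\cdot,F)>0$. The first step is to recover the frame-specific denominator $D:=u(X)+v(F)$. Summing the representation over $F$ and over its complement and using the shorthand $\rho(A,G)=\sum_{y\in A}\rho(y,G)$, I get $\rho(F,F)=\bigl(u(F)+v(F)\bigr)/D$ and $1-\rho(F,F)=\rho(X\setminus F,F)=u(X\setminus F)/D$. Since $u(X\setminus F)=u(X)-u(F)=1-\rho(F,\emptyset)$ by the already-recovered $u$, combining the two identities yields $D=(1-\rho(F,\emptyset))/(1-\rho(F,F))$, which is precisely the multiplier appearing in the proposition.

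With $D$ in hand the recovery of $v(x)$ is immediate: for the framed alternative $x\in F$ the representation gives $u(x)+v(x)=\rho(x,F)\,D$, hence $v(x)=\rho(x,F)\,D-u(x)=\rho(x,F)\frac{1-\rho(F,\emptyset)}{1-\rho(F,F)}-u(x)$, matching the stated formula. The covering condition $\bigcup_i F_i=X$ is what ensures that every $x$ lies in at least one admissible frame, so $v$ is recovered on all of $X$; the restriction $F_i\neq X$ together with positivity is what guarantees $1-\rho(F,F)=\rho(X\setminus F,F)>0$, so $D$ is well defined.

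The only conceptual subtlety I anticipate — rather than any individual calculation — is well-definedness: a given $x$ may belong to several admissible frames $F_i$, and the formula must return the same $v(x)$ for each. This is resolved purely by the hypothesis that $\rho$ genuinely admits an F-Luce representation: the algebra above recovers the true normalized $v(x)$ whichever such $F$ is chosen, so the recovered value is independent of that choice. Invoking \Cref{Thm: F-Luce Uniqueness} then upgrades this explicit recovery of one representation to identification of the parameters up to scale, completing the argument.
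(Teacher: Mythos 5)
Your proof is correct and takes essentially the same approach as the paper: recover $u$ from the empty frame (implicitly normalizing $u(X)=1$), then back out the frame denominator via the ratio $\frac{1-\rho(F,\emptyset)}{1-\rho(F,F)}$ and set $v(x)=\rho(x,F)\frac{1-\rho(F,\emptyset)}{1-\rho(F,F)}-u(x)$, which is exactly the construction the paper uses in the singleton case and then declares ``analogous'' for this proposition. Your explicit handling of well-definedness when $x$ lies in several covering frames, and the appeal to \Cref{Thm: F-Luce Uniqueness} to conclude identification up to scale, are sound and if anything more carefully spelled out than the paper's own terse proof.
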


Therefore, it demonstrates the broad applicability of the F-Luce model, where the unique parameters can be identified using potentially limited choice data specifications.

\section{Final Remarks}

In this article, we employ the revealed preference method to develop a frame-dependent decision-making model that captures how an agent's choices are influenced by framing. We introduce a novel choice object and examine the implications of the utility maximization paradigm in both deterministic and probabilistic settings. These models hold promise for economic contexts where the effects of framing are of interest.

\textsc{Future Directions:} Several avenues exist for extending the present work. First, while the utility maximization paradigm provides a robust ``benchmark'' for understanding choice, it often fails to account for certain behavioral and psychological anomalies (e.g., limited consideration, status quo bias, behavioral search, satisficing, temptation, and so on) observed in empirical evidence. To address this, our framework could be expanded to accommodate these irregularities both in deterministic and probabilistic domains. Second, a generalization of the F-Luce model warrants exploration. Although the classical Luce model is widely recognized for its simplicity, it is often deemed ``too restrictive.'' Consequently, alternative parametric models, such as Nested Logit, Dogit, Weighted Linear Model, and Probit model, have been proposed. Similarly, alternative formulations of the F-Luce model could prove to be fruitful subjects of investigation.

\textsc{Broader Applications:} Our model is inspired by the concept of framing, which concerns how the presentation of an option or product influences consumer perception. However, the framework can be extended to scenarios where a product undergoes an actual enhancement due to its framing. For instance, one can apply it to capture consumer impatience through the lens of shipping policies: $x \in F$ can denote an option that benefits from a same-day shipping policy, whereas $x \notin F$ indicates an option subject to standard shipping. In this context, $v(x)$ quantifies the utility increase from faster delivery. Similarly, the framework can be applied to the study of discounts: $x \in F$ signifies a product offered at 15\% off, while $x \notin F$ means the product is at its regular price, with $v(x)$ capturing the perceived utility gain from the discount, capturing any non-quasilinear price effect on utility. While the framework is particularly well-suited for addressing framing-related issues, its versatility suggests broader applicability across various domains. By offering insights into diverse economic phenomena, it opens up new possibilities for research and analysis.

\newpage

\singlespacing

\section{Appendix}\label{Appendix: Supplement material}
In this Appendix, we will provide proofs of the results provided in the main text. Moreover, in the proof, for ease of illustration, we focus on the ordinal version of the model . In particular, we translate $u$ and $v$ into a preference ordering (i.e., a linear order) $\succ$ on $X \cup X^*$, where $X^*$ contains typical element $x^*$ which represents the framed version of $x$. Since $v \geq 0$, for each alternative $x$, it must be that $x^* \succ x$.

\begin{defn}
A deterministic choice rule $c$ has a frame-dependent preference  representation on $\mathcal{D}$ if there exists a preference $\succ$ on $X \cup X^*$  such that  \begin{align*}
    c(R)=\{x \in X| \text{ either }& x \in F \text{ and } x^* \succ \sigma   \text{ for every }  \sigma \in F^* \cup F^c \setminus x^*, \\
    \text{ or } & x \notin F \text{ and } x \succ \sigma \ \text{ for every } \sigma \in F^* \cup F^c \setminus x\}
\end{align*} for all $R \in \mathcal{D}$.
\end{defn}

Note that each $\succ$ over $X\cup X^*$ can be mapped into a pair of utility function $(u,v)$, and vice versa. To give an example, suppose we have $x^* \succ x \succ y^* \succ y$. Then, any $(u,v)$ with $(u+v)(x) > u(x) >(u+v)(y)>u(y)$ will produce the same choice function.

\begin{center}
    \subsubsection*{\textbf{Proof of Theorem~\ref{Thm: Deterministic}}}
\end{center}

\begin{proof}
We would prove the theorem using the ordinal version of the model as discussed in the beginning of the Appendix. We use $\sigma$ as an arbitrary element of $X \cup X^*$.   We define a binary relation $P$ over  $X \cup X^*$. In particular, we let \begin{align*}
    (x^*,y^*) \in P &\text{ if } c(F)=x \text{ and } x,y \in F\\
    (x,y^*) \in P &\text{ if } c(F)=x \text{ and } x \notin F, y \in F \\
    (x^*,y) \in P & \text{ if } c(F)=x \text{ and } x \in F , y \notin F\\
    (x,y) \in P &\text{ if } c(F)=x \text{ and } x,y \notin F
\end{align*}



\begin{claim}\label{claim:P-asymmetric}
 $P$ is asymmetric.
\end{claim}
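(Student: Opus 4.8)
The plan is to argue by contradiction: suppose $P$ fails asymmetry, so there are distinct $\sigma,\tau\in X\cup X^*$ with both $(\sigma,\tau)\in P$ and $(\tau,\sigma)\in P$. I first record a structural feature of the definition of $P$: every pair in $P$ links the representation of the chosen alternative to the representation of a \emph{different} available alternative, so $P$ never relates $a$ and $a^*$ for a single underlying $a$. Hence $\sigma$ and $\tau$ must come from two distinct alternatives; write $\sigma\in\{a,a^*\}$ and $\tau\in\{b,b^*\}$ with $a\neq b$. By definition there is a frame $F$ witnessing $(\sigma,\tau)\in P$ (so $c(F)=a$, with $\sigma=a^*$ exactly when $a\in F$ and $\tau=b^*$ exactly when $b\in F$) and a frame $G$ witnessing $(\tau,\sigma)\in P$ (so $c(G)=b$, with $\tau=b^*$ exactly when $b\in G$ and $\sigma=a^*$ exactly when $a\in G$).

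The crux is that the starred/unstarred label carried by $\sigma$ and $\tau$ encodes the framing status of the corresponding alternative, and the \emph{same} labels appear in both witnessing frames. Concretely, since $\sigma$ is $a$'s representation in both $F$ and $G$, I get $a\in F\iff\sigma=a^*\iff a\in G$; likewise $b\in F\iff\tau=b^*\iff b\in G$. Thus $a$ and $b$ have identical framing status in $F$ and in $G$, so the two restrictions $F\cap\{a,b\}$ and $G\cap\{a,b\}$ coincide.

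To turn this into a contradiction I contract both frames using IIFA. Put $F'=F\cap\{a,b\}$ and $G'=G\cap\{a,b\}$. Since $c(F)=a\in\{a,b\}$, we have $c(F)\notin F\setminus F'$ and $F'\subseteq F$, so IIFA gives $c(F')=c(F)=a$; symmetrically $c(G')=c(G)=b$. But $F'=G'$ from the previous paragraph, whence $a=c(F')=c(G')=b$, contradicting $a\neq b$. All sets involved have size at most two and therefore lie in $\mathcal{D}$ (which contains every frame with $|F|\leq 3$), so every invocation of $c$ and of IIFA is legitimate.

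The main obstacle is conceptual rather than computational: one must notice that the label ``$\ast$'' versus no ``$\ast$'' is exactly a record of whether the alternative was framed, so that asserting both $(\sigma,\tau)\in P$ and $(\tau,\sigma)\in P$ silently forces the two witnessing frames to agree on the status of $a$ and $b$. Once this is seen, IIFA does the rest by collapsing the (possibly large) witnessing frames to a common frame on at most two alternatives, on which $c$ must return a single value. A minor point to dispatch first is the degenerate possibility $\sigma,\tau\in\{a,a^*\}$, which the structure of the definition of $P$ rules out.
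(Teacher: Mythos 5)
Your proof is correct and follows essentially the same route as the paper: you observe that the starred/unstarred labels force the two witnessing frames to agree on the framing status of the two alternatives, and then contract both frames via IIFA to a common small frame, forcing the choice function to return two different values there. The only difference is presentational---where the paper splits into four cases according to framing status and applies IIFA(1) and IIFA(2) separately (contracting to $\{x,y\}$, $\{x\}$, $\{y\}$, or $\emptyset$), you apply the full IIFA once to the common contraction $F\cap\{a,b\}=G\cap\{a,b\}$, which handles all four cases uniformly.
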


\begin{proof}
For two distinct $\sigma_1$ and $\sigma_2$, if $(\sigma_1,\sigma_2) \in P$ then $(\sigma_2,\sigma_1) \notin P$. Assume not,  there exists $F_1$ and $F_2$ such that $x=c(F_1)$ and $y=c(F_2)$. There are four cases depending on whether $x$ or $y$ is framed. If $x$ is framed, then $\sigma_1=x^*$, $x$ otherwise. Similarly, $\sigma_2$ is $y$ if it is not framed, $y^*$ otherwise. 

First we assume that $x,y \in F_1 \cap F_2$, then by \nameref{Axiom:IIFA(1)}, we must have $x=c( \{x,y\})=y$, a contradiction. Now assume  $x \in F_1 \cap F_2$ and $y \notin F_1 \cup F_2$, then \nameref{Axiom:IIFA(1)} implies  $x=c( \{x\})$ whereas \nameref{Axiom:IIFA(2)} implies  $y=c( \{x\})$, a contradiction. The third case where $x \notin F_1 \cup F_2$ and $y \in F_1 \cap F_2$ is almost identical to the second case. Finally, assume $x,y \notin F_1 \cup F_2$. Then applying \nameref{Axiom:IIFA(2)} twice yields  $x=c( \emptyset)=y$, a contradiction. Therefore, $P$ is asymmetric.
\end{proof}

Let $a$ be $c(\emptyset)$. Given the definition $P$, we reveal that $a$ is preferred to any alternative in $X \setminus a$. Moreover, we cannot reveal the relative ranking of these alternatives in $X \setminus a$. Formally, for $x,y \neq a$, $a P x $ and $a P y$ then $(x,y) \notin P$ or $(y,x) \notin P$. Notice that $a$ might be revealed to be better than alternatives in $X^*$. Especially, if $a=c(F)$, then $a P y^* $ for all $y \in F$. The same intuition applies to these alternatives too. Hence,  $P$ is incomplete for the lower contour set of $a$.

On the other hand, $P$ is complete in the upper contour set of $a$. To show this, assume $x^*Pa$ and $y^*Pa$ for $x,y \neq a$. In other words, there exist $F \ni x$ and $F' \ni y$ such that $x=c(F)$ and $y=c(F')$ where $a \notin F \cup F'$. If $x \in F'$ or $y \in F$, we would reveal $y^* P x^*$ or $x^* P y^*$, respectively. Assume not. Then consider $\{x,y\}$ as the framed set. First,  $c(\{x,y\}) \in \{x,y\}$. If not, $a=c(\{x\})$ by \nameref{Axiom:IIFA(2)}. Then by \nameref{Axiom:IIFA(1)}, $x$ cannot be chosen from $c(F)$. Hence, either $y^* P x^*$ or $x^* P y^*$, $P$ is complete for the upper contour set of $a$.

\begin{claim}\label{claim:RP-transitive}If $x^* P y^*$ and $x^* \neq a$ then $x^*Pa$.\end{claim}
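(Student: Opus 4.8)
The plan is to reduce both sides of the claim to the definition of $P$ and then pin everything down on the singleton frame $\{x\}$. Unpacking the hypothesis $x^* P y^*$, there is a frame $F \in \mathcal{D}$ with $x,y \in F$ and $c(F)=x$; the element $y$ plays no further role, so all that matters is that $x$ is framed and chosen in \emph{some} frame. On the target side, by the third clause in the definition of $P$, proving $x^* P a$ amounts to exhibiting a frame $G \in \mathcal{D}$ with $x \in G$, $a \notin G$, and $c(G)=x$.

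I would take $G=\{x\}$, which lies in $\mathcal{D}$ since $\mathcal{D}$ contains every frame with $|F|\leq 3$. First I would argue $c(\{x\})=x$: because $\{x\}\subseteq F$ and $c(F)=x \in \{x\}$, \nameref{Axiom:IIFA(1)} gives $c(\{x\})=c(F)=x$. The hypothesis $x\neq a$ (the intended reading of $x^*\neq a$) then guarantees $a \notin \{x\}$, while trivially $x \in \{x\}$; hence the frame $\{x\}$ witnesses $x^* P a$ through the third clause of the definition of $P$.

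The argument has essentially no obstacle once the pieces are matched. The two points worth care are: that $\{x\}$ is an admissible frame, which is exactly where the standing assumption on $\mathcal{D}$ is used; and that $x \neq a$ is precisely what places $a$ outside the frame $\{x\}$, turning the chosen-and-framed fact $c(\{x\})=x$ into the cross-type comparison $x^* P a$ rather than merely $x^* P a^*$. I would note that an equivalent route avoids the singleton --- if $a \in F$ one strips $a$ from $F$ and applies \nameref{Axiom:IIFA(1)} to $F\setminus\{a\}$ --- but collapsing directly to $\{x\}$ is cleaner and sidesteps any concern about the size of the witnessing frame $F$.
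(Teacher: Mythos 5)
Your proof is correct and follows essentially the same route as the paper's: extract the frame $F$ witnessing $x^* P y^*$, collapse to the singleton $\{x\} \subseteq F$ via \nameref{Axiom:IIFA(1)} to get $c(\{x\})=x$, and then use $x \neq a$ to place $a$ outside $\{x\}$ and conclude $x^* P a$ from the definition of $P$. Your write-up is just a more explicit version of the paper's two-line argument, spelling out the domain admissibility of $\{x\}$ and the role of $x^*\neq a$.
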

\begin{proof}
$x^* P y^*$ implies that there exists $F$ such that $x,y \in F$ and $x=c(F)$. Since $\{x\} \subseteq F$, \nameref{Axiom:IIFA(1)} yields that $c(\{x\})=x$ implies  $x^*Pa$.
\end{proof}

Note that our definition of $P$ does not relate $a$ and $a^*$ since it requires two distinct alternatives. However, it is possible that we can conclude that  $a^* P a$. To see this, assume $a=c(F)$ where $a,x \in F$ and $x=c(F')$ where $a \notin F'$. The former implies $a^* P x^* $ and the former yields $x^* P a$. Hence we infer that $a$ is ranked strictly higher when framed. 

\begin{claim}\label{claim:transitive} $P$ is transitive in the strict upper contour set set of $a$.\end{claim}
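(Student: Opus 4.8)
The plan is to show that the restriction of $P$ to the strict upper contour set of $a$ is transitive; combined with the asymmetry from \Cref{claim:P-asymmetric} and the completeness already established there, this will make $P$ a strict linear order on that set. The first observation is that every element lying strictly above $a$ must be a \emph{framed} version $w^*$: since $a=c(\emptyset)$ is revealed preferred to every bare alternative $x\in X\setminus a$, asymmetry rules out $xPa$, so no unframed alternative can sit above $a$. Hence it suffices to take framed elements $x^*,y^*,z^*$ in the upper contour set with $x^*Py^*$ and $y^*Pz^*$ and prove $x^*Pz^*$. The underlying alternatives $x,y,z$ are then pairwise distinct: a coincidence $x=y$ or $y=z$ would give $\sigma P\sigma$ (contradicting irreflexivity, which follows from asymmetry), and $x=z$ would give both $x^*Py^*$ and $y^*Px^*$, again contradicting \Cref{claim:P-asymmetric}.

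Next I would extract two ``contraction'' facts. From $x^*Py^*$ there is a frame $F$ with $x,y\in F$ and $c(F)=x$; since $c(F)=x\in\{x,y\}\subseteq F$, \nameref{Axiom:IIFA(1)} yields $c(\{x,y\})=x$. Symmetrically, $y^*Pz^*$ gives $c(\{y,z\})=y$. The crux is then to evaluate $w:=c(\{x,y,z\})$ and show $w=x$, proceeding by elimination. If $w\notin\{x,y,z\}$, then \nameref{Axiom:IIFA(2)} applied with the subframe $\{x,y\}$ forces $c(\{x,y\})=w\neq x$, a contradiction. If $w=y$, then \nameref{Axiom:IIFA(1)} with $F'=\{x,y\}$ gives $c(\{x,y\})=y\neq x$, a contradiction. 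If $w=z$, then \nameref{Axiom:IIFA(1)} with $F'=\{y,z\}$ gives $c(\{y,z\})=z\neq y$, contradicting the second contraction fact. The only survivor is $w=x$, and since $x,z\in\{x,y,z\}$ with $c(\{x,y,z\})=x$, the definition of $P$ delivers $x^*Pz^*$.

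The point that demands care — and the most likely source of a gap — is the bookkeeping around $a^*$: the upper contour set may contain $a^*$ (the preceding discussion infers $a^*Pa$), so one of $x^*,y^*,z^*$ could have $a$ as its underlying alternative, even though $P$ is only defined on distinct pairs. I would verify that the three-element argument is insensitive to this, since the frame $\{x,y,z\}$ and the two contraction facts are built solely from the witnessing frames and never invoke $a\notin\{x,y,z\}$; substituting $a$ for, say, $y$ leaves every step of the case analysis intact. I expect no further obstacle, as the whole argument uses only \nameref{Axiom:IIFA(1)}, \nameref{Axiom:IIFA(2)}, and asymmetry, all of which are already available.
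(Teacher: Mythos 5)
Your proof is correct and follows essentially the same route as the paper's: contract the two hypotheses to $c(\{x,y\})=x$ and $c(\{y,z\})=y$ via \nameref{Axiom:IIFA(1)}, then eliminate all candidates for $c(\{x,y,z\})$ other than $x$ using \nameref{Axiom:IIFA(2)} (for choices outside the frame) and \nameref{Axiom:IIFA(1)} (for $y$ and $z$), and conclude $x^*Pz^*$ from the definition of $P$. Your additional bookkeeping — that the upper contour set of $a$ contains only framed elements, that $x,y,z$ are pairwise distinct, and that the argument is insensitive to whether one of them equals $a$ — is sound and merely makes explicit what the paper leaves implicit.
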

\begin{proof} Assume that $x^* P y^* P z^*$ for three distinct $x,y$, and $z$. There exists $F$ and $F'$ such that (i) $x=c(F)$ and $y=c(F')$, and ii) $\{x,y\} \subseteq F$ and $\{y,z\} \subseteq F'$. \nameref{Axiom:IIFA(1)} implies that $c(\{x,y\})=x$ and $c(\{y,z\})=y$. Then we consider $c(\{x,y,z\})$. We must have $c(\{x,y,z\})\in \{x,y,z\}$ by \nameref{Axiom:IIFA(2)}. It cannot be $z$ since \nameref{Axiom:IIFA(1)} implies $z=c(\{y,z\})$. It cannot be $y$ since \nameref{Axiom:IIFA(1)}  implies $y=c(\{x,y\})$ which is a contradiction. Hence, we have $x=c(\{x,y,z\})$, which gives $x^* Pz^*$.\end{proof}

Take any completion of the transitive closure of $P$ such that $x^*P x$ for $x$ and call it $\succ$. Note that the transitive closure would immediately imply  $x^*P x$ if there exists $F\neq \emptyset$ such that $c(F)=x$ by using Claim~\ref{claim:RP-transitive} and the definition of $P$ by the data $c(\emptyset)=a$. For $x$ that is never chosen, $x^*P x$ is trivial.

It is routine to show that the representation holds by choosing $u$ and $v$ appropriately. Note that $v\geq 0$ by the fact of $x^* \succ x$.  \end{proof}

\begin{center}
    \subsubsection*{ \textbf{Proof of Proposition~\ref{Thm: PFdeterministic uniqueness}}}
\end{center}

\begin{proof}
Suppose $(u_1,v_1)$ and $(u_2,v_2)$ represent the same choice rule. For i), note that it is immediate that $\argmax_{x \in X} u_1(x)=c(\emptyset)=\argmax_{x \in X} u_2(x)$. We call it $a$. For ii), suppose not, there exists $b$ such that $u_1(a)>u_1(b)+v_1(b)$ and $u_2(b)+v_2(b)>u_2(a)$. Then, we know that $c_{(u_1,v_1)}(\{b\})=a \neq b=c_{(u_2,v_2)}(\{b\})$. Contradiction arises. For iii), suppose not, there exists $x,y \in X$  such that $u_1(x)+v_1(x) > u_1(y)+v_1(y)>u_1(a)$ but  $  u_2(y)+v_2(y)>u_2(x)+v_2(x)>u_2(a)$. Then, we have $c_{(u_1,v_1)}(\{x,y\})=x \neq y =c_{(u_2,v_2)}(\{x,y\})$. Contradiction arises.
\end{proof}

\begin{center}
   \subsubsection*{  \textbf{Proof of Lemma~\ref{Thm: Falmagne 3 general}}}
\end{center}

To prove Theorem 5, we need first to prove the following Lemma.

\begin{lem}\label{Thm: Falmagne 3 general}
For  $F \subseteq X$ and choice rule $\rho$, 
$$\sum_{a \in F} q_\rho(a, F) + \sum_{a \notin F} y_\rho(a, F) = \sum_{b \notin F } q_\rho(b, F \cup {b})$$
\end{lem}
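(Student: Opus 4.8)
The plan is to treat the statement as a purely algebraic identity in the numbers $\{\rho(a,B)\}$, using nothing about $\rho$ beyond the normalization $\sum_{a\in X}\rho(a,B)=1$ for each $B$. First I would expand the two families on the left-hand side into a single signed sum indexed by the supersets $B\supseteq F$. Writing out the definitions, $\sum_{a\in F}q_\rho(a,F)$ collects $\rho(a,B)$ for every $a\in F$ and every $B\supseteq F$, while $\sum_{a\notin F}y_\rho(a,F)$ collects $\rho(a,B)$ for every $a\notin F$ and every $B\supseteq F$ with $a\notin B$; in both families the coefficient attached to $\rho(a,B)$ is $(-1)^{|B\setminus F|}$.

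The key step is to fix $B\supseteq F$ and ask which alternatives $a$ contribute $\rho(a,B)$ to this combined sum. From the $q$-family we get every $a\in F$, and from the $y$-family we get every $a\notin B$ (recall $F\subseteq B$, so $a\notin B$ already forces $a\notin F$). These two groups are disjoint and their union is exactly $X\setminus(B\setminus F)$. Using the partition $X=F\sqcup(B\setminus F)\sqcup(X\setminus B)$ together with $\sum_{a\in X}\rho(a,B)=1$, the inner sum over the contributing alternatives equals $1-\sum_{a\in B\setminus F}\rho(a,B)$. Hence the left-hand side becomes $\sum_{B\supseteq F}(-1)^{|B\setminus F|}\big(1-\sum_{a\in B\setminus F}\rho(a,B)\big)$.

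I would then split this into two pieces. The constant piece $\sum_{B\supseteq F}(-1)^{|B\setminus F|}$ is a binomial sum $\sum_{C\subseteq X\setminus F}(-1)^{|C|}=0$, which vanishes whenever $X\setminus F\neq\emptyset$, i.e.\ for $F\subsetneq X$ (the relevant case for the flow interpretation). For the remaining piece, swapping the order of summation rewrites it as a sum over $a\notin F$ and $B\supseteq F\cup\{a\}$; since $a\in B\setminus F$ we have $|B\setminus F|=|B\setminus(F\cup\{a\})|+1$, so the sign flips and each inner sum becomes precisely $\sum_{B\supseteq F\cup\{a\}}(-1)^{|B\setminus(F\cup\{a\})|}\rho(a,B)=q_\rho(a,F\cup\{a\})$. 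Summing over $a\notin F$ yields $\sum_{b\notin F}q_\rho(b,F\cup\{b\})$, the right-hand side.

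The computation is mostly bookkeeping, so the main thing to get right will be the combining step: correctly identifying that the two families leave out exactly the alternatives in $B\setminus F$, which is what lets the normalization $\sum_{a}\rho(a,B)=1$ do the work. The one genuine caveat is the boundary node $F=X$, where the binomial sum no longer vanishes; there the left side equals $1$ while the right side is an empty sum, so the identity should be read for $F\subsetneq X$, matching its role as the inflow/outflow balance at the internal nodes of the Hasse diagram.
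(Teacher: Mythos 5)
Your proof is correct, but it takes a genuinely different route from the paper's. The paper proves the lemma by strong downward induction on $|F|$: the base case $F = X\setminus\{x\}$ is computed directly, and the inductive step uses the recursive identities $q(a,F)=\rho(a,F)-\sum_{B\supset F}q(a,B)$ and $y(a,F)=\rho(a,F)-\sum_{a\notin B\supset F}y(a,B)$, then repeatedly rearranges and applies the induction hypothesis to collapse a telescoping double sum down to $\sum_{a\in X} q(a,X)=1$. You instead give a one-shot, non-inductive computation: group the combined signed sum by the superset $B$, note that the contributing alternatives are exactly $X\setminus(B\setminus F)$ so the normalization $\sum_{a\in X}\rho(a,B)=1$ applies, and then split off a vanishing binomial sum while the remainder, after swapping the order of summation and absorbing one sign, is precisely $\sum_{b\notin F}q_\rho(b,F\cup\{b\})$. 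Your argument is shorter and more transparent about where each hypothesis is used; the paper's inductive formulation has the side benefit of mirroring the recursive constructions that reappear in the sufficiency proof of the FRUM characterization (the construction of $G$ and of $\hat{\mu}$). A genuine point in your favor: you correctly flag that the identity fails at the boundary $F=X$, where the left side equals $1$ and the right side is an empty sum; the paper states the lemma ``for $F\subseteq X$'' but its own induction only ever treats proper subsets (the base case is $|F|=|X|-1$ and the step moves downward), so your restriction to $F\subsetneq X$ is the correct reading and catches an imprecision in the paper's statement.
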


\begin{proof}
We prove by strong induction by ``stepping down.'' For $F=X \setminus \{x\}$, we have RHS $= q(x,X)= \rho(x,X)$. Then,
\begin{align*}
\text{LHS}  &= \sum_{a \in X \setminus \{x\}} q(a, X \setminus \{x\}) +  \sum_{a \notin X \setminus \{x\}} y(a, X \setminus \{x\}) \\
&= \sum_{a \in X \setminus \{x\}} q(a, X \setminus \{x\}) + \rho(x, X \setminus \{x\}) \\
&=  \sum_{a \in X \setminus \{x\}} \bigg[ \rho (a, X \setminus \{x\}) - \rho(a,X) \bigg] + \rho(x, X \setminus \{x\}) \\
&= \sum_{a \in X }  \rho (a, X \setminus \{x\}) - \sum_{a \in X \setminus \{x\}} \rho(a,X) \\
&= 1 - \sum_{a \in X \setminus \{x\}} \rho(a,X) = \rho(x,X)
\end{align*}
Hence, $LHS=RHS$. The first step is proven. Suppose that equality holds for size of $k+1,k+2,...N-1$. Let $|F|=k$, 
\begin{align*}
\text{LHS}-\text{RHS}
    =&\sum_{a \in F} q(a, F) + \sum_{a \notin F} y(a, F) - \sum_{b \notin F } q(b, F \cup {b})\\
 = &\sum_{a \in F} \bigg(\rho(a,F)- \sum_{B \supset F} q(a,B)\bigg) +  \sum_{a \notin F} \bigg(\rho(a,F)- \sum_{a \notin B \supset F} y(a,B)\bigg) - \sum_{b \notin F } q(b, F \cup {b})\\
 =& \sum_{a \in X} \rho(a,F) -\bigg[ \sum_{a \in F} \sum_{B \supset F} q(a,B) +  \sum_{a \notin F} \sum_{a \notin B \supset F} y(a,B) + \sum_{b \notin F } q(b, F \cup {b}) \bigg] 
 \end{align*}
 
Since $\sum\limits_{a \in X} \rho(a,F)=1$, it remains to show that the latter term in the above expression equals 1. We denote $\mathcal{D}_R(i)$ as the collection of supersets of $F$ with $i$ element. Hence, we can rewrite 
\begin{align*}
  &   \sum_{a \in F}\sum_{B \supset F} q(a,B) +  \sum_{a \notin F} \sum_{a \notin B \supset F} y(a,B) + \sum_{b \notin F } q(b, F \cup {b}) \\
=& \sum_{i=|F|+1}^{N} \sum_{B \in \mathcal{D}_R(i)} \bigg[ \sum_{a \in F} q(a,B) + \sum_{a\notin B} y(a,B) \bigg] + \sum_{B \in \mathcal{D}_R(|F|+1)} \sum_{a\notin F} q(a,B) \tag*{By rearrangement}\\
=& \sum_{i=|F|+2}^{N} \sum_{B \in \mathcal{D}_R(i)} \bigg[ \sum_{a \in F} q(a,B) + \sum_{a\notin B} y(a,B) \bigg] + \sum_{B \in \mathcal{D}_R(|F|+1)} \bigg[ \sum_{a \in B} q(a,B) + \sum_{a\notin B} y(a,B) \bigg] \tag*{By taking $i=|F|+1$ from the 1st term and summing it to the second term}\\
=& \sum_{i=|F|+2}^{N} \sum_{B \in \mathcal{D}_R(i)} \bigg[ \sum_{a \in F} q(a,B) + \sum_{a\notin B} y(a,B) \bigg] + \sum_{B \in \mathcal{D}_R(|F|+1)} \sum_{a \notin B } q(a, B \cup {a}) \tag*{By induction hypothesis}\\
=& \sum_{i=|F|+2}^{N} \sum_{B \in \mathcal{D}_R(i)} \bigg[ \sum_{a \in F} q(a,B) + \sum_{a\notin B} y(a,B) \bigg] + \sum_{B \in \mathcal{D}_R(|F|+2)} \sum_{a\notin F} q(a,B) \tag*{By rearrangement}\\
=& ...\text{(repetitively applying induction hypothesis)}\\
=& \sum_{i=N}^{N} \sum_{B \in \mathcal{D}_R(i)} \bigg[ \sum_{a \in F} q(a,B) + \sum_{a\notin B} y(a,B) \bigg] + \sum_{B \in \mathcal{D}_R(|N|)} \sum_{a\notin F} q(a,B) = \sum_{a \in X} q(a,X) = \sum_{a \in X} \rho(a,X) = 1
\end{align*} where the second-to-last equality is given by the definition of $q$. Hence, it is proven. \end{proof}

\begin{center}
    \subsubsection*{ \textbf{Proof of Theorem~\ref{Thm: F-RUM}}}
\end{center}

\begin{proof}
In this proof, we study the representation through the preference types, so that it connects better with the proof in Theorem 1 and displays the interchangeability of the primitives. Therefore, we consider a distribution, $\mu$, over preference preference types over $X\cup X^*$. 

For the necessity proof, we suppose the data follows the model. We introduce the following notation, for $b \in X$ and $A \subseteq X \setminus b$,    
\begin{align*}
    M_{y}(b,A)&:=\mu\Big(\{\succ| (X\setminus b) \cup  A^* =L_\succ(b)\}\Big)\\
    M_{q}(b,A\cup b)&:=\mu\Big(\{\succ| X \cup A^*=L_\succ(b^*)\}\Big)
\end{align*}where $L_\succ(a)$ is the strict lower contour set of $a$ according to $\succ$.

\begin{claim}\label{Claim: two properties of W neccessity}
For $b\in X$ and $A \subseteq X \setminus b$, \\
i) $M_y(b,A) = y(b,A)$,\\
ii) $M_q(b,A) = q(b,A\cup b)$
\end{claim}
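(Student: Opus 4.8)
The plan is to work entirely in the ordinal formulation introduced at the start of the Appendix, where each type is a linear order $\succ$ on $X \cup X^*$ with $x^* \succ x$, the induced FUM choice from frame $F$ is denoted $c_\succ(F)$, and (in this necessity direction) $\rho(a,F) = \mu(\{\succ : a = c_\succ(F)\})$. The engine of the proof is a single combinatorial identity — an incidence (``zeta'') relation between the choice probabilities and the level-set masses $M_y, M_q$ — which I then invert by Möbius inversion on the Boolean lattice of subsets of $X \setminus b$.

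First I would pin down, for a fixed type $\succ$, exactly which frames select $b$. For (i), suppose $L_\succ(b) = (X \setminus b) \cup A^*$ for some $A \subseteq X \setminus b$. Since $X \setminus b \subseteq L_\succ(b)$, the non-framed $b$ already beats every other non-framed alternative, so $b$ is chosen from a frame $F$ with $b \notin F$ precisely when it also beats every framed competitor $x^*$ with $x \in F$; this is exactly the requirement $F^* \subseteq L_\succ(b)$, i.e. $F \subseteq A$. Conversely, if $c_\succ(F) = b$ with $b \notin F$, then $b \succ x^*$ for all $x \in F$ and, via $x^* \succ x$, $b \succ y$ for all $y \in X \setminus b$, forcing $L_\succ(b) = (X \setminus b) \cup A^*$ for some $A \supseteq F$. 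Summing over types yields the zeta relation
$$\rho(b, F) = \sum_{A \,:\, F \subseteq A \subseteq X \setminus b} M_y(b, A).$$
The parallel analysis for (ii), applied to $b^*$ and frames $F \ni b$, shows that a type with $L_\succ(b^*) = X \cup A^*$ chooses $b$ from $F$ iff $F \setminus b \subseteq A$ (the non-framed alternatives lie below $b^*$ automatically since $X \subseteq L_\succ(b^*)$), giving $\rho(b, F) = \sum_{A \supseteq F \setminus b} M_q(b, A \cup b)$ with $A$ ranging over subsets of $X \setminus b$.

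Then I would apply Möbius inversion. On the Boolean lattice of subsets of $X \setminus b$, the relation $g(F) = \sum_{A \supseteq F} f(A)$ inverts to $f(A) = \sum_{F \supseteq A} (-1)^{|F \setminus A|} g(F)$. Taking $g = \rho(b,\cdot)$ and $f = M_y(b,\cdot)$ gives $M_y(b,A) = \sum_{A \subseteq F \subseteq X \setminus b} (-1)^{|F \setminus A|}\rho(b,F)$, which is exactly the definition of $y(b,A)$ (the sum over $b \notin B \supseteq A$). For (ii) I would re-index the frames by $C = F \setminus b$, so that $\rho(b, C \cup b) = \sum_{C \subseteq A \subseteq X \setminus b} M_q(b, A \cup b)$; inverting identically and observing that $B = C \cup b$ makes $|B \setminus (A \cup b)| = |C \setminus A|$ recovers $M_q(b, A\cup b) = \sum_{A \subseteq C \subseteq X\setminus b} (-1)^{|C \setminus A|}\rho(b, C \cup b) = q(b, A \cup b)$.

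I expect the main obstacle to be the bookkeeping in the first step: carefully verifying that, for a given type, the collection of frames selecting $b$ forms an up-set of subsets of $X \setminus b$ (resp. of the frames containing $b$), and that the framed elements lying below $b$ (resp. $b^*$) are precisely what enlarges the index $A$. Once this zeta relation is established, the Möbius step is routine, and the two cases are structurally identical up to replacing $b$ by $b^*$ and shifting the index set by $\{b\}$.
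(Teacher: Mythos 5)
Your proof is correct and takes essentially the same route as the paper's: the paper's strong ``stepping-down'' induction is exactly an unrolled M\"obius inversion of the same zeta relation $\rho(b,F)=\sum_{F\subseteq A\subseteq X\setminus b}M_y(b,A)$ (resp.\ its analogue for $M_q$), which the paper invokes implicitly in the steps labelled ``by Definition and induction hypothesis.'' Your explicit verification that, for a fixed type $\succ$, the frames from which $b$ is chosen form precisely the up-set $\{F : F\subseteq A\}$ indexed by the lower contour set of $b$ (resp.\ $b^*$) is the combinatorial content the paper leaves implicit, so nothing is missing.
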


\begin{proof}
We prove by strong induction by ``stepping down.'' For $A= X\setminus \{x\}$, we have, 
\begin{align*}
y(x,X\setminus \{x\})&= \rho(x, X\setminus \{x\})\\
& = \mu  \Big(\{\succ| (X\setminus \{x\}) \cup  (X\setminus \{x\})^* =L_\succ(x)\}\Big)\\
&= M_y(x,X\setminus \{x\})
\end{align*}
So, i) is true for size of $A$ equals to $|X|-1$. Suppose i) is true for size of  $k+1$, $k+2$,...,$|N|-1$. Let $|A|=k$,     \begin{align*}
y(x,A) & =\rho(x,A) - \sum\limits_{x\notin B \supset A} y(x,B)\\
&=\sum\limits_{x \notin B \supseteq A} M_y(x,B) - \sum\limits_{x\notin B \supset A} M_y(x,B)  \\
&=M_y(x,A) \end{align*} where the second-to-last equality is given by Definition and induction hypothesis. Hence, the proof is complete for i). For ii), we prove again by strong induction by ``stepping down.''  For $A= X\setminus \{x\}$, we have 
\begin{align*}
q(x,X)&= \rho(x, X)\\
&= \mu   \Big(\{\succ| X \cup  (X\setminus \{x\})^* =L_\succ(x)\}\Big)\\
&= M_q(x,X\setminus \{x\})\end{align*} So, ii) is true for size of $A$ equals to $|X|-1$. Suppose ii) is true for size of  $k+1$, $k+2$,...,$|N|-1$. Let $|A|=k$.
\begin{align*}
q(x,A) & =\rho(x,A) - \sum\limits_{x\notin B \supset A} q(x,B)\\
&=\sum\limits_{x \notin B \supseteq A} M_q(x,B) - \sum\limits_{x\notin B \supset A} M_q(x,B)  \\
&=M_q(x,A)\end{align*} where the second-to-last equality is given by Definition and induction hypothesis. Hence, the proof is complete for ii).
\end{proof}

From this claim, we immediately show   \nameref{Axiom: non-negativity of BM}, since $\mu$ is by definition positive.

For the sufficiency proof, for ease of notation, we introduce a new notation. We provide a recursive construction here, which requires us to construct the preference $\succ$. Therefore, we denote each $\succ$ with a sequence of alternatives (a string of text). For example, $x^* \succ x \succ y^*\succ y$ will be simply denoted by $x^*xy^*y$. For any $F^* \subseteq X^*$, we write $\Pi_{F^*}$ for the set of $|F^*|!$ permutations on $F$, with typical element $\pi_{F^*}$. We will note $\pi_{F^*}\sigma$ as the extended sequence for some $\sigma \in X\cup X^*$. we first construct $G(a^*):=q(a,X)$ for all $a^* \in X^*$. Then, we construct, recursively, 

\begin{align*}
    G(\pi_{F^*}\sigma)= \begin{dcases}
    G(\pi_{F^*}) \frac{y(z, X\setminus F)}{\sum_{ \pi_{F^*}' \in \Pi_{F^*}}  G(\pi_{F^*})} & \text{ if } \sigma = z \in F\\
    G(\pi_{F^*}) \frac{q(z, X\setminus F)}{\sum_{ \pi_{F^*}' \in \Pi_{F^*}}  G(\pi_{F^*})} & \text{ if } \sigma = z^* \in (X\setminus F)^*\\
    \end{dcases}
\end{align*}

Note that $G\geq0$ by \nameref{Axiom: non-negativity of BM}. We then prove the following properties for $G$.

\begin{claim}\label{Claim:Two properties of G} For $F \subseteq X$, we have,\\
i) $\sum_{ \pi_{F^*} \in \Pi_{F^*}} G(\pi_{F^*}\sigma) = y(x, X \setminus F)  $ for $\sigma =x \in F$, \\
ii)  $\sum_{ \pi_{F^*} \in \Pi_{F^*}} G(\pi_{F^*}\sigma) = q(x, X \setminus F)  $ for $\sigma = x^* \in (X\setminus F)^*$, \\
iii) $\sum_{ \pi_{F^*} \in \Pi_{F^*}}  G(\pi_{F^*}) = \sum_{x\in F} q(x,x\cup (X\setminus F)) $, \\
iv) $G(\pi_{F^*}) = \sum_{\sigma \in F \cup (X\setminus F)^*} G(\pi_{F^*}\sigma) $  for all $\pi_{F^*}$,
\end{claim}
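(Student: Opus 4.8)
The plan is to establish the four identities in the order (iii), then (i)--(ii), then (iv), since (iii) supplies the closed form for the normalizing constant on which the others rest. Throughout I would write $R := X \setminus F$ and abbreviate the normalizer by $N_F := \sum_{\pi'_{F^*} \in \Pi_{F^*}} G(\pi'_{F^*})$, so that the recursion reads $G(\pi_{F^*}\sigma) = G(\pi_{F^*})\, w(\sigma)/N_F$, where $w(z) = y(z, R)$ for $z \in F$ and $w(z^*) = q(z, R)$ for $z^* \in (X \setminus F)^*$. When $N_F = 0$ all the $G(\pi_{F^*})$ vanish by non-negativity, so every extension is set to zero by convention; I will check at the end that the claimed right-hand sides also vanish in that case, so the identities hold trivially there.

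First I would prove (iii) by induction on $|F|$, stepping the starred block up from the base case $G(a^*) = q(a,X)$, for which (iii) reads $N_{\{a\}} = q(a,X) = q(a, \{a\} \cup (X \setminus \{a\}))$. For the inductive step I split each permutation $\pi_{F^*}$ according to its last symbol $z^*$, writing it as $\pi_{(F \setminus z)^*} z^*$; the recursion gives $G(\pi_{(F \setminus z)^*} z^*) = G(\pi_{(F \setminus z)^*})\, q(z, R \cup \{z\})/N_{F \setminus z}$, and summing over $\pi_{(F \setminus z)^*}$ collapses the factor $N_{F \setminus z}$, leaving $q(z, R \cup \{z\})$. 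Summing over $z \in F$ yields $N_F = \sum_{z \in F} q(z, \{z\} \cup R)$, which is (iii). Claims (i) and (ii) then drop out using only the definition of $N_F$: in each case the appended symbol is held fixed, so the factor $w(\sigma)/N_F$ pulls out of the sum over $\pi_{F^*}$ and the remaining $\sum_{\pi_{F^*}} G(\pi_{F^*}) = N_F$ cancels the denominator, leaving $\sum_{\pi_{F^*}} G(\pi_{F^*} x) = y(x, R)$ and $\sum_{\pi_{F^*}} G(\pi_{F^*} x^*) = q(x, R)$ respectively.

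The crux is (iv). Summing the recursion over all admissible continuations $\sigma \in F \cup (X \setminus F)^*$ gives
\[
\sum_{\sigma} G(\pi_{F^*}\sigma) = \frac{G(\pi_{F^*})}{N_F}\left[\, \sum_{z \in F} y(z, R) + \sum_{z \in R} q(z, R) \,\right],
\]
so (iv) reduces to showing the bracketed sum equals $N_F$. By (iii) this amounts to the identity $\sum_{z \in F} q(z, \{z\} \cup R) = \sum_{z \in F} y(z, R) + \sum_{z \in R} q(z, R)$, which, recalling that $z \in F \iff z \notin R$, is exactly Lemma~\ref{Thm: Falmagne 3 general} applied to the set $R$. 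I expect this to be the main obstacle, in that it is the only step requiring input beyond unwinding the definition: the flow-conservation lemma is precisely what fuses the two families of Block--Marschak polynomials (the genuine $q$'s and the auxiliary $y$ leakages) into a single conserved quantity matching the normalizer. Finally, in the degenerate case $N_F = 0$, identity (iii) together with the lemma forces every $q(z, \{z\} \cup R)$, $y(z,R)$, and $q(z,R)$ to vanish by non-negativity, so both sides of all four statements are zero and the identities continue to hold.
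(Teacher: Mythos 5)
Your proposal is correct and follows essentially the same route as the paper: (i)--(ii) fall out of the recursion once the normalizer cancels, (iii) comes from decomposing each permutation by its last starred symbol (your induction is just the paper's appeal to (ii) one level down), and (iv) is obtained by summing the recursion and invoking Lemma~\ref{Thm: Falmagne 3 general} together with (iii). Your explicit handling of the degenerate case $N_F=0$ (setting extensions to zero and checking via the lemma and non-negativity that all right-hand sides vanish) is a point of extra care that the paper's proof silently glosses over, but it does not alter the argument.
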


\begin{proof}
Note that i) and ii) are straight forward. For iii), we have 
\begin{align*}
LHS= &\sum_{ \pi_{F^*} \in \Pi_{F^*}}  G(\pi_{F^*})  \\
&= \sum_{ z \in F} \sum_{ \pi_{(F\setminus z)^*} \in \Pi_{(F\setminus z)^*}}  G(\pi_{(F\setminus z)^*z^*})  \\
&= \sum_{ z \in F} q(z,x\cup(X\setminus F)) \\
&=RHS \text{ (By (ii))}
\end{align*}      For iv), we have  \begin{align*}
RHS & = \sum_{\sigma \in F \cup (X\setminus F)^*} G(\pi_{F^*}\sigma) \\
&  = \frac{G(\pi_{F^*})}{\sum_{ \pi_{F^*}' \in \Pi_{F^*}}  G(\pi'_{F^*})} (\sum_{z \in F} y(z,X\setminus F) + \sum_{z\in X\setminus F} q(z,X\setminus F))\\
& = \frac{G(\pi_{F^*})}{\sum_{ \pi_{F^*}' \in \Pi_{F^*}}  G(\pi'_{F^*})} (\sum_{ z \in F} q(z,z\cup(X\setminus F)) \text{ (By Lemma~\ref{Thm: Falmagne 3 general}}) \\         & = G(\pi_{F^*}) = LHS \text{ (By (iii))}    \end{align*}
\end{proof}

We then define each individual weight. We will consider permutation on $X\cup X^*$. We let $\pi$ denote each permutation in $X\cup X^*$. Note that we only consider permutation where $x^*$  comes before $x$. Then, we perform a ``truncation'' on $\pi$. Firstly, we truncate $\pi$ up to the first element in $X$ that appears in the sequence. For example, for $X=\{x,y,z\}$, we will truncate the following sequence in the following way. $$\underbrace{x^*y^*x}_{\pi^t}z^*zy$$ Here, since the utility type space of FRUM is huge and many types are inherently indistinguishable from other,  we mainly assign the weights on a group of types whose behaviors are the same. Then, we allocate the weight evenly across the group. We define,  for every $\pi_{F^*} \in \Pi_{F^*}$ and $x\in X$, for every $\pi$ such that $\pi^t=\pi_{F^*}x$, $$\hat{\mu}_{\pi}:=G(\pi_{F^*}x)*\frac{1}{|\{\pi: \pi^t=\pi_{F^*}x \}|}.$$ Firstly, note that $\hat{\mu}\geq 0$ due to the fact that $G\geq0$. Claim~\ref{Claim:Two properties of G}(iv) ensures that $\hat{\mu}$ is additive. In the following, we introduce the following notation. Let $L_\pi(\sigma)$ be the lower contour set of $\sigma$ according to $\pi$. We define  $\hat{M}_{y}(b,A):=\hat{\mu}\Big(\{\pi| (X\setminus b) \cup  A^* =L_\pi(b)\}\Big)$ and $\hat{M}_{q}(b,A\cup b):=\hat{\mu}\Big(\{\pi| X \cup A^*=L_\pi(b^*)\}\Big)$. Since the weights are constructed, we now check for the choice data. For $x \notin F$,  \begin{align*}
    \hat{\rho}(x,F) &= \sum_{ x\notin  B\supseteq F } \hat{M}_y(x,B) \text{ (by definition)}\\
    &= \sum_{ x\notin  B\supseteq F } \sum_{\pi_{(X \setminus B)^*} \in \Pi_{(X \setminus B)^*}} G(\pi_{(X \setminus B)^*}x)   \text{ (by construction)}\\
    &= \sum_{ x\notin  B\supseteq F } y(x,B) \text{ (By Claim~\ref{Claim:Two properties of G}(i))} \\
    &= \rho(x,F)
\end{align*}

On the other hand, for $x \in F$,  \begin{align*}
    \hat{\rho}(x,F) &= \sum_{ A\supseteq F } \hat{M}_q(x,A)  \text{ (by definition)}\\
    &= \sum_{ A\supseteq F }  \sum_{\pi_{(X \setminus A)^*} \in \Pi_{(X \setminus A)^*}} G(\pi_{(X \setminus A)^*}x) \text{ (by construction)}\\
    &= \sum_{ A\supseteq F } q(x,A) \text{ (By Claim~\ref{Claim:Two properties of G}(ii))}\\
    & = \rho(x,F)
\end{align*}
Hence, the constructed weights explain the data. Since it explains the data, we have  $\sum\limits_{\pi \in \Pi}  \hat{\mu}_{\pi}=1$. The sufficiency proof is complete.
\end{proof}

\begin{center}
     \subsubsection*{\textbf{Proof of Proposition~\ref{Thm: FRUM uniqueness}}}
\end{center}

\begin{proof}
It is proven in Claim~\ref{Claim: two properties of W neccessity}. To see this, using the primitive $\succ$, we defined before Claim~\ref{Claim: two properties of W neccessity} that for $a \in X$ and $A \subseteq X \setminus a$, $M_{y}(b,A):=\mu\Big(\{\succ| (X\setminus b) \cup  A^* =L_\succ(b)\}\Big)$ and $M_{q}(b,A\cup b):=\mu\Big(\{\succ| X \cup A^*=L_\succ(b^*)\}\Big)$ and proved in Claim~\ref{Claim: two properties of W neccessity} that 
For $a \in X$ and $A \subseteq X \setminus a$, i) $M_y(b,A) = y(b,A)$ and ii) $M_q(b,A) = q(b,A\cup b)$. Therefore, for Proposition~\ref{Thm: FRUM uniqueness}(i), we have $y_\rho(b,F)=M_y(b,F)=\mu(\{\succ|(X\setminus b) \cup F^* = L_\succ (b)\})$. In other words, it captures all the types $\succ$ where $b$ is better than all alternatives in $F^*$ (the framed version) and all alternatives in $X\setminus b$ (the regular version), and is worse than all alternatives in $(X\setminus F)^*$ (the framed version). Therefore, it gives  $y_\rho(b,F)=\mu^i\big(\{c \in \mathcal{C}_{\text{FUM}} | \ 
 b=c(F) \  \text{ and }  \   x = c(\{x\}) \ \text{ for all } x \notin F  \})$. Similar arguments can be made for Proposition~\ref{Thm: FRUM uniqueness}(ii) using Claim~\ref{Claim: two properties of W neccessity}(ii). \end{proof}

\begin{center}
    \subsubsection*{ \textbf{Proof of Theorem~\ref{Thm:F-Luce}}}
\end{center}

\begin{proof} The necessity proof is straightforward. In the following, we will adopt three IIA implied by \nameref{Axiom:Strong Luce-IIA}. Namely, framed Luce-IIA, Non-framed Luce-IIA and Cross-status Luce-IIA, which are discussed in the main text.

For the sufficiency, we will prove it with two different cases. The first case is that $|X|\geq 3$. we let $u(x):=\rho(x,\emptyset)$. We designate an anchoring element $z_0$ such that for every $x \in X\setminus z_0$, we let $l(x|F,z_0):=\frac{\rho(x,F)}{\rho(z_0,F)}\rho(z_0,\emptyset)$ for some $F$ such that $x \in F$ and $z_0 \notin F$. Firstly, note that the denominator is greater than 0 by the assumption that choice probability is positive.

\begin{claim}
    For $F\neq X$, $y,z\notin F$ and $x \in F$, $l(x|\{x\},z)=l(x|F,y)$.
\end{claim}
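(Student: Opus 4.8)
The claim asserts that the auxiliary weight $l(x|F,w)=\frac{\rho(x,F)}{\rho(w,F)}\rho(w,\emptyset)$ depends neither on the frame $F$ (as long as $x\in F$) nor on the anchor $w$ (as long as $w\notin F$); this well-definedness is what legitimizes using $l(x|F,z_0)$ as the value of the boosted weight regardless of which admissible $F$ was chosen. The plan is to establish this invariance through two independent ``one-coordinate'' moves—first changing the anchor, then changing the frame—each justified by a single invocation of \nameref{Axiom:Strong Luce-IIA}, and to chain them via the intermediate quantity $l(x|\{x\},y)$.

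First I would fix the frame at $\{x\}$ and change the anchor, proving $l(x|\{x\},z)=l(x|\{x\},y)$. Cancelling the common factor $\rho(x,\{x\})$, this reduces to $\frac{\rho(y,\{x\})}{\rho(y,\emptyset)}=\frac{\rho(z,\{x\})}{\rho(z,\emptyset)}$. This is precisely Strong Luce-IIA for the frame pair $(\{x\},\emptyset)$ evaluated at $y$ and $z$: since $\{x\}\Delta\emptyset=\{x\}$ and both $y\neq x$ and $z\neq x$ (because $y,z\notin F\ni x$), we have $y,z\notin\{x\}\Delta\emptyset$, so the hypotheses of the axiom hold.

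Second I would fix the anchor at $y$ and change the frame from $\{x\}$ to $F$, proving $l(x|\{x\},y)=l(x|F,y)$. Here the equality reduces to $\frac{\rho(x,\{x\})}{\rho(x,F)}=\frac{\rho(y,\{x\})}{\rho(y,F)}$, which is Strong Luce-IIA for the frame pair $(\{x\},F)$ evaluated at $x$ and $y$. The status check is the heart of the argument: $x\in\{x\}\cap F$ gives $x\notin\{x\}\Delta F$, and $y\notin\{x\}\cup F$ (using $y\neq x$ together with $y\notin F$) gives $y\notin\{x\}\Delta F$, so both $x$ and $y$ retain their framed/unframed status across the two frames and the axiom applies. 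Chaining the two equalities yields $l(x|\{x\},z)=l(x|\{x\},y)=l(x|F,y)$, as claimed.

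The substantive difficulty is not computational but consists entirely in verifying, at each step, that the two compared alternatives lie outside the symmetric difference of the relevant frame pair—exactly the condition licensing Strong Luce-IIA; the standing positivity assumption ensures every denominator is strictly positive, so the ratios are well-defined. Notably, \nameref{Axiom:F-Regularity} plays no role here: this consistency claim is a pure consequence of the IIA structure, and the same two-move template will recur whenever one needs anchored ratios to be frame- and anchor-independent.
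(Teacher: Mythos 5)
Your proof is correct and matches the paper's argument essentially step for step: your first move (changing the anchor at frame $\{x\}$) is the paper's invocation of Non-framed Luce-IIA between $\{x\}$ and $\emptyset$ at $y,z$, and your second move (changing the frame at anchor $y$) is the paper's Cross-status Luce-IIA between $\{x\}$ and $F$ at $x,y$, both being the special cases of \nameref{Axiom:Strong Luce-IIA} the paper uses. The only difference is presentational—you chain through the intermediate quantity $l(x|\{x\},y)$ where the paper substitutes one ratio identity into the other—so this is the same proof, cleanly reorganized.
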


\begin{proof}
    To see this, let $y \notin F$ , by Non-framed Luce-IIA, we have $\rho(y,\{x\})=\frac{\rho(z,\{x\})}{\rho(z,\emptyset)}\rho(y,\emptyset)$, which we substitute into $\frac{\rho(x,\{x\})}{\rho(x,F)}=\frac{\rho(y,\{x\})}{\rho(y,F)}$ (by Cross-status Luce-IIA). By simplifying, we get $l(x|F,z):=\frac{\rho(x,\{x\})}{\rho(z,\{x\})}\rho(z,\emptyset)=\frac{\rho(x,F)}{\rho(y,F)}\rho(y,\emptyset):=l(x|F,y)$.
\end{proof}

Hence, the definition of $l$ is independent of the anchoring element $z_0$. We will write $l(x)$ to replace $l(x|F,z_0)$ for every $x \notin z_0$. Also, we can define $l(z_0)$ which is any other arbitrary anchor.

\begin{claim}
1) For $x,y \notin F$, $\frac{\rho(x,F)}{\rho(y,F)}=\frac{u(x)}{u(y)}$. 2) For $x,y\in F$,  $\frac{\rho(x,F)}{\rho(y,F)}=\frac{l(x)}{l(y)}$.  3) For $x\in F$ and $y \notin F$,  $\frac{\rho(x,F)}{\rho(y,F)}=\frac{l(x)}{u(y)}$.
\end{claim}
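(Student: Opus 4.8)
The plan is to prove the three ratio identities by unwinding the definitions $u(\cdot)=\rho(\cdot,\emptyset)$ and $l(\cdot)$ together with the three specializations of \nameref{Axiom:Strong Luce-IIA} (framed, Non-framed, and Cross-status Luce-IIA), while relying on the preceding claim that $l$ is independent of the anchor and of the frame used to compute it. Two of the three parts turn out to be immediate, and only the case $F=X$ in part 2 requires a short reduction.

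For part 1, both $x$ and $y$ are non-framed in $F$ and in $\emptyset$, so $x,y\notin F\,\Delta\,\emptyset$. Non-framed Luce-IIA applied to the pair $(F,\emptyset)$ gives $\frac{\rho(x,F)}{\rho(x,\emptyset)}=\frac{\rho(y,F)}{\rho(y,\emptyset)}$, which rearranges to $\frac{\rho(x,F)}{\rho(y,F)}=\frac{u(x)}{u(y)}$. For part 3, since $x\in F$ and $y\notin F$, the element $y$ is itself an admissible anchor. By the well-definedness of $l$ I would evaluate $l(x)=l(x\mid F,y)=\frac{\rho(x,F)}{\rho(y,F)}\rho(y,\emptyset)=\frac{\rho(x,F)}{\rho(y,F)}\,u(y)$, and dividing through by $u(y)$ delivers $\frac{\rho(x,F)}{\rho(y,F)}=\frac{l(x)}{u(y)}$ at once.

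For part 2 I would split on whether $F=X$. If $F\neq X$, pick any $w\notin F$; because $x,y\in F$ both admit $w$ as a common anchor, the well-definedness of $l$ yields $l(x)=\frac{\rho(x,F)}{\rho(w,F)}\rho(w,\emptyset)$ and $l(y)=\frac{\rho(y,F)}{\rho(w,F)}\rho(w,\emptyset)$; taking the quotient cancels both $\rho(w,F)$ and $\rho(w,\emptyset)$ and leaves $\frac{\rho(x,F)}{\rho(y,F)}=\frac{l(x)}{l(y)}$. The case $F=X$ is the only genuinely separate one, since no anchor lies outside $X$ and $l(\cdot)$ cannot be read off the grand-frame probabilities directly. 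Here I would use $|X|\geq 3$ to pass to the proper frame $F'=\{x,y\}\neq X$: framed Luce-IIA applied to $(X,F')$, with $x,y\in X\cap F'$, gives $\frac{\rho(x,X)}{\rho(y,X)}=\frac{\rho(x,F')}{\rho(y,F')}$, and the already-proved $F'\neq X$ instance of part 2 rewrites the right-hand side as $\frac{l(x)}{l(y)}$.

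The main obstacle is precisely this $F=X$ subcase of part 2: because $l$ is constructed only through anchors strictly outside the frame, its value is never directly exhibited in $\rho(\cdot,X)$, so an intermediate proper frame must be inserted to recover it. The hypothesis $|X|\geq 3$ is exactly what guarantees such a bridging frame exists, and framed Luce-IIA is what transports the ratio across it; once this is in place, everything else reduces to substituting the definitions of $u$ and $l$.
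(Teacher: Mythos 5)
Your proposal is correct and follows essentially the same route as the paper's proof: part 1 via Non-framed Luce-IIA against the empty frame, part 2 for $F\neq X$ via a common anchor outside $F$ (relying on the anchor/frame-independence of $l$), the $F=X$ case by bridging through $\{x,y\}$ with framed Luce-IIA (using $|X|\geq 3$), and part 3 by taking $y$ itself as the anchor. The only difference is presentational—the paper writes the same cancellations as multiplications by $1$—so there is nothing substantive to add.
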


\begin{proof}
To see 1) holds, note that Non-framed Luce-IIA implies $\frac{\rho(x,F)}{\rho(y,F)}=\frac{\rho(x,\emptyset)}{\rho(y,\emptyset)}=\frac{u(x)}{u(y)}$. To see 2) holds, note that for $F\neq X$ and some $z \notin F$,  $\frac{\rho(x,F)}{\rho(y,F)}=\frac{\rho(x,F)}{\rho(y,F)}\frac{\rho(z,\emptyset)}{\rho(z,\emptyset)}\frac{\rho(z,F)}{\rho(z,F)}=\frac{l(x)}{l(y)}$. For $F=X$, we use framed Luce-IIA  so that $\frac{\rho(x,X)}{\rho(y,X)}=\frac{\rho(x,\{x,y\})}{\rho(y,\{x,y\})}=\frac{\rho(x,\{x,y\})}{\rho(y,\{x,y\})}\frac{\rho(z,\emptyset)}{\rho(z,\emptyset)}\frac{\rho(z,\{x,y\})}{\rho(z,\{x,y\})}=\frac{l(x)}{l(y)}$ for some $z\notin \{x,y\}$.  To see 3) holds, $\frac{\rho(x,F)}{\rho(y,F)}=\frac{\rho(x,F)}{\rho(y,F)}\frac{\rho(y,\emptyset)}{\rho(y,\emptyset)}=\frac{l(x)}{u(y)}$. The proof is complete.
\end{proof}

Lastly,  for $x \in F$, taking summation over all choice ratio of every elements in $X$ with $\rho(x,F)$ being at the denominator, we have  $\sum_{y \in X}\frac{\rho(y,F)}{\rho(x,F)}=\frac{\sum_{y \in F} l(x) + \sum_{y \in F^c} u(x)}{l(x)} \iff \rho(x,F)= \frac{l(x)}{\sum_{y \in F} l(x) + \sum_{y \in F^c} u(x)}$.  One can apply the same argument analogously to the case that $x \notin F$. Then, we define $v(x):=l(x)-u(x)$. The representations hold if $l(x) \geq u(x)$ for every $x \in X$. To see this, 
We check $\rho(y,\{x\})$ and $\rho(y,\emptyset)$ for some $y \neq x$. It is easy to show that $l(x)\geq u(x)$ if and only if  $\rho(y,\{x\})=\frac{u(y)}{\sum_{z\in X\setminus x} u(z) + l(x)} \geq \frac{u(y)}{\sum_{z\in X\setminus x} u(z) + u(x)}=\rho(y,\emptyset)$
which is given by \nameref{Axiom:F-Regularity}. Hence, the proof is complete. \end{proof}

\begin{center}
    \subsubsection*{ \textbf{Proof of \Cref{Prop: F-Luce is F-RUM}}}
\end{center}

\begin{proof}
It suffices to show that it satisfies \nameref{Axiom: non-negativity of BM}. We first state the following fact, due to the well-known fact that any Luce is RUM.

\begin{fact}
Let $k_i \geq 0$ for $i=1,...,N$, and we denote $\mathcal{N}=\{1,..,N\}$ Then, for any $K>0$, we have
$\sum_{A \subseteq \mathcal{N}} (-1)^{|A|} \frac{1}{K+\sum_{i \in A}k_i} \geq 0$
\end{fact}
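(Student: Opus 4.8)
The plan is to prove the \textbf{Fact} directly by an integral (Laplace transform) representation, which converts the alternating sum into a manifestly non-negative quantity. First I would use, for each $A \subseteq \mathcal{N}$, the elementary identity $\frac{1}{K+\sum_{i\in A}k_i}=\int_0^\infty e^{-(K+\sum_{i\in A}k_i)t}\,\diff t$, valid because $K>0$ and $k_i\geq 0$ make the exponent strictly negative. Substituting this into the left-hand side of the Fact and interchanging the finite sum with the integral (no convergence issue arises, since the sum is finite and each summand is integrable on $[0,\infty)$) gives
\begin{equation*}
\sum_{A\subseteq\mathcal{N}}(-1)^{|A|}\frac{1}{K+\sum_{i\in A}k_i}=\int_0^\infty e^{-Kt}\Bigg(\sum_{A\subseteq\mathcal{N}}(-1)^{|A|}\prod_{i\in A}e^{-k_i t}\Bigg)\,\diff t.
\end{equation*}

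The key step is to recognize that the inner alternating sum factorizes via the standard identity $\sum_{A\subseteq\mathcal{N}}(-1)^{|A|}\prod_{i\in A}x_i=\prod_{i=1}^N(1-x_i)$, applied with $x_i=e^{-k_i t}$. This yields
\begin{equation*}
\sum_{A\subseteq\mathcal{N}}(-1)^{|A|}\frac{1}{K+\sum_{i\in A}k_i}=\int_0^\infty e^{-Kt}\prod_{i=1}^N\bigl(1-e^{-k_i t}\bigr)\,\diff t.
\end{equation*}
Since $k_i\geq 0$ forces $e^{-k_i t}\leq 1$ for every $t\geq 0$, each factor $1-e^{-k_i t}$ is non-negative, and $e^{-Kt}>0$; hence the integrand is non-negative and the integral is $\geq 0$, which proves the Fact.

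The alternative route, matching the motivation quoted in the text (``any Luce is RUM''), is to read the left-hand side as a Block-Marschak polynomial. Introduce a ground set $\{0\}\cup\mathcal{N}$ with Luce weights $w(0)=K$ and $w(i)=k_i$, and consider the choice probability of the anchor $0$: for $T=\{0\}\cup A$ one has $\pi(0,T)=K/(K+\sum_{i\in A}k_i)$, so the polynomial $q(0,\{0\})=\sum_{A\subseteq\mathcal{N}}(-1)^{|A|}\pi(0,\{0\}\cup A)$ equals $K$ times the sum in the Fact. Because every Luce rule is a RUM and RUM has non-negative Block-Marschak polynomials, $q(0,\{0\})\geq 0$, and dividing by $K>0$ closes the argument. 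I expect the only genuine subtlety here to be the boundary case where some $k_i=0$: the Luce model formally requires strictly positive weights, so one must either invoke non-negativity on the closure or perturb each $k_i$ to $k_i+\varepsilon$, apply the strict-weight result, and let $\varepsilon\downarrow 0$ using continuity of the (finite) sum in the $k_i$. The integral proof above sidesteps this subtlety entirely, which is why I would present it as the primary argument.
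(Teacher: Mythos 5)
Both of your arguments are correct, and it is worth noting that the paper itself offers no written proof of this Fact: it simply asserts it ``due to the well-known fact that any Luce is RUM,'' which is precisely your alternative route made explicit --- adjoin an anchor alternative $0$ with Luce weight $K$, give each $i \in \mathcal{N}$ weight $k_i$, recognize the sum as $q(0,\{0\})/K$ for the resulting Luce rule, and invoke non-negativity of Block-Marschak polynomials for RUM. You are in fact more careful than the paper on this route, since the boundary case $k_i = 0$ genuinely arises in the paper's application (the Fact is applied with $k_i = v(x_i)$, and $v \geq 0$ is allowed to vanish), so your perturbation-and-continuity step is needed to make the citation airtight. Your primary argument is a genuinely different and more elementary route: writing $\frac{1}{K+\sum_{i\in A}k_i}=\int_0^\infty e^{-(K+\sum_{i\in A}k_i)t}\,\diff t$, exchanging the finite sum with the integral, and using the factorization $\sum_{A\subseteq\mathcal{N}}(-1)^{|A|}\prod_{i\in A}e^{-k_i t}=\prod_{i=1}^N\bigl(1-e^{-k_i t}\bigr)$ exhibits the left-hand side as $\int_0^\infty e^{-Kt}\prod_{i=1}^N\bigl(1-e^{-k_i t}\bigr)\,\diff t$, a manifestly non-negative integral. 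This proof is self-contained (no appeal to Falmagne's theorem or to the Luce-to-RUM embedding), handles zero weights with no extra argument, and is checkable in a few lines. What the paper's implicit route buys instead is conceptual economy within its own framework: the Fact is exactly a Block-Marschak non-negativity statement for an auxiliary Luce rule, so citing Luce-is-RUM keeps the logic inside machinery the paper already develops. Either argument closes the gap; as a written proof, your integral representation is the stronger choice, and presenting it as primary is the right call.
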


We first consider $y(x,F)$.  $y(x,F) = \sum_{x \notin B \supseteq F} (-1)^{|B\setminus F|} \rho(x,B)
           $ $= \sum_{x \notin B \supseteq F} (-1)^{|B\setminus F|} \frac{u(x)}{u(X)+v(B)}
           $ $= u(x)\sum_{x \notin B \supseteq F} (-1)^{|B\setminus F|} \frac{1}{u(X)+v(F)+v(B\setminus F)}
           $ $= u(x)\sum_{A \subseteq X\setminus(F\cup x)} (-1)^{|A|} \frac{1}{u(X)+v(F)+v(A)}$. We use the above fact by letting $u(X)+v(F)=K>0$ with $v(x)\geq 0$. Then, it is immediate that $y(x,F) \geq 0$. Analogously, one can also see that $q(x,F) \geq 0$.
\end{proof}

\begin{center}
   \subsubsection*{  \textbf{Proof of Proposition~\ref{Thm: F-Luce Uniqueness}}}
\end{center}


\begin{proof}
Suppose that $(u_1,v_1)$ and $(u_2,v_2)$ represent the same choice rule. Then, by definition, for every $x \in X$, $\frac{u_1(x)}{u_1(X)} = \rho(x,\emptyset)=\frac{u_2(x)}{u_2(X)}$. Therefore, we get $u_1(x)=\frac{u_1(X)}{u_2(X)}u_2(x)$, where $\frac{u_1(X)}{u_2(X)}>0$ and is independent of $x$. Also, for every $x \in X$, we have $\frac{u_1(z)}{u_1(X)+v_1(x)}=\rho(z,\{x\})=\frac{u_2(z)}{u_2(X)+v_2(x)}$. Hence, by putting $u_1=\alpha u_2$, we have $\frac{\alpha u_2(z)}{\alpha u_2(X)+ v_1(x)}= \frac{u_2(z)}{u_2(X)+v_2(x)}$. Then, we get $v_1(x)=\alpha v_2(x)$.\end{proof}

\begin{center}
    \subsubsection*{ \textbf{Proof of Proposition~\ref{Prop: F-Luce Identification singleton}}}
\end{center}

\begin{proof}
    To derive $v(x)$, note that we have $\rho(x,x)=\frac{v(x)}{v(x)+u(X\setminus x)}$. By putting in $\rho(x,X\setminus x) =u(X\setminus x)$ and re-arrangement, one can get $v(x):=\rho(x,\{x\})\frac{1-\rho(x,\emptyset)}{1-\rho(x,\{x\})}$. To see that it explains choice data. Note that $v(x)$ is derived from $\rho(x,x)$, it is necessarily true that $(u,v)$ explains $\rho(x,x)$. For $y\neq x$, note that
    \begin{align*}
        \rho(y,x)&=\frac{\rho(y,\emptyset)}{\rho(x,x)\frac{1-\rho(x,\emptyset)}{1-\rho(x,x)}+\rho(y,\emptyset)+\rho(X\setminus\{x,y\},\emptyset)}\\
        &=\frac{\rho(y,\emptyset)}{\frac{1-\rho(x,\emptyset)}{1-\rho(x,x)}}\\
        &=(\rho(y,x)+\rho(X\setminus\{x,y\},x))\frac{\rho(y,\emptyset)}{\rho(y,\emptyset)+\rho(X\setminus\{x,y\},\emptyset)}\\
        &=(\rho(y,x)+\rho(X\setminus\{x,y\},x))\frac{\rho(y,x)}{\rho(y,x)+\rho(X\setminus\{x,y\},x)}=\rho(y,x)
    \end{align*}
where the second-to-last step is given by \nameref{Axiom:Strong Luce-IIA}.
\end{proof}

\begin{center}
    \subsubsection*{ \textbf{Proof of Proposition~\ref{Prop: F-Luce Identification}}}
\end{center}
\begin{proof}
    The proof is analogous to the last proposition.
\end{proof}

\newpage

\bibliographystyle{econ-a}
\bibliography{FrameReferences}

\end{document}